\documentclass[journal]{IEEEtranNew}
\usepackage[ top=0.7in, bottom=0.72in, left=0.65in, right=0.65in]{geometry}
\usepackage{graphicx}
\usepackage{psfrag}
\usepackage{pstool}
\usepackage{textcomp}
\usepackage{cite}
\usepackage{amssymb}
\usepackage{amsopn}
\usepackage{mathtools}

\usepackage{color}
\usepackage{hyperref}
\usepackage{url}
\usepackage{modamsthm}
\usepackage{aliascnt}
\usepackage{cancel}
\usepackage{multirow}
\usepackage[font=small]{caption}
\usepackage{subcaption}
\usepackage{flushend}
\makeatletter 
\renewcommand\subsubsection{\@startsection{subsubsection}{3}{\z@}%
                       {-18\p@ \@plus -4\p@ \@minus -4\p@}%
                       {4\p@ \@plus 2\p@ \@minus 2\p@}%
                       {\normalfont\normalsize\it\unboldmath
                        \rightskip=\z@ \@plus 8em\pretolerance=10000 }}

\newcommand*{\shifttext}[2]{%
  \settowidth{\@tempdima}{#2}%
  \makebox[\@tempdima]{\hspace*{#1}#2}%
}
\makeatother

\newaliascnt{thmcounter}{section}
\newaliascnt{defcounter}{section}
\newaliascnt{egcounter}{section}
\newaliascnt{conjcounter}{section}

\newtheorem{theorem}{Theorem}[thmcounter]
\newtheorem{lemma}{Lemma}[thmcounter]

\newtheorem{corollary}{Corollary}[thmcounter]

\newtheorem{definition}{Definition}[defcounter]

\usepackage{latexsym}
\usepackage{slashed}
\usepackage{centernot}

\newcommand{\bt}{\mathbf} 
\newcommand{\ms}{\hspace{0.5mm}}
\allowdisplaybreaks

\captionsetup[table]{justification=centering}

\addtolength{\arraycolsep}{-1mm}
\begin{document}

\title{Degrees of Freedom of MIMO Cellular Networks: Decomposition and Linear Beamforming Design}

\author{Gokul~Sridharan
        and~Wei~Yu
\thanks{The authors are with the The Edward S. Rogers Sr. Department of Electrical and Computer Engineering, University of Toronto, Toronto, ON M5S4G4, Canada e-mail: (gsridharan@comm.utoronto.ca, weiyu@comm.utoronto.ca).}
\thanks{This work was supported by the Natural Science and Engineering Research
Council (NSERC) of Canada. The material in this paper has been presented
in part at Canadian Workshop Inf. Theory, Jun. 2013, IEEE Int. Symp. Inf. Theory, Jul. 2013, and IEEE Global Commun. Conf., Dec. 2013. Manuscript submitted to IEEE Transactions on Information Theory on December 9, 2013.}
}

\maketitle

\markboth{}%
{}

\begin{abstract}
This paper investigates the symmetric degrees of freedom (DoF) of multiple-input multiple-output (MIMO) cellular networks with $G$ cells and $K$ users per cell, having $N$ antennas at each base station (BS) and $M$ antennas at each user. In particular, we investigate achievability techniques based on either decomposition with asymptotic interference alignment or linear beamforming schemes, and show that there are distinct regimes of $(G,K,M,N)$ where one outperforms the other. We first note that both one-sided and two-sided decomposition with asymptotic interference alignment achieve the same degrees of freedom. We then establish specific antenna configurations under which the DoF achieved using decomposition based schemes is optimal by deriving a set of outer bounds on the symmetric DoF. Using these results we completely characterize the optimal DoF of any $G$-cell network with each user having a single antenna. For linear beamforming schemes, we first focus on small networks and propose a structured approach to linear beamforming based on a notion called packing ratios. Packing ratio describes the interference footprint or shadow cast by a set of transmit beamformers and enables us to identify the underlying structures for aligning interference. Such a structured beamforming design can be shown to achieve the optimal spatially normalized DoF (sDoF) of two-cell two-user/cell network and the two-cell three-user/cell network. For larger networks, we develop an unstructured approach to linear interference alignment, where transmit beamformers are designed to satisfy conditions for interference alignment without explicitly identifying the underlying structures for interference alignment. The main numerical insight of this paper is that such an approach appears to be capable of achieving the optimal sDoF for MIMO cellular networks in regimes where linear beamforming dominates asymptotic decomposition, and a significant portion of sDoF elsewhere. Remarkably, polynomial identity test appears to play a key role in identifying the boundary of the achievable sDoF region in the former case.
\end{abstract}

\IEEEpeerreviewmaketitle

\section{Introduction}
\label{section_intro}



Cellular networks are fundamentally limited by inter-cell interference. In this context, degrees of freedom (DoF) has emerged as a useful yet tractable metric in quantifying the extent to which interference can be mitigated through transmit optimization in time/frequency/spatial domains. In this work we study the DoF of multiple-input multiple-output (MIMO) cellular networks with $G$ cells and $K$ users/cell having $N$ antennas at each base station (BS) and $M$ antennas at each user---denoted in this paper as a $(G,K,M,N)$ network.

The study of DoF starts with the work on the two-user MIMO interference channel \cite{fakhereddin}. In \cite{maddah,jafarshamai}, the authors investigate the DoF of the $2\times 2$ $X$ network for which 
\textit{linear beamforming} based interference alignment is used to establish the optimal DoF.
This is followed by the landmark paper of \cite{cadambejafar}, where it is shown that the $K$-user single-input single-output (SISO) interference channel has $K/2$ DoF. The crucial contribution of \cite{cadambejafar} is an \textit{asymptotic scheme} for interference alignment over multiple symbol extensions in time or frequency for establishing the optimal DoF. This scheme requires channels to be time/frequency varying and crucially relies on the commutativity of diagonal channel matrices obtained from symbol extensions in time or frequency. The asymptotic scheme has been extended to MIMO cellular networks \cite{suh2} and MIMO $X$ networks \cite{cadambeXchannel}. We note that instead of relying on infinite symbol extensions over time or frequency varying channels, a signal space alignment scheme based on rational dimensions developed in \cite{motahari} achieves the same DoF using the scheme in \cite{cadambejafar}, but over constant channels. Since these early results, the asymptotic schemes of \cite{cadambejafar,motahari} and the linear beamforming schemes have emerged as the leading techniques for establishing the optimal DoF of various networks. 



In this work, we study the DoF achieved using the asymptotic schemes of \cite{cadambejafar,motahari} and the linear beamforming schemes along with conditions for their optimality in the context of MIMO cellular networks. Optimizing either scheme for general MIMO cellular networks is not straightforward. While the asymptotic schemes require the multi-antenna nodes in a MIMO network to be decomposed into independent single-antenna nodes, linear beamforming schemes require significant customization for each MIMO cellular network. This paper is motivated by the work of \cite{wangjafarisit2012}, which shows that for the $K$-user MIMO interference channel the two techniques have distinct regimes where one outperforms the other and that both play a critical role in establishing the optimal DoF. We observe that the same insight also applies to MIMO cellular networks, but the characterization of the optimal DoF is more complicated because of the presence of multiple users per cell. This paper makes progress on this front by studying the optimality of decomposition based schemes for a general $(G,K,M,N)$ network, and by developing two contrasting approaches to linear beamforming that emerge from two different perspectives on interference alignment. In a parallel and independent investigation, Liu and Yang \cite{liu-yang-arxiv} develop a new set of outer bounds on the DoF of MIMO cellular networks and a structured approach to characterize the optimal DoF under linear beamforming. While some of the results of this paper overlap with that of \cite{liu-yang-arxiv}, the approach taken in this paper for establishing these results is considerably different, and in some cases conceptually simpler than that of \cite{liu-yang-arxiv}.

\subsection{Literature Review}
\subsubsection{Decomposition Based Schemes }
The asymptotic scheme developed in \cite{cadambejafar} for the SISO $K$-user interference channel can be extended to other MIMO networks, including the $X$-network \cite{cadambeXchannel,huasunjafarisit2012}, and cellular networks \cite{suh,suh2} having the same number of antennas at each node. Since the original scheme in \cite{cadambejafar} relies on the commutativity of channel matrices, applying this scheme to MIMO networks requires decomposing multi-antenna nodes into multiple single-antenna nodes. Two-sided decomposition involves decomposing both transmitters and receivers into single-antenna nodes, while one-sided decomposition involves decomposing either the transmitters or the receivers. Once a network has been decomposed, the scheme in \cite{cadambejafar} can be applied to get an inner bound on the DoF of the original network.

Two-sided decomposition is first used to prove that the $K$-user interference channel with $M$ antennas at each node has $KM/2$ DoF \cite{cadambejafar}. This shows that the network is two-side decomposable, i.e., no DoF are lost by decomposing multi-antenna nodes into single antenna nodes. Two-sided decomposition is also known to achieve the optimal DoF of MIMO cellular networks with the same number of antennas at each node \cite{suh2}. In particular, it is shown that a $(G,K,N,N)$ network has $KN/(K+1)$ DoF/cell. However, for $X$-networks with $A$ transmitters and $B$ receivers having $N$ antennas at each node, two-sided decomposition is shown to be suboptimal and that one-sided decomposition achieves the optimal DoF of $ABN/(A+B-1)$ \cite{huasunjafarisit2012}. In \cite{gou,ghasemi}, the DoF of the $K$-user interference channel with $M$ antennas at the transmitters and $N$ antennas at the receivers is studied and the optimal DoF is established for some $M$ and $N$ (e.g., when $M$ and $N$ are 
such that $\frac{\max (M,N)}{\min (M,N)}$ is an integer) using the rational dimensions framework developed in \cite{motahari}. In \cite{wangjafarisit2012}, it is shown that decomposition based schemes achieve the optimal DoF of the $K$-user interference channel whenever $\tfrac{K-2}{K^2-3K+1} \leq \tfrac{M}{N} \leq 1$ for $K \geq 4$.  

\subsubsection{Linear Beamforming }
Linear beamforming techniques that do not require decomposition of multi-antenna nodes play a crucial role in establishing the optimal DoF of MIMO networks with different number of antennas at the transmitters and receivers. In particular, the work of Wang et al. \cite{chenweiwang} highlights the importance of linear beamforming techniques in achieving the optimal DoF of the MIMO three-user interference channel. In \cite{chenweiwang}, the achievability of the optimal DoF is established through a linear beamforming technique based on a notion called subspace alignment chains. A more detailed characterization of the DoF of the MIMO $K$-user interference channel is provided in \cite{wangjafarisit2012} where antenna configuration (values of $M$ and $N$) is shown to play an important role in determining whether the asymptotic schemes or linear beamforming schemes achieve the optimal DoF.

The study of the design and feasibility of linear beamforming for interference alignment without symbol extensions has received significant attention \cite{yetis,razaviyayn,tingtingliu,moewin,oscargonzalez,zhuangfeasibility}.
Designing transmit and receive beamformers for linear interference alignment is equivalent to solving a system of bilinear equations and a widely used necessary condition to check for the feasibility of linear interference alignment is to verify if the total number of variables exceeds the total number of constraints in the system of equations. If a system has more number of variables than constraints, it is called a proper system. Otherwise, it is called an improper system \cite{yetis}. In particular, when $d$ DoF/user are desired in a $(G,K,M,N)$ network, the system is said to be proper if $M+N\geq(GK+1)d$ and improper otherwise \cite{zhuangfeasibility}. While it is known that almost all improper systems are infeasible \cite{razaviyayn,tingtingliu}, feasibility of proper systems is still an area of active research. In \cite{razaviyayn,tingtingliu,moewin} a set of sufficient conditions for feasibility are established, while numerical tests to check for feasibility are provided in \cite{oscargonzalez}. 

While the optimality of linear beamforming for the $K$-user MIMO interference channel has been well studied, the role of linear beamforming in MIMO cellular networks having different number of antennas at the transmitters and receivers has not received significant attention. Partial characterization of the optimal DoF achieved using linear beamforming for two-cell networks are available in \cite{parklee,shin,ayoughi,leek}, while \cite{love} establishes a set of outer bounds on the DoF for the general $(G,K,M,N)$ network. Linear beamforming techniques to satisfy the conditions for interference alignment without symbol extensions are presented in \cite{love,leek,tingtingliuWCNC,yanjunma}.    

Characterizing linear beamforming strategies that achieve the optimal DoF for larger networks is challenging primarily because multiple subspaces can interact and overlap in complicated ways. Thus far in the literature, identifying the underlying structure of interference alignment for each given network  (e.g. subspace alignment chains for the three-user MIMO interference channel) has been a prerequisite for (a) developing counting arguments that expose the limitations of linear beamforming strategies, and (b) developing DoF optimal linear beamforming strategies. Concurrent to this work, significant recent progress has been made in \cite{liu-yang-arxiv} on characterizing the DoF of MIMO cellular networks. By identifying a genie chain structure, the optimality of linear beamforming is established for certain regimes of antenna configuration. In contrast to \cite{liu-yang-arxiv}, the current paper on one hand establishes a simpler structure called packing ratios for smaller networks, yet on the other hand, through numerical observation, establishes that even an unstructured approach can achieve the optimal DoF for a wide range of MIMO cellular networks, thus significantly alleviating the challenge in identifying structures in DoF-optimal beamformer design for larger networks.

\subsection{Main Contributions}
This paper aims to understand the DoF of MIMO cellular networks using both decomposition based schemes and linear beamforming. On the use of decomposition, we first note that both the asymptotic scheme of \cite{gou} for the MIMO interference channel and the asymptotic scheme of \cite{cadambeXchannel} for the $X$-network can be applied to MIMO cellular networks. Extending the scheme in \cite{gou} to MIMO cellular networks requires one-sided decomposition on the user side (multi-antenna users are decomposed to multiple single antenna users), while extending the scheme in \cite{cadambeXchannel} requires two-sided decomposition. More importantly, both approaches achieve the same degrees of freedom. In this paper, we develop a set of outer bounds on the DoF of MIMO cellular networks and use these bounds to establish conditions under which decomposition based approaches are optimal. The outer bounds that we develop are based on an outer bound for MIMO $X$-networks established in \cite{cadambeXchannel}. In particular we establish that for any $(G,K,M,N)$ network, $\max \big (\tfrac{M}{K\eta+1},\ \tfrac{N\eta}{K\eta+1} \big )$ is an outer bound on the DoF/user, where $\eta \in \left \{\frac{p}{q}: p \in \{ 1,2,\hdots , G-1\}, q \in \{ 1,2,\hdots , (G-p)K \} \right \}$.


In order to study linear beamforming strategies for MIMO cellular networks, similar in spirit to \cite{chenweiwang}, we allow for spatial extensions of a
given network and study the spatially-normalized DoF (sDoF). Spatial
extensions are analogous to time/frequency extensions where spatial
dimensions are added to the system through addition of antennas at the
transmitters and receivers. Unlike time or frequency extensions where
the resulting channels are block diagonal, spatial extensions assume generic channels with no additional structure---making them significantly easier to study without the peculiarities
associated with additional structure. Using the notion of sDoF, we first develop a structured approach to linear beamforming that is particularly useful in two-cell MIMO cellular networks. We then focus on an unstructured approach to linear beamforming that can be applied to a broad class of MIMO cellular networks.

\textit{ Structured approach to linear beamforming}: This paper develops linear beamforming strategies that achieve the optimal sDoF of two-cell MIMO cellular networks with two or three users per cell. We characterize the optimal sDoF/user for all values of $M$ and $N$ and show that the optimal sDoF is a piecewise-linear function, with  either $M$ or $N$ being the bottleneck. We introduce the notion of \textit{packing ratio} that describes the interference footprint or shadow cast by a set of uplink transmit beamformers and exposes the underlying structure of interference alignment. Specifically, the packing ratio of a given set of beamformers is the ratio between the number of beamformers in the set and the number of dimensions these beamformers occupy at an interfering base-station (BS).

Packing ratios are useful in determining the extent to which interference can be aligned at an interfering BS. For example, for the two-cell, three-user/cell MIMO cellular network, when $M/N \leq 2/3$, the best possible packing ratio is $2\!:\!1$, i.e., a set of two beamformers corresponding to two users aligns onto a single dimension at the interfering BS. This suggests that if we have sufficiently many such sets of beamformers, no more than $2N/3$ DoF/cell are possible. This in fact turns out to be a tight upper bound whenever $\tfrac{5}{9} \leq \tfrac{M}{N} \leq \tfrac{2}{3}$. Through the notion of packing ratios, it is easier to visualize the achievability of the optimal sDoF using linear beamforming and the exact cause for the alternating behavior of the optimal sDoF where either $M$ or $N$ is the bottleneck becomes apparent. In particular, we establish the sDoF of two-cell networks with two or three users/cell.

\textit{Unstructured approach to linear beamforming:} In order to circumvent the bottleneck of identifying the underlying structure of interference alignment and to establish results for a broad set of networks, this paper proposes a structure agnostic approach to designing linear beamformers for interference alignment. In such an approach, depending on the DoF demand placed on a given MIMO cellular network, we first identify the total number of dimensions that are available for interference at each BS. We then design transmit beamformers in the uplink by first constructing a requisite number of random linear vector equations that the interfering data streams at each BS are required to satisfy so as to not exceed the limit on the total number of dimensions occupied by interference. This system of linear equations is then solved to obtain a set of aligned transmit beamformers. 

The crucial element in such an approach is the fact that we construct linear vector equations with random coefficients. This is a significant departure from typical approaches to constructing aligned beamformers where the linear equations that identify the alignment conditions emerge from notions such as subspace alignment chains or packing ratios and are predefined with deterministic coefficients. The flexibility in choosing random coefficients allows us to use this technique for interference alignment in networks of any size, without having to explicitly infer the underlying structure.

Such an approach is also discussed in a limited context in \cite{tingtingliuWCNC} where it is used to design aligned transmit beamformers when only 1 DoF/user is desired. We significantly expand the scope of such an approach by proposing the use of a polynomial identity test to resolve certain linear independence conditions that need to be satisfied when more than 1 DoF/user are desired. In our work we  outline the key steps to designing aligned transmit beamformers using this approach and take a closer look at the DoFs that can be achieved. We then proceed to numerically examine the optimality of the DoF achieved through such a scheme. Numerical evidence suggests that for any given $(G,K,M,N)$ network, the unstructured approach to linear beamforming achieves the optimal sDoF whenever $M$ and $N$ are such that the decomposition inner bound $\big (\tfrac{MN}{KM+N} \big)$ lies below the proper-improper boundary $\big (\tfrac{M+N}{GK+1} \big)$. Remarkably, the polynomial identity test plays a key role in identifying the optimal sDoF in this regime. 

\subsection{Paper Organization}
The presentation in this paper is divided into two main parts. The first part, presented in Section \ref{allAboutDecomp},
 discusses the achievable DoF using decomposition based approaches, establishes outer bounds on the DoF of MIMO cellular networks, and identifies the conditions under which such an approach is DoF optimal. In the second part, we present a structured and an unstructured approach to linear beamforming design for MIMO cellular networks. In particular, in Section \ref{SAP}, we establish the optimal sDoF of the two-cell MIMO network with two or three users per cell through a linear beamforming strategy based on packing ratios. Section \ref{USAP} introduces the unstructured approach to interference alignment and explores the scope and limitations of such a technique in achieving the optimal sDoF of any $(G,K,M,N)$ network.
\subsection{Notation}
We represent all column vectors in bold lower-case letters and all
matrices in bold upper-case letters. The conjugate transpose and Euclidean norm of vector $\bt v$
are denoted as $\bt v^H$ and $\|\bt v \|$, respectively. Calligraphic letters
(e.g., $\mathcal{Q}$) are used to denote sets. The column span of the columns of a matrix $\bt M$ is denoted as $\mathrm{span}(\bt M)$.

\section{System Model}

    Consider a network with $G$ interfering cells with $K$ users in each cell, as shown in Fig. \ref{toyfigure}. Each user is assumed to have $M$ antennas and each BS is assumed to have $N$ antennas. The index pair $(j,l)$ is used to denote the $l$th user in the $j$th cell. The channel from user $(j,l)$ to the $i$th BS is denoted as the $N\times M$ matrix $\bt H_{(jl,i)}$. We assume all channels to be generic and time varying. In the uplink, user $(j,l)$ is assumed to transmit the $M\times 1$ signal vector $\bt x_{jl}(t)$ in time slot $t$. The transmitted signal satisfies the average power constraint, $\frac{1}{T}\sum_{t=1}^T \|\bt x_{ij}(t)\|^2\leq \rho$. The resulting received signal at the $i$th BS can be written as
    \begin{align}
	\bt y_i= \sum_{j=1}^{G}\sum_{l=1}^K\bt H_{(jl,i)}\bt x_{jl}+\bt n_i,
    \end{align}
    where $\bt y_i$ is an $N \times 1$ vector and $\bt n_i$ is the $N\times 1$ vector representing circular symmetric additive white Gaussian noise $ \sim \mathcal {CN} (\bt 0,\bt I)$.
    The received signal is defined similarly for the downlink.
    
    Suppose the transmit signal vector is formed through a $M \times d$ linear transmit beamforming matrix $\bt V_{jl}$ and received using a $N \times d$ receive beamforming matrix $\bt U_{jl}$, where $d$ represents the number of transmitted data streams per user, then the received signal can be written as

    \begin{equation}
	\bt y_i= \sum_{j=1}^{G}\sum_{l=1}^K\bt H_{(jl,i)}\bt{V}_{jl}\bt{s}_{jl}+\bt n_i,
	\end{equation}
    where $\bt{s}_j$ is the $d\times 1$ symbol vector transmitted by user $(j,l)$. We denote the space occupied by interference at the $i$th BS as the column span of a matrix $\bt R_i$ formed using the column vectors from the set $\{\bt{H}_{(jl,i)}\bt{v}_{jlk}: j\in \{1,2,\hdots ,G\},\ l \in \{1,2,\hdots ,K \},\ k \in \{1,2,\hdots ,d\},\ j\neq i \}$, where we use the notation $\bt v_{jlk}$ to denote the $k$th beamformer associated with user $(j,l)$.
    
 To recover the signals transmitted by user $(i,l)$, the signal received by the $i$th BS is processed using the receive beamformer $\bt{U}_{il}$ and the received signal after this step can be written as
    \begin{align}
	\bt{U}_{il}^H\bt{y}_i&=\sum_{j=1}^{G}\sum_{l=1}^K \bt{U}_{il}^H \bt H_{(jl,i)}\bt{V}_{jl}\bt{s}_{jl}+\bt{U}_{il}^H\bt n_i.
	\label{eq_uhv}
    \end{align}

The information theoretic quantity of interest is the degrees of freedom. In particular, the total degrees of freedom of a network is defined as
\begin{align}
 \limsup_{\rho \rightarrow \infty}\left [\sup_{\{R_{ij}(\rho) \}\in \mathcal{C}(\rho)} \frac{\big ( R_{11}(\rho)+ R_{12}(\rho)+ \hdots + R_{GK}(\rho) \big )}{\log (\rho)} \right ] \nonumber 
\end{align}
where $\rho$ is the signal-to-noise (SNR) ratio, $\{R_{ij}(\rho) \}$ is an achievable rate tuple for a given SNR where $R_{ij}$ denotes the rate to user $(i,j)$, and $\mathcal{C}(\rho)$ is the capacity region for a given SNR. As is evident, the sum-DoF of a network is the pre-log factor at which sum-capacity scales as transmit power is increased to infinity. Informally, it is the total number of interference free directions that can be created in a network. Due to the symmetry in the network under consideration, maximizing the sum-DoF is equivalent to maximizing the DoF/user or DoF/cell. The maximum DoF/user that can be achieved in a network is also referred to as the symmetric DoF of a network. This paper focuses on characterizing the optimal symmetric DoF of MIMO cellular networks.
 \begin{figure}
\begin{center}
  \includegraphics[width=3.5in]{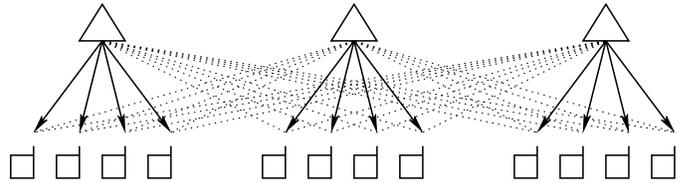}
\caption{Figure representing a cellular network having three mutually interfering cells with four users per cell.}
\label{toyfigure}
\end{center}
 \end{figure}  

\section{Decomposition Based Schemes: Achievable DoF and Conditions for Optimality}  
\label{allAboutDecomp}
In this section we discuss the DoF/user that can be achieved in a MIMO cellular network using the asymptotic scheme presented in \cite{cadambejafar} and establish the conditions under which such an approach is DoF optimal.

\subsection{Achievable DoF using decomposition based schemes}
    \label{decomposition}
	 Applying the asymptotic scheme in \cite{cadambejafar} to a MIMO network requires us to decompose either the transmitters or the receivers, or both, into independent single-antenna nodes. When using the asymptotic scheme on the decomposed network, the DoF achieved per user in the original network is simply the sum of the DoFs achieved over the individual single-antenna nodes. 

	One-sided decomposition of a $(G,K,M,N)$ cellular network on the user side reduces the network to a $G$-cell cellular network with $KM$ single antenna users per cell. Since user-side decomposition of both the MIMO interference channel and the MIMO cellular network results in a MISO cellular network, the results of \cite{gou,ghasemi} naturally extend to MIMO cellular networks. Two-sided decomposition of a $(G,K,M,N)$ cellular network results in $GN$ single-antenna BSs and $KM$ single-antenna users, which form a $GN \times GKM$ $X$-network with a slightly different message requirement than in a traditional $X$-network since each single-antenna user is interested in a message from only $N$ of the $GN$ single-antenna BSs. The asymptotic alignment scheme developed in \cite{cadambeXchannel} for $X$-networks can also be applied to this $GN \times GKM$ $X$-network. It turns out that one-sided decomposition and two-sided decomposition achieve the same DoF in a $(G,K,M,N)$ network. Using the results in \cite{gou, ghasemi, cadambeXchannel}, the achievable DoF for general MIMO cellular networks using decomposition based schemes is stated in the following theorem.
	\begin{theorem}
	For the $(G,K,M,N)$ cellular network, using one-sided decomposition on the user side or two-sided decomposition, $\frac{KMN}{KM+N}$ DoF/cell are achievable when $(G-1)KM\geq N$. 
	\label{onesided_theorem}
	\end{theorem}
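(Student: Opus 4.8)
The plan is to establish achievability by reducing the MIMO cellular network to a single-antenna (MISO) cellular network through user-side decomposition, and then invoking the asymptotic interference alignment scheme of \cite{cadambejafar}, in the MISO-cellular form of \cite{gou,ghasemi}, on the decomposed network. First I would decompose each $M$-antenna user into $M$ independent single-antenna users, turning the $(G,K,M,N)$ network into a $(G,KM,1,N)$ MISO cellular network with $\tilde K := KM$ single-antenna users per cell and $N$ antennas per BS. Because forcing each antenna of a user to act as an independent transmitter is a restriction of the joint transmit strategy available in the original network, any DoF tuple achievable after decomposition is achievable in the original network; it therefore suffices to prove achievability of $\tfrac{\tilde K N}{\tilde K + N} = \tfrac{KMN}{KM+N}$ DoF/cell in the decomposed network. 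Since the original channels are generic and time-varying, the decomposed scalar channels remain generic and time-varying, which is precisely what the asymptotic scheme requires.

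Next, I would carry out asymptotic alignment over an $L$-symbol time extension, so that each scalar user sees diagonal $L\times L$ channel matrices and each BS has an $NL$-dimensional receive space. The target is to let every user transmit $d$ streams with $d/L \to \tfrac{N}{\tilde K + N}$, so that the $\tilde K$ in-cell users occupy $\tilde K d \to \tfrac{\tilde K N}{\tilde K+N}\,L$ desired dimensions while the $(G-1)\tilde K$ out-of-cell users are aligned into the remaining $NL - \tilde K d \to \tfrac{N^2}{\tilde K + N}\,L$ dimensions at each BS. Following \cite{cadambejafar,gou}, I would build each transmit beamformer as a product of integer powers of the (diagonal, hence commuting) channel matrices applied to a common reference vector; commutativity forces the interfering beamformers arriving at a given BS to collapse onto a shared set of monomial directions, which is how the compression of interference into the smaller subspace is realized.

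The role of the hypothesis $(G-1)KM \geq N$ becomes clear at the counting stage: the total number of interfering streams at each BS is $(G-1)\tilde K\, d$, and for these to be compressible down to the allotted $\tfrac{N^2}{\tilde K+N}L$ interference dimensions one needs $(G-1)\tilde K \, d \geq \tfrac{N^2}{\tilde K + N}L$, i.e. $(G-1)\tilde K = (G-1)KM \geq N$ after substituting $d/L \to \tfrac{N}{\tilde K+N}$. This is exactly the regime in which the interference subspace is the binding constraint and the desired signals can be packed to the claimed value; letting $L\to\infty$ then yields $\tfrac{KMN}{KM+N}$ DoF/cell. I expect the main obstacle to be the same one that dominates \cite{cadambejafar}: verifying that the chosen exponent set simultaneously (a) aligns all interference into a subspace of the correct dimension and (b) keeps the $\tilde K d$ desired streams linearly independent of one another and of the interference, so that the desired signal dimension is not inadvertently reduced. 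This is where one must exploit the algebraic (generic/rational) independence of the channel coefficients through a polynomial or Vandermonde-type nonvanishing argument guaranteeing full rank almost surely, rather than relying on dimension counting alone.

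Finally, I would note that two-sided decomposition yields the $GN \times GKM$ $X$-network described before the theorem, to which the $X$-network alignment scheme of \cite{cadambeXchannel} applies; an essentially identical counting argument, again governed by $(G-1)KM \geq N$, shows that it achieves the same $\tfrac{KMN}{KM+N}$ DoF/cell, thereby also establishing the claimed equivalence of the one-sided and two-sided decomposition approaches.
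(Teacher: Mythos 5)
Your proposal is correct and takes essentially the same approach as the paper: the paper's proof is omitted as a ``straightforward application of the results in \cite{gou,ghasemi,cadambeXchannel},'' namely user-side decomposition into a $(G,KM,1,N)$ MISO cellular network handled by the asymptotic alignment scheme, with two-sided decomposition handled via the $GN\times GKM$ $X$-network of \cite{cadambeXchannel} --- exactly your reduction. You in fact supply more detail than the paper does (the symbol-extension counting, the limit $d/L\to N/(KM+N)$, and the algebraic equivalence showing that the counting condition reduces to $(G-1)KM\geq N$), all of which is consistent with the cited schemes.
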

	Note that when $(G-1)K < N$, there is no scope for interference alignment and random transmit beamforming in the uplink turns out to be the DoF optimal strategy. The proof of this theorem follows from a straightforward application of the results in \cite{gou, ghasemi, cadambeXchannel} and is omitted here. This theorem generalizes the result established in \cite{suh2}, where it is shown that SISO cellular networks with $K$-users/cell have $K/K+1$ DoF/cell. By duality of linear interference alignment, this result applies to both uplink and downlink. While we consider decomposing multi-antenna users into single-antenna users for one-sided decomposition here, we can alternately consider decomposing the multi-antenna BSs. It can however be shown that the achievable DoF remains unchanged. Designing the achievable scheme is similar to \cite{huasunjafarisit2012}, where separation between signal and interference is no longer implicitly assured.

\subsection{Outer Bounds on the DoF of MIMO Cellular Networks}
    \label{outerbounds}  
	We derive a new set of outer bounds on the DoF of MIMO cellular networks that are based on a result in \cite{cadambeXchannel}, where MIMO $X$-networks with $A$ transmitters
and $B$ receivers are considered. By focusing on the set of messages originating from or intended for a transmitter-receiver pair and splitting the total messages in the network into $AB$
sets, \cite{cadambeXchannel} derives a bound on the total DoF of this set of messages. Let $d_{i,j}$ represent the DoF between the $i$th transmitter and the $j$th receiver. The
following lemma presents the outer bound obtained in this manner.
	
	\begin{lemma}[$\!$\cite{cadambeXchannel}$\, $] In a wireless $X$-network with $A$ transmitters and $B$ receivers, the DoF of all messages originating at the $a$th transmitter and the DoF of
all the messages intended for the $b$th receiver are bounded by
	\begin{align}
	\sum_{i=1}^B d_{a,i}+\sum_{j=1}^A d_{j,b}-d_{a,b} \leq \max (M,N),
	\end{align}
	where $M$ is the number of antennas at the $a$th transmitter and $N$ is the number of antennas at the $b$th receiver. By symmetry, this bound also holds when the direction of communication is reversed.
	\label{dofx_lemma}
	\end{lemma}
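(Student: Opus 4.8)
The plan is to prove this as a genie-aided outer bound in which the \emph{entire} message set of interest is decoded at a single node. Let $W_{j,i}$ denote the message from transmitter $j$ to receiver $i$, let $\bt x_j$ be the signal sent by transmitter $j$, and collect the relevant messages into $\mathcal{W} = \{W_{a,i}: i = 1,\dots,B\} \cup \{W_{j,b}: j = 1,\dots,A\}$; since $W_{a,b}$ lies in both constituent sets, the total DoF carried by $\mathcal{W}$ is exactly $\sum_i d_{a,i} + \sum_j d_{j,b} - d_{a,b}$, the left-hand side of the bound. Because the lemma asserts the same inequality for the reversed network, I would assume $N \geq M$ without loss of generality (otherwise reverse the direction of communication, which swaps the roles of the $M$-antenna transmitter $a$ and the $N$-antenna receiver $b$ while leaving all DoF unchanged), so that receiver $b$ is the node with the larger antenna count.

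The first main step is to show that, once a genie supplies the messages $\mathcal{G} = \{W_{j,i}: j\neq a,\ i\neq b\}$ (precisely those \emph{not} in $\mathcal{W}$), receiver $b$ can reconstruct every message in $\mathcal{W}$ from its own received signal $\bt y_b^n$. I would establish this by successive decoding: (i) receiver $b$ first decodes its intended messages $\{W_{j,b}\}_{j=1}^A$ directly from $\bt y_b^n$, which it can do by the assumption of reliable communication; (ii) combining these with the genie set $\mathcal{G}$, receiver $b$ now knows the complete message set of every transmitter $j\neq a$, hence can re-encode $\bt x_j$ for all $j\neq a$ and subtract their contributions, leaving the clean observation $\tilde{\bt y}_b = \bt H_{(a,b)}\bt x_a + \bt n_b$; (iii) since $N\geq M$ and the generic $N\times M$ channel $\bt H_{(a,b)}$ has full column rank $M$, this observation determines $\bt x_a$ up to noise, so receiver $b$ recovers all remaining from-$a$ messages $\{W_{a,i}\}_{i=1}^B$.

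With decodability in hand, the second step is a standard Fano-plus-entropy bound. Since the messages in $\mathcal{W}$ are independent of the genie messages $\mathcal{G}$, Fano's inequality yields $nR_{\mathcal{W}} \leq I(\mathcal{W};\bt y_b^n \mid \mathcal{G}) + n\epsilon_n$, where $R_{\mathcal{W}}$ is the sum rate of the messages in $\mathcal{W}$. I would then expand $I(\mathcal{W};\bt y_b^n\mid\mathcal{G}) = h(\bt y_b^n\mid\mathcal{G}) - h(\bt n_b^n)$, using the fact that conditioning on the full message set $(\mathcal{G},\mathcal{W})$ leaves only the additive noise in $\bt y_b^n$. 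Bounding $h(\bt y_b^n\mid\mathcal{G}) \leq h(\bt y_b^n)$ by the entropy of an $N$-dimensional signal observed over $n$ channel uses shows that the pre-log of this mutual information is at most $N$, whence the DoF of $\mathcal{W}$ is at most $N = \max(M,N)$.

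I expect the crux to be step (iii): the reconstruction of $\bt x_a$ at receiver $b$ is exactly where the hypothesis $N\geq M$ is indispensable, and it is the reason the bound reads $\max(M,N)$ rather than $\min(M,N)$. When $M > N$, receiver $b$ cannot resolve the $M$-dimensional transmit signal $\bt x_a$ through its rank-$N$ channel, and one must instead run the identical genie argument on the reversed network so that the $M$-antenna node $a$ plays the role of the decoding receiver. Care is also needed to confirm that conditioning on the finite genie set $\mathcal{G}$ does not inflate the pre-log of $h(\bt y_b^n\mid\mathcal{G})$ beyond $N$, which follows because dropping the conditioning only increases differential entropy.
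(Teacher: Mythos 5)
First, a framing note: the paper does not actually prove this lemma; it is imported verbatim from \cite{cadambeXchannel}, so your proposal must be judged against the known genie-aided argument of that reference. For the case $N \geq M$, your argument is essentially that argument and is sound: give receiver $b$ the genie $\mathcal{G} = \{W_{j,i}: j \neq a,\ i \neq b\}$, let it decode its own messages, strip off the (now reconstructible) signals of all transmitters $j \neq a$, invert the full-column-rank channel $\bt H_{(a,b)}$ to obtain $\bt x_a^n$ within noise distortion, and then apply Fano together with $h(\bt y_b^n \mid \mathcal{G}) \leq nN\log\rho + n\,o(\log\rho)$. One step you gloss over: knowing $\bt x_a^n$ up to noise does not by itself yield the messages $\{W_{a,i}\}_{i\neq b}$, since those are defined by decoders at the \emph{other} receivers. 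You need the standard intermediate step that receiver $b$, knowing all transmit signals up to bounded-variance noise, can reconstruct each $\bt y_i^n$ subject to noise distortion and run receiver $i$'s decoder, incurring only an $o(\log\rho)$ rate penalty. This is a routine but necessary addition.

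The genuine gap is your treatment of $M > N$. You reduce it to the first case by ``reversing the direction of communication,'' asserting that this leaves all DoF unchanged. It does not: the reciprocal network is a \emph{different} communication problem with a different message set, and there is no information-theoretic duality theorem equating the DoF region of an X network with that of its reciprocal (duality is known for linear beamforming achievability, not for converses). Note also that the lemma's closing sentence --- that the bound also holds when the direction is reversed --- is an \emph{additional} claim, meaning the same proof can be repeated on the reciprocal network; reading it as a license to transfer bounds from the reciprocal network back to the original one makes your reduction circular. The correct handling of $M > N$, and the one used in \cite{cadambeXchannel}, is receiver enhancement: give receiver $b$ an additional $M-N$ antennas. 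Adding receive antennas cannot shrink the capacity region, so any outer bound for the enhanced network is valid for the original; the enhanced channel $\bt H_{(a,b)}$ is then almost surely invertible, and your own argument runs through verbatim to give the bound $M$. The two cases together yield $\max(M,N)$, and this enhancement step is precisely why the bound reads $\max(M,N)$ rather than $N$.
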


	Before we proceed to establish outer bounds on the DoF of a MIMO cellular network, we define the set $\mathcal{Q}$ as
\begin{equation}
 \mathcal Q=\left \{\frac{p}{q}: p \in \{ 1,2,\hdots , G-1\}, q \in \{ 1,2,\hdots , (G-p)K \} \right \}.
\end{equation}
 The following theorem presents an outer bound on the DoF.
	\begin{theorem}
	If a $(G,K,M,N)$ network satisfies $M/N\leq p/q$, for some $p/q \in \mathcal{Q}$, then $Np/(Kp+q)$ is an outer bound on the DoF/user of that network. Further, if $M/N\geq p/q$, for some $p/q \in \mathcal{Q}$, then $Mq/(Kp+q)$ is an outer bound on the DoF/user of that network.
	\label{1byq_bound}
	\end{theorem}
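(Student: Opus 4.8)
The plan is to derive the cellular bound from the $X$-network bound of Lemma~\ref{dofx_lemma} by grouping nodes into two super-nodes and invoking the lemma a single time. I work in the uplink (the downlink follows from the reversibility noted in the lemma) and let $d$ denote the symmetric DoF/user. Fix a ratio $p/q \in \mathcal Q$, so that $1 \le p \le G-1$ and $1 \le q \le (G-p)K$. Select any $p$ of the $G$ cells and let $\mathcal B$ be the set of their $p$ BSs; the remaining $G-p$ cells contain $(G-p)K$ users, a total that is at least $q$ by the defining constraint on $\mathcal Q$, and from these I select any $q$ users and call this set $\mathcal T$.

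Next I enhance the network by allowing the $p$ BSs in $\mathcal B$ to cooperate fully, forming a single receiver $b$ with $pN$ antennas, and the $q$ users in $\mathcal T$ to cooperate fully, forming a single transmitter $a$ with $qM$ antennas. Receiver-side and transmitter-side cooperation only enlarge the capacity region, so any outer bound for the enhanced network is an outer bound for the original. The enhanced network is an $X$-network whose transmitters are $a$ together with the remaining users taken individually, and whose receivers are $b$ together with the remaining BSs taken individually: the super-transmitter $a$ carries $q$ independent messages, one per selected user, each of DoF $d$ and each destined to that user's own BS, while the super-receiver $b$ must decode every message destined to one of the $p$ BSs it contains.

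I then apply Lemma~\ref{dofx_lemma} to the pair $(a,b)$, whose antenna counts are $qM$ and $pN$, producing the right-hand side $\max(qM,pN)$. For the left-hand side, $\sum_i d_{a,i} = qd$, since each of the $q$ selected users contributes a single message of DoF $d$; $\sum_j d_{j,b} = pKd$, since each of the $p$ BSs in $\mathcal B$ receives the $K$ messages of its own users; and $d_{a,b}=0$, because the $q$ users in $\mathcal T$ lie in cells disjoint from the $p$ cells of $\mathcal B$, so none of their messages is destined to $b$. The lemma therefore gives $qd + pKd \le \max(qM,pN)$, that is $d \le \max\!\big(\tfrac{Mq}{Kp+q}, \tfrac{Np}{Kp+q}\big)$. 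When $M/N \le p/q$ we have $Mq \le Np$ and this collapses to $Np/(Kp+q)$; when $M/N \ge p/q$ it collapses to $Mq/(Kp+q)$, which are exactly the two claimed statements.

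The main thing to get right is the disjointness bookkeeping that forces $d_{a,b}=0$: the index range $q \in \{1,\dots,(G-p)K\}$ is precisely what guarantees that $q$ transmitters can be drawn entirely from cells outside $\mathcal B$, and this is what decouples the tally of messages \emph{from} $a$ from the tally of messages \emph{to} $b$, so that the subtracted overlap term does not erode the bound. The only other points needing care are the enhancement step, where I must verify that full cooperation within $\mathcal B$ and within $\mathcal T$ keeps the object a genuine $X$-network so that the lemma is literally applicable while not decreasing the DoF, and the appeal to symmetry that lets me assign the same value $d$ to every user's message.
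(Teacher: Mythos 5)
Your proposal is correct and takes essentially the same route as the paper: choose $p$ cooperating BSs and $q$ cooperating users drawn from the remaining $G-p$ cells, apply Lemma \ref{dofx_lemma} to the resulting super-transmitter/super-receiver pair with antenna counts $qM$ and $pN$, and use disjointness of the two cell sets to make the overlap term $d_{a,b}$ vanish, yielding $(Kp+q)d \leq \max(pN,qM)$. The only difference is the final step: the paper sums the same inequality over all $\binom{G}{p}\binom{(G-p)K}{q}$ choices of cooperating sets to bound the sum DoF of arbitrary rate tuples, whereas you apply it once under the symmetric-rate assumption; since the theorem concerns the symmetric DoF/user, your shortcut is equally valid.
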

	\begin{proof}
	To prove this theorem, we first note that a cellular network can be regarded as an $X$-network with some messages set to zero. Further, Lemma \ref{dofx_lemma} is applicable even when
some messages are set to zero. Now, suppose $\frac{M}{N}\leq \frac{p}{q}$ for some $\frac{p}{q} \in \mathcal{Q}$, then consider a set of $p$ cells and allow the set of BSs in these $p$ cells to cooperate fully. Let $\mathcal{B}$ denote the set of indices corresponding to the $p$ chosen cells. From the remaining $G-p$ cells, we pick $q$ users and denote the set of indices corresponding to these users as $\mathcal{U}_{\bar{\mathcal{B}}}$ and allow them to cooperate fully.

Applying Lemma \ref{dofx_lemma} to the set of BSs $\mathcal{B}$ and the set of users $\mathcal{U}_{\bar{\mathcal{B}}}$, we get

	\begin{align}
	\sum_{i\in{\mathcal{B}}} \sum_{j=1}^K d_{ij,i}+\sum_{(g,h)\in \mathcal{U}_{\bar{\mathcal{B}}} }d_{gh,g} \leq \max(pN,qM).
	\end{align}

By summing over similar bounds for all the ${G}\choose{p}$ sets of $p$ BSs and the corresponding ${(G-p)K}\choose{q}$ sets of $q$ users for each set of $p$ BSs, we obtain


	\begin{align}
	 \left [\frac{K}{q}+\frac{1}{p}  \right ] \sum_{i=1}^G \sum_{j=1}^K d_{ij,i} \leq & \frac{GK}{pq} \max(pN,qM) \nonumber \\
	 \Rightarrow \sum_{i=1}^G \sum_{j=1}^K d_{ij,i} \leq&  \frac{GK}{Kp+q} \max(pN,qM)=pN.
	\end{align}



	Thus, the total DoF in the network is bounded by $\frac{GKNp}{Kp+q}$. Hence, DoF/user $\leq \frac{Np}{Kp+q}$ whenever $p/q \in \mathcal{Q}$. The outer bound is established in a similar manner when $\frac{M}{N}\geq\frac{p}{q}$. Note that whenever $\frac{M}{N}=\frac{p}{q}$,$\frac{Np}{Kp+q}=\frac{Mq}{Kp+q}=\frac{MN}{KM+N}$. 
	\end{proof}
		
	In \cite{love}, outer bounds on the DoF for MIMO cellular network are derived which are also based on the idea of creating multiple message sets \cite{cadambeXchannel}. The DoF/user of a $(G,K,M,N)$ network is shown to be bounded by
	\begin{align}
	\text{DoF/user} \leq \min \left ( M,\tfrac{N}{K},\tfrac{\max[KM,(G-1)N]}{K+G-1}, \tfrac{\max[N,(G-1)M]}{K+G-1} \right ).
	\label{love_bounds}
	\end{align}
	While it is difficult to compare this set of bounds and the bounds in Theorem \ref{1byq_bound} over all parameter values, we can show that under certain settings the bounds obtained in Theorem \ref{1byq_bound} are tighter. For example, since $p/q \in \mathcal{Q}$, let us fix $p/q=1/K$, then set $M/N=p/q=1/K$. Further, let us assume that $(G-1)<K$.
Under such conditions, (\ref{love_bounds}) bounds the DoF/user by $\tfrac{MK}{K+G-1}$ while Theorem \ref{1byq_bound} states that DoF/user $\leq \tfrac{M}{2}$. Since we have assumed $K > G-1$, it is
easy to see that the latter bound is tighter.

\subsection{Optimality of the DoF Achieved Using Decomposition}
\label{optimality}
    Using the results in preivious two sections, we establish conditions for the optimality of one-sided and two-sided decomposition of MIMO cellular networks in
the following theorem.
    \begin{theorem}
    The optimal DoF for any $(G,K,M,N)$ network with $\frac{M}{N} \in \mathcal{Q}$ is
$\frac{MN}{KM+N}$ DoF/user. The optimal DoF is achieved by either one-sided or two-sided decomposition with asymptotic interference alignment.
    \label{optimal_theorem}
    \end{theorem}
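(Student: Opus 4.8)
The plan is to establish the claim by sandwiching: the achievable DoF from Theorem~\ref{onesided_theorem} serves as a matching inner bound, while Theorem~\ref{1byq_bound} supplies the converse, and the only real content is to verify that these two coincide precisely when $\frac{M}{N}\in\mathcal{Q}$. Since both directions are consequences of results already proved, the proof reduces to bookkeeping around the set $\mathcal{Q}$.

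For the inner bound, I would invoke Theorem~\ref{onesided_theorem} directly, which asserts that $\frac{MN}{KM+N}$ DoF/user is achievable by one-sided or two-sided decomposition provided $(G-1)KM\geq N$. The one point requiring verification is that this side condition is automatic under the hypothesis. The smallest element of $\mathcal{Q}$ is obtained by taking $p=1$ and $q=(G-1)K$, namely $\frac{1}{(G-1)K}$, so membership $\frac{M}{N}\in\mathcal{Q}$ forces $\frac{M}{N}\geq\frac{1}{(G-1)K}$, which rearranges to exactly $(G-1)KM\geq N$. Hence achievability of $\frac{MN}{KM+N}$ is guaranteed.

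For the converse, I would apply Theorem~\ref{1byq_bound} with the choice $\frac{p}{q}=\frac{M}{N}$. Because $\frac{M}{N}\in\mathcal{Q}$ by assumption, this ratio admits a representation with $p\in\{1,\dots,G-1\}$ and $q\in\{1,\dots,(G-p)K\}$, so both branches of the theorem apply simultaneously: the branch $\frac{M}{N}\leq\frac{p}{q}$ gives DoF/user $\leq\frac{Np}{Kp+q}$, and the branch $\frac{M}{N}\geq\frac{p}{q}$ gives DoF/user $\leq\frac{Mq}{Kp+q}$. Using the equality $Mq=Np$ (which is $\frac{M}{N}=\frac{p}{q}$ cleared of denominators), a short substitution shows $\frac{Np}{Kp+q}=\frac{Mq}{Kp+q}=\frac{MN}{KM+N}$, so the outer bound matches the inner bound. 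I would also note that $\frac{Np}{Kp+q}$ is invariant under common scaling of $(p,q)$, so the particular representation of $\frac{M}{N}$ drawn from $\mathcal{Q}$ is immaterial.

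Combining the two directions yields that the optimal DoF/user equals $\frac{MN}{KM+N}$, attained by decomposition. I do not anticipate a serious obstacle, as neither half needs new machinery; the only care required is the mild subtlety that Theorem~\ref{1byq_bound} is phrased in terms of a comparison ratio drawn from $\mathcal{Q}$ rather than $\frac{M}{N}$ itself, so one must explicitly observe that $\frac{M}{N}\in\mathcal{Q}$ permits taking $\frac{p}{q}=\frac{M}{N}$, thereby collapsing the two one-sided inequalities into a single tight equality while simultaneously certifying the achievability side condition.
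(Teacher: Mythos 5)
Your proposal is correct and takes essentially the same route as the paper, whose entire proof is the one-line observation that the result ``follows immediately from Theorems \ref{onesided_theorem} and \ref{1byq_bound}.'' The details you supply --- that $\frac{M}{N}\in\mathcal{Q}$ forces $\frac{M}{N}\geq\frac{1}{(G-1)K}$ and hence the achievability side condition $(G-1)KM\geq N$, and that setting $\frac{p}{q}=\frac{M}{N}$ collapses both branches of the outer bound to $\frac{MN}{KM+N}$ (an identity the paper itself records at the end of the proof of Theorem \ref{1byq_bound}) --- are exactly the bookkeeping the paper leaves implicit.
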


    This result follows immediately from Theorems \ref{onesided_theorem} and \ref{1byq_bound}. We observe that this result is analogous to the results in \cite{gou,ghasemi} where it is shown that the $G$-user interference channel has $\frac{MN}{M+N}$ DoF/user whenever $\eta=\frac{\max (M,N)}{\min (M,N)}$ is an integer and $G > \eta$. It is easy to see that the results of \cite{gou,ghasemi} can be easily recovered from the above theorem by setting $K=1$ and letting $G$ represent the number of users in the interference channel.

    The result in Theorem \ref{optimal_theorem} has important consequences for cellular networks with single-antenna users. The following corollary describes the optimal DoF/user of any cellular network with single antenna users that satisfies $(G-1)K \geq N$.
    \begin{corollary}
    The optimal DoF of a $(G,K,M=1,N)$ network with $(G-1)K \geq N$, is $\frac{N}{K+N}$ DoF/user.
    \label{miso_theorem}
    \end{corollary}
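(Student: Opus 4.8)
The plan is to obtain this as a direct specialization of Theorem \ref{optimal_theorem}, since substituting $M=1$ into the expression $\frac{MN}{KM+N}$ already produces the claimed value $\frac{N}{K+N}$. The only nontrivial point is to verify that the hypothesis of Theorem \ref{optimal_theorem}, namely $\frac{M}{N} \in \mathcal{Q}$, actually holds under the stated condition $(G-1)K \geq N$.

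First I would exhibit an explicit representative $\frac{p}{q} \in \mathcal{Q}$ equal to $\frac{M}{N} = \frac{1}{N}$, with the natural choice $p=1$ and $q=N$. Recalling that $\mathcal{Q} = \{ p/q : p \in \{1,\ldots,G-1\},\ q \in \{1,\ldots,(G-p)K\} \}$, the value $p=1$ is admissible (as $G \geq 2$ for any interfering network), and $q=N$ is admissible precisely when $N \leq (G-p)K = (G-1)K$. This is exactly the hypothesis $(G-1)K \geq N$. Hence $\frac{1}{N} = \frac{p}{q} \in \mathcal{Q}$, so that $\frac{M}{N} \in \mathcal{Q}$ when $M=1$.

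Having confirmed membership in $\mathcal{Q}$, I would invoke Theorem \ref{optimal_theorem} directly: the optimal DoF/user equals $\frac{MN}{KM+N}$, which at $M=1$ reduces to $\frac{N}{K+N}$. For completeness one may note that both halves of this optimality flow through the ingredients of Theorem \ref{optimal_theorem}: achievability via Theorem \ref{onesided_theorem} requires $(G-1)KM \geq N$, which at $M=1$ is again the stated hypothesis, while the matching converse $\frac{Np}{Kp+q} = \frac{N}{K+N}$ follows from the first bound of Theorem \ref{1byq_bound} evaluated at $\frac{p}{q} = \frac{1}{N}$. Since the argument is a one-line substitution once membership in $\mathcal{Q}$ is established, there is no genuine obstacle; the only step requiring care is matching the index ranges defining $\mathcal{Q}$ to the hypothesis, i.e.\ checking that $q=N$ lies in $\{1,\ldots,(G-1)K\}$ so that the single-antenna ratio $\frac{1}{N}$ is indeed realized in $\mathcal{Q}$.
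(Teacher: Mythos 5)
Your proposal is correct and matches the paper's own reasoning: the paper presents this corollary as an immediate consequence of Theorem~\ref{optimal_theorem}, exactly as you do, with the hypothesis $(G-1)K \geq N$ serving precisely to guarantee that $\tfrac{M}{N}=\tfrac{1}{N}$ lies in $\mathcal{Q}$ via $p=1$, $q=N$. Your additional check that the achievability condition $(G-1)KM \geq N$ of Theorem~\ref{onesided_theorem} and the converse from Theorem~\ref{1byq_bound} both specialize consistently at $M=1$ is a nice bit of diligence but not a departure from the paper's route.
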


 \begin{figure}
\begin{center}
  \includegraphics[width=3.2in]{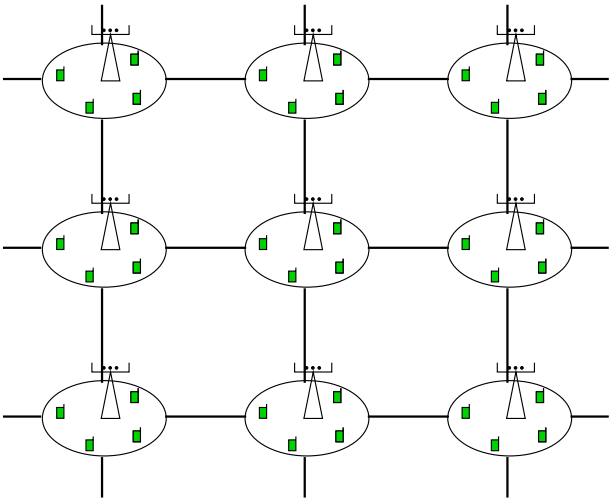}
\caption{Figure showing the 2-D Wyner model of a cellular network. Two cells are connected to each other if they mutually interfere. Cells that are not directly connected to each other are assumed to see no interference from each other. Note that each user in a given cell sees interference from the four adjacent BSs.}
\label{2dwyner}
\end{center}
 \end{figure}

    For example, this corollary states that a three-cell network having four single-antenna users per cell and four antennas at each BS has $1/2$ DoF/user. Using this corollary and the DoF achieved using zero-forcing beamforming, the optimal DoF of cellular networks with single-antenna users can be completely characterized and is stated in the following theorem.

    \begin{theorem}
    The DoF of a G-cell cellular network with $K$ single-antenna users per cell and $N$ antennas at each BS is given by
    \begin{align}
    \text{DoF/user}=\begin{cases}
                     \frac{N}{N+K} & N < (G-1)K \\
		     \frac{N}{GK} & (G-1)K \leq N <GK \\
		     1  &  N \geq GK 
		    \end{cases}.
    \end{align} The optimal DoF is achieved through zero-forcing beamforming when $N \geq (G-1)K$ and through asymptotic interference alignment when $N < (G-1) K$.
    \end{theorem}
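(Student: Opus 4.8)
The plan is to treat the three antenna regimes separately, matching an achievable scheme to a converse bound in each. The single-antenna constraint at the users makes $M=1$, so the per-user DoF never exceeds $1$; this already supplies the converse for the third regime $N\geq GK$, where I would achieve $1$ DoF/user by plain zero-forcing: over generic time-varying channels each BS has $N\geq GK$ antennas, more than the $GK$ single-antenna streams present in the whole network, so it can separate all users' streams and in particular decode its own $K$ users interference-free. The first regime $N<(G-1)K$ is already settled by Corollary~\ref{miso_theorem}: since $N<(G-1)K$ implies $(G-1)K\geq N$, that corollary gives exactly $\tfrac{N}{N+K}$ DoF/user, achieved by asymptotic interference alignment (Theorem~\ref{onesided_theorem}) with the matching converse from Theorem~\ref{1byq_bound}.

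The remaining work is the middle regime $(G-1)K\leq N<GK$, where I claim $\tfrac{N}{GK}$ DoF/user, i.e.\ a network-wide total of exactly $N$ DoF. For the converse I would specialize Theorem~\ref{1byq_bound} to the point $p=1$, $q=(G-1)K$ of $\mathcal{Q}$: the hypothesis $M/N=\tfrac{1}{N}\leq\tfrac{1}{(G-1)K}=\tfrac{p}{q}$ is exactly $N\geq(G-1)K$, and the bound then reads $\tfrac{Np}{Kp+q}=\tfrac{N}{K+(G-1)K}=\tfrac{N}{GK}$. Equivalently, this is a single application of Lemma~\ref{dofx_lemma} after letting the one BS of a chosen cell cooperate and the $(G-1)K$ users of all other cells cooperate, so that $\max\!\big(N,(G-1)K\big)=N$ caps the sum of all $GK$ per-user DoF.

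For achievability in the middle regime I would use zero-forcing over a symbol extension. Taking $L=GK$ channel uses and asking each user to carry $d_{\mathrm{ext}}=N$ streams gives the target $d=d_{\mathrm{ext}}/L=\tfrac{N}{GK}$; the total number of streams in the network is then $GK\,d_{\mathrm{ext}}=NGK$, which matches the $NL=N\,GK$ receive dimensions available at each BS. I would design the $L\times d_{\mathrm{ext}}$ transmit beamformers generically, so that at every BS the effective stream signatures $\bt H_{(jl,i)}\bt v_{jlk}$ are linearly independent; each BS can then null the $(G-1)K\,d_{\mathrm{ext}}$ interfering signatures and still resolve its own $K\,d_{\mathrm{ext}}$ desired streams. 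By the duality of linear schemes noted after Theorem~\ref{onesided_theorem}, the same rate holds in the downlink. Finally I would check the boundary $N=(G-1)K$, where both $\tfrac{N}{N+K}$ and $\tfrac{N}{GK}$ equal $\tfrac{G-1}{G}$, so the two regimes agree and the piecewise formula is continuous there.

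The main obstacle I anticipate is the achievability argument in the middle regime, specifically justifying the generic linear independence of the stream signatures despite the block-diagonal structure forced by the symbol extension: receive zero-forcing succeeds only if, simultaneously at all $G$ base stations, the $NGK$ signatures $\bt H_{(jl,i)}\bt v_{jlk}$ span the full receive space. I would handle this by exhibiting a single generic choice of beamformers (or by noting that the failure set is a proper algebraic subvariety, hence of measure zero) for which all $G$ full-rank conditions hold at once, which is precisely where the generic, time-varying nature of the channels is used.
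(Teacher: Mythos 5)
Your proposal is correct and takes essentially the same route as the paper: the regime $N<(G-1)K$ is delegated to Corollary~\ref{miso_theorem} (decomposition with asymptotic alignment, matched by the Theorem~\ref{1byq_bound} converse), and both regimes with $N\geq(G-1)K$ are handled by zero-forcing achievability against the $p=1$, $q=(G-1)K$ specialization of Theorem~\ref{1byq_bound} (trivially $\text{DoF/user}\leq M=1$ for $N\geq GK$). The full-rank obstacle you flag in the middle regime is settled exactly as you suggest: specialize the generic beamformers to a cyclic $N$-regular schedule (each user active in $N$ of the $GK$ slots, $N$ users per slot), which block-diagonalizes each base station's $NGK\times NGK$ signature matrix into generic invertible $N\times N$ blocks, showing the determinant polynomial is not identically zero --- and this specialization is precisely the time-shared zero-forcing the paper implicitly invokes.
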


    Another interesting consequence of Theorem \ref{optimal_theorem} for two-cell cellular networks is stated in the following corollary.
	
    \begin{corollary}
    For a $(G=2,K,M,N)$ cellular network with $K=\frac{N}{M}$, time sharing across cells is optimal and the optimal DoF/user is $\frac{N}{2K}$.
    \end{corollary}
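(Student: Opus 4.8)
The plan is to derive this corollary directly from Theorem \ref{optimal_theorem}, supplying a matching achievability argument via time division. First I would specialize the set $\mathcal{Q}$ to the two-cell case: with $G=2$ the only admissible value of $p$ is $p=1$, and then $q$ ranges over $\{1,2,\ldots,(G-p)K\}=\{1,2,\ldots,K\}$, so that $\mathcal{Q}=\{1/q: q\in\{1,\ldots,K\}\}$. The hypothesis $K=N/M$ is equivalent to $M/N=1/K$, and taking $q=K$ shows $M/N=1/K\in\mathcal{Q}$. Hence Theorem \ref{optimal_theorem} applies and the optimal symmetric DoF equals $\frac{MN}{KM+N}$ DoF/user.

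Next I would evaluate this expression under the constraint $N=KM$. Substituting gives $\frac{MN}{KM+N}=\frac{M\cdot KM}{KM+KM}=\frac{M}{2}$, and since $N=KM$ implies $M=N/K$, this equals $\frac{N}{2K}$. This settles the claimed value; it remains only to exhibit a scheme that attains it and to identify that scheme as time sharing.

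For achievability I would activate the two cells in alternating, equal-length time slots. When a single cell transmits in isolation there is no inter-cell interference, so the active cell is exactly a MIMO multiple-access channel with $K$ users, each equipped with $M$ antennas, communicating to a BS with $N$ antennas. Its sum-DoF is $\min(KM,N)$, which under $N=KM$ equals $N$; because the channels are generic, the $N\times KM$ aggregate uplink matrix is square and full rank, so a zero-forcing receiver separates all $KM$ streams and every user simultaneously attains $M$ DoF while its cell is active. Allocating half the time to each cell then yields $M/2=N/(2K)$ DoF/user, which coincides with the outer bound above. This proves both that $\frac{N}{2K}$ is optimal and that the simple time-sharing scheme achieves it.

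The argument is essentially a specialization plus a one-line achievability check, so I do not expect a genuine obstacle. The only point requiring care is the achievability step: one must confirm that each of the $K$ users can attain its full $M$ DoF \emph{simultaneously}, rather than merely achieving the sum-DoF with an unequal split, which follows from the genericity of the channels making the square effective matrix invertible. The conceptual content worth emphasizing is that in this boundary regime the asymptotic interference-alignment machinery underlying Theorem \ref{optimal_theorem} is unnecessary---orthogonalizing the two cells in time already meets the converse.
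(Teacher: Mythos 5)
Your proposal is correct and follows essentially the same route as the paper: invoke Theorem \ref{optimal_theorem} (after checking $M/N=1/K\in\mathcal{Q}$) to pin the optimal value at $\frac{MN}{KM+N}=\frac{N}{2K}$, then match it by time sharing, using the fact that the isolated $K$-user MAC with $N=KM$ has full sum-DoF $N$ so each user gets $N/K$ while active. Your added detail on the square generic channel matrix being invertible (so all users attain their DoF simultaneously) is a fair elaboration of the paper's terser one-line achievability remark, not a different argument.
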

    \begin{proof}
    Using Theorem \ref{optimal_theorem}, the optimal DoF/user of this network is $\frac{N}{2K}$. Since the $K$-user MAC/BC with $\frac{M}{N}=\frac{1}{K}$ has $\frac{N}{K}$ DoF/user, accounting for time sharing between the two cells gives us the required result.
    \end{proof}

    This result recovers and generalizes a similar result obtained in \cite{parklee} for two-cell MISO cellular networks, which shows that in dense cellular networks where $K=N/M$, when two closely located cells cause significant interference to each other, simply time sharing between the two mutually interfering BSs is a DoF-optimal way to manage interference in the network. This result can be further extended to the 2-D Wyner model for MIMO cellular networks and is stated in the following corollary.

\begin{figure*}[t]
        \centering
        \begin{subfigure}[b]{0.3\textwidth}
		\psfrag{xlabel}[tc][tc][0.8]{$M/N$}
		\psfrag{0.05}[tc][tc][0.65]{}
		\psfrag{0.15}[tc][tc][0.65]{}
		\psfrag{0.25}[tc][tc][0.65]{}
		\psfrag{0.35}[tc][tc][0.65]{}
		\psfrag{0.45}[tc][tc][0.65]{}
		\psfrag{0}[tr][tr][0.65]{}
		\psfrag{0.1}[tr][tr][0.8]{0.1}
		\psfrag{0.2}[tr][tr][0.8]{0.2}
		\psfrag{0.3}[tr][tr][0.8]{0.3}
		\psfrag{0.4}[tr][tr][0.8]{0.4}
		\psfrag{0.5}[tr][tr][0.8]{0.5}
		\psfrag{1}[tr][tr][0.8]{1}
		\psfrag{2}[tr][tr][0.8]{2}
		\psfrag{3}[tr][tr][0.8]{3}
		\psfrag{1.5}[tr][tr][0.65]{}
		\psfrag{2.5}[tr][tr][0.65]{}
		\psfrag{ylabel}[Bc][Bl][0.8][180]{ $\begin{matrix} \text{DoF/user/N} \\ \vspace{0.3cm} \end{matrix}$}
		\psfrag{BSs=2, Users=2}[bc][bc][0.9]{2 cells and 2 users/cell}
		\includegraphics[width=2in,height=1.6in]{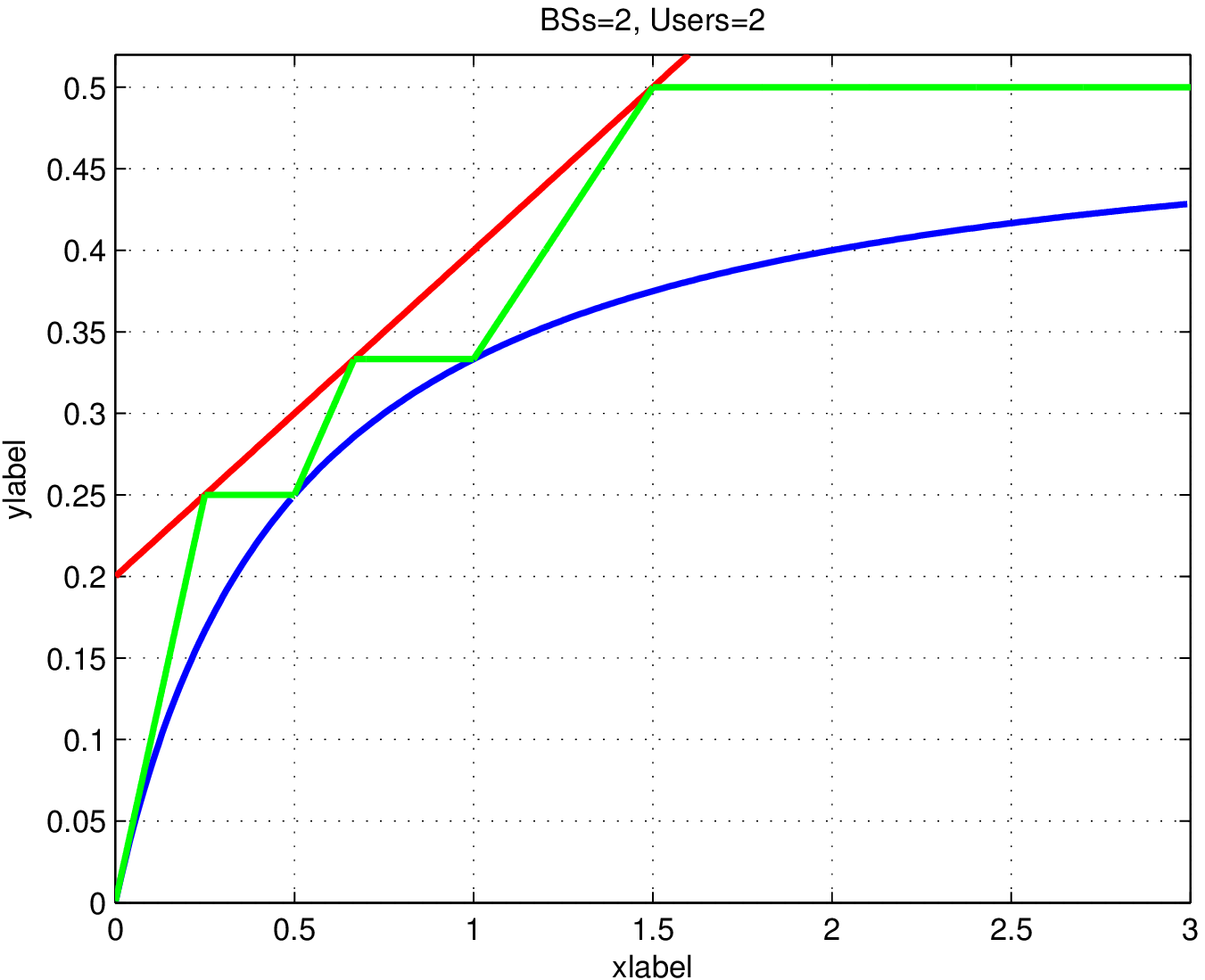}
		\caption{}
		\label{fig:B2U2}

        \end{subfigure}%
        ~ 
        \begin{subfigure}[b]{0.3\textwidth}
		\psfrag{xlabel}[tc][tc][0.8]{$M/N$}
		\psfrag{0.05}[tc][tc][0.65]{}
		\psfrag{0.15}[tc][tc][0.65]{}
		\psfrag{0.25}[tc][tc][0.65]{}
		\psfrag{0.35}[tc][tc][0.65]{}
		\psfrag{0.45}[tc][tc][0.65]{}
		\psfrag{0}[tc][tc][0.65]{}
		\psfrag{0.1}[tr][tr][0.8]{0.1}
		\psfrag{0.2}[tr][tr][0.8]{0.2}
		\psfrag{0.3}[tr][tr][0.8]{0.3}
		\psfrag{0.4}[tr][tr][0.8]{0.4}
		\psfrag{0.5}[tr][tr][0.65]{}
		\psfrag{1}[tr][tr][0.8]{1}
		\psfrag{2}[tr][tr][0.8]{2}
		\psfrag{3}[tr][tr][0.8]{3}
		\psfrag{1.5}[tr][tr][0.65]{}
		\psfrag{2.5}[tr][tr][0.65]{}
		\psfrag{ylabel}[cc][cl][0.8][180]{ $\begin{matrix} \text{DoF/user/N} \\ \vspace{0.3cm} \end{matrix}$}
		\psfrag{BSs=2, Users=3}[bc][bc][0.9]{2 cells and 3 users/cell}
		\includegraphics[width=2in,height=1.6in]{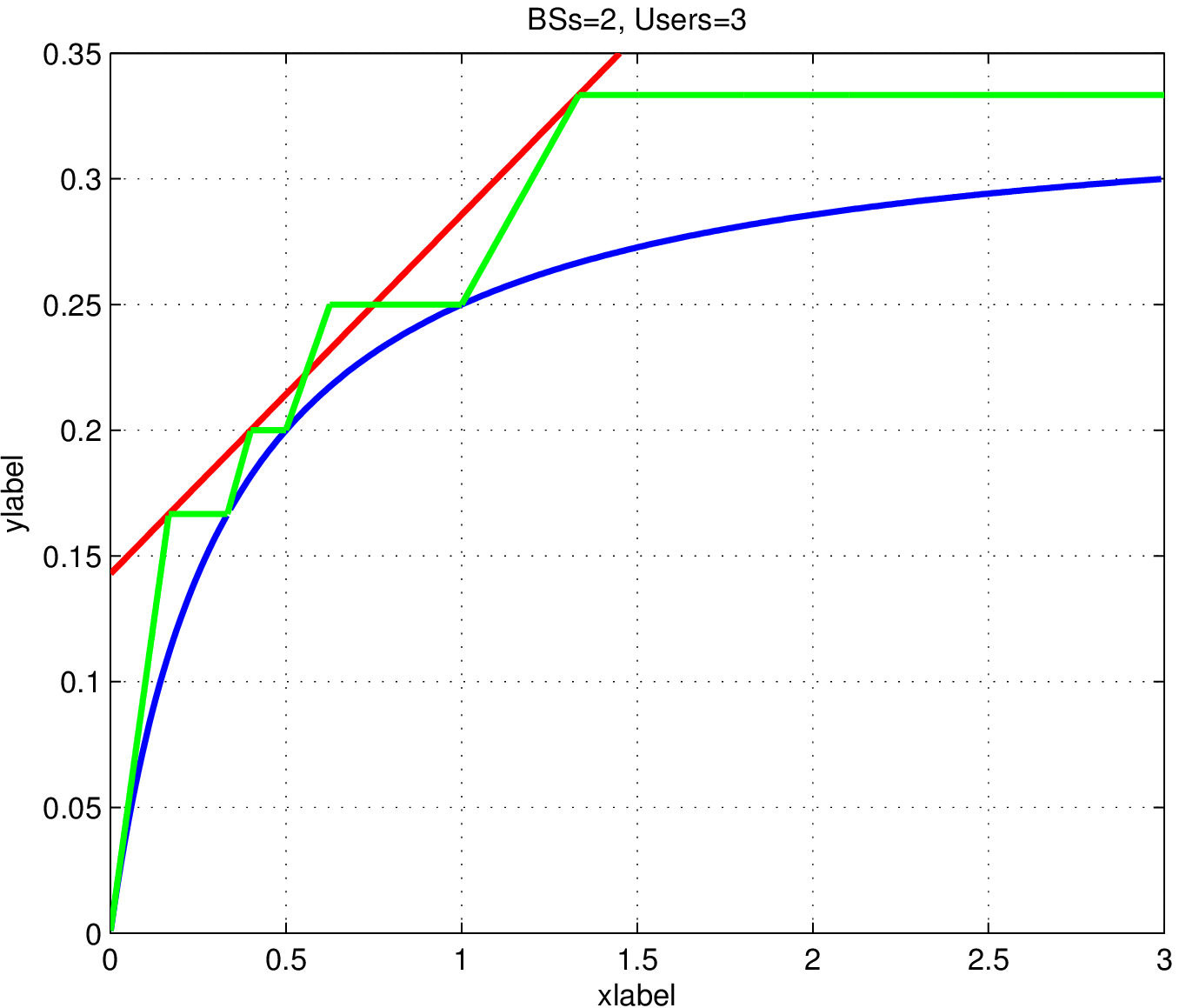}
                \caption{}
                \label{fig:B2U3}
        \end{subfigure}
        ~ 
        \begin{subfigure}[b]{0.3\textwidth}
                \psfrag{xlabel}[tc][tc][0.8]{$M/N$}
		\psfrag{0.05}[tc][tc][0.65]{}
		\psfrag{0.15}[tc][tc][0.65]{}
		\psfrag{0.25}[tc][tc][0.65]{}
		\psfrag{0.35}[tc][tc][0.65]{}
		\psfrag{0.45}[tc][tc][0.65]{}
		\psfrag{0}[tr][tr][0.65]{}
		\psfrag{0.1}[tr][tr][0.8]{0.1}
		\psfrag{0.2}[tr][tr][0.8]{0.2}
		\psfrag{0.3}[tr][tr][0.8]{0.3}
		\psfrag{0.4}[tr][tr][0.8]{0.4}
		\psfrag{0.5}[tr][tr][0.65]{}
		\psfrag{1}[tr][tr][0.8]{1}
		\psfrag{2}[tr][tr][0.8]{2}
		\psfrag{3}[tr][tr][0.8]{3}
		\psfrag{1.5}[tr][tr][0.65]{}
		\psfrag{2.5}[tr][tr][0.65]{}
		\psfrag{ylabel}[Bc][Bl][0.8][180]{ $\begin{matrix} \text{DoF/user/N} \\ \vspace{0.3cm} \end{matrix}$}
		\psfrag{BSs=2, Users=4}[bc][bc][0.9]{2 cells and 4 users/cell}
		\includegraphics[width=2in,height=1.6in]{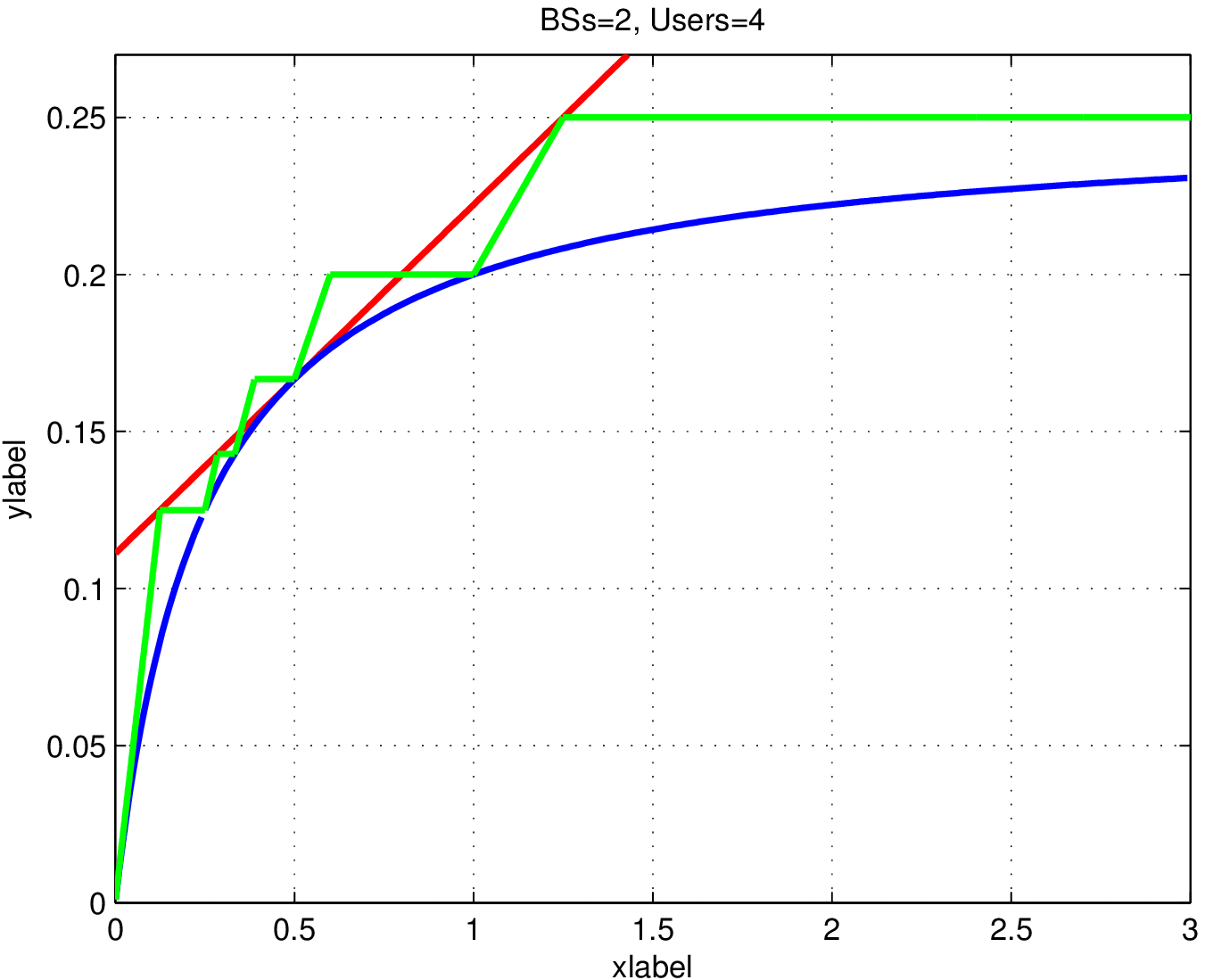}
                \caption{}
                \label{fig:B2U4}
        \end{subfigure}
	~
	\begin{subfigure}[b]{0.3\textwidth}
		\psfrag{xlabel}[tc][tc][0.8]{$M/N$}
		\psfrag{0.02}[tc][tc][0.65]{}
		\psfrag{0.04}[tc][tc][0.65]{}
		\psfrag{0.06}[tc][tc][0.65]{}
		\psfrag{0.08}[tc][tc][0.65]{}
		\psfrag{0.12}[tc][tc][0.65]{}
		\psfrag{0.14}[tc][tc][0.65]{}
		\psfrag{0.16}[tc][tc][0.65]{}
		\psfrag{0.18}[tc][tc][0.65]{}
		\psfrag{0}[tc][tc][0.65]{}
		\psfrag{0.1}[tr][tr][0.8]{0.1}
		\psfrag{0.2}[tr][tr][0.8]{0.2}
		\psfrag{0.3}[tr][tr][0.8]{0.3}
		\psfrag{0.4}[tr][tr][0.8]{0.4}
		\psfrag{0.5}[tr][tr][0.65]{}
		\psfrag{1}[tr][tr][0.8]{1}
		\psfrag{2}[tr][tr][0.8]{2}
		\psfrag{3}[tr][tr][0.8]{3}
		\psfrag{1.5}[tr][tr][0.65]{}
		\psfrag{2.5}[tr][tr][0.65]{}
		\psfrag{ylabel}[][][0.8][180]{ $\begin{matrix} \text{DoF/user/N} \\ \vspace{0.3cm} \end{matrix}$}
		\psfrag{BSs=2, Users=5}[bc][bc][0.9]{2 cells and 5 users/cell}
		\includegraphics[width=2in,height=1.6in]{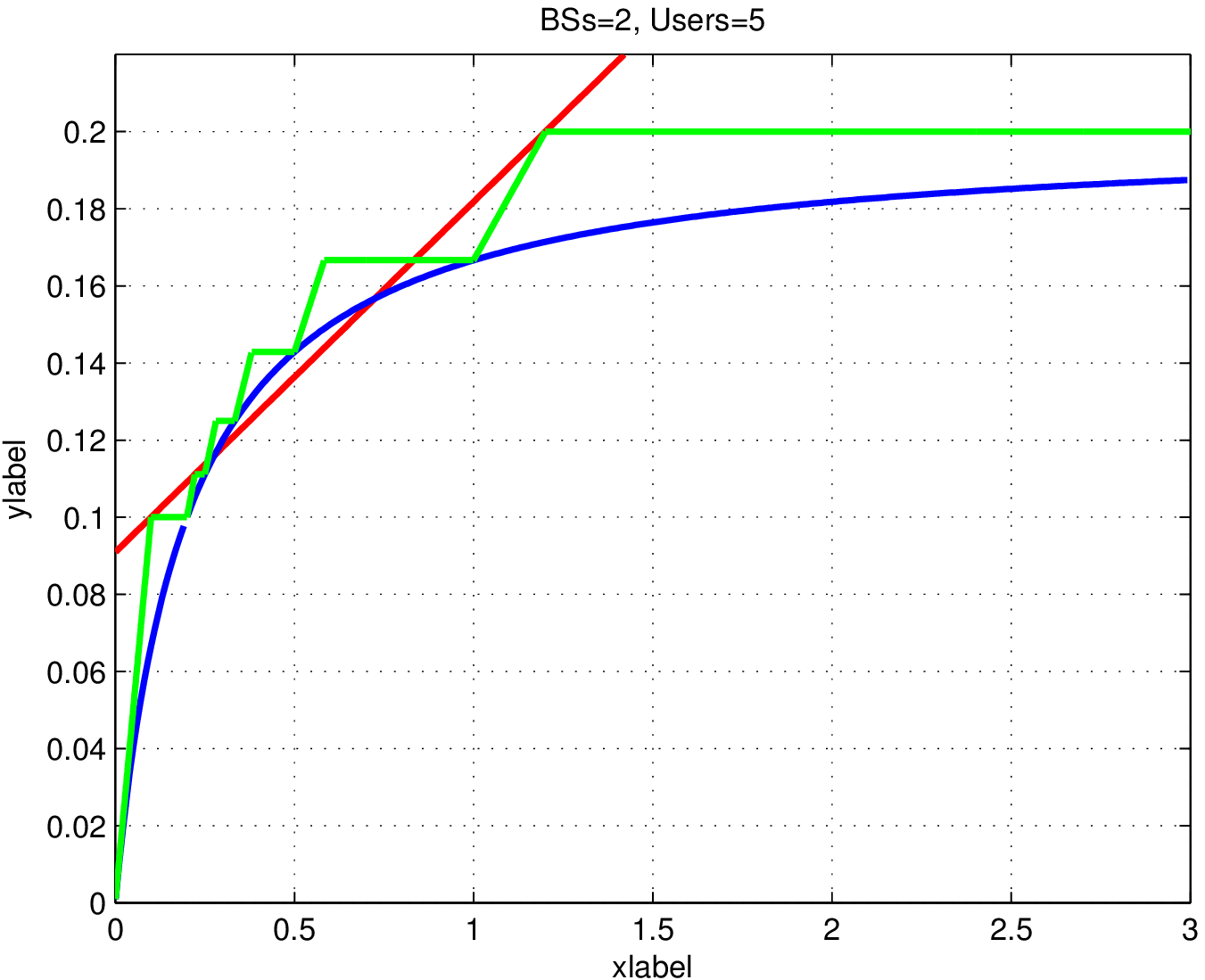}
                \caption{}
                \label{fig:B2U5}
        \end{subfigure}
	~
	\begin{subfigure}[b]{0.3\textwidth}
		\psfrag{xlabel}[tc][tc][0.8]{\color{black}$M/N$}
		\psfrag{0.05}[tr][tr][0.8]{0.05}
		\psfrag{0.01}[tc][tc][0.65]{}
		\psfrag{0.02}[tc][tc][0.65]{}
		\psfrag{0.03}[tc][tc][0.65]{}
		\psfrag{0.04}[tc][tc][0.65]{}
		\psfrag{0.06}[tc][tc][0.65]{}
		\psfrag{0.07}[tc][tc][0.65]{}
		\psfrag{0.08}[tc][tc][0.65]{}
		\psfrag{0.09}[tc][tc][0.65]{}
		\psfrag{0}[tr][tr][0.65]{}
		\psfrag{0.1}[tr][tr][0.8]{0.1}
		\psfrag{0.5}[tr][tr][0.65]{}
		\psfrag{1}[tr][tr][0.8]{1}
		\psfrag{2}[tr][tr][0.8]{2}
		\psfrag{3}[tr][tr][0.8]{3}
		\psfrag{1.5}[tr][tr][0.65]{}
		\psfrag{2.5}[tr][tr][0.65]{}
		\psfrag{ylabel}[Bc][Bl][0.8][180]{ \color{black} $\begin{matrix} \text{DoF/user/N} \\ \vspace{0.3cm} \end{matrix}$}
		\psfrag{BSs=2, Users=10}[bc][bc][0.9]{\color{black}2 cells and 10 users/cell}
		\includegraphics[width=2in,height=1.6in]{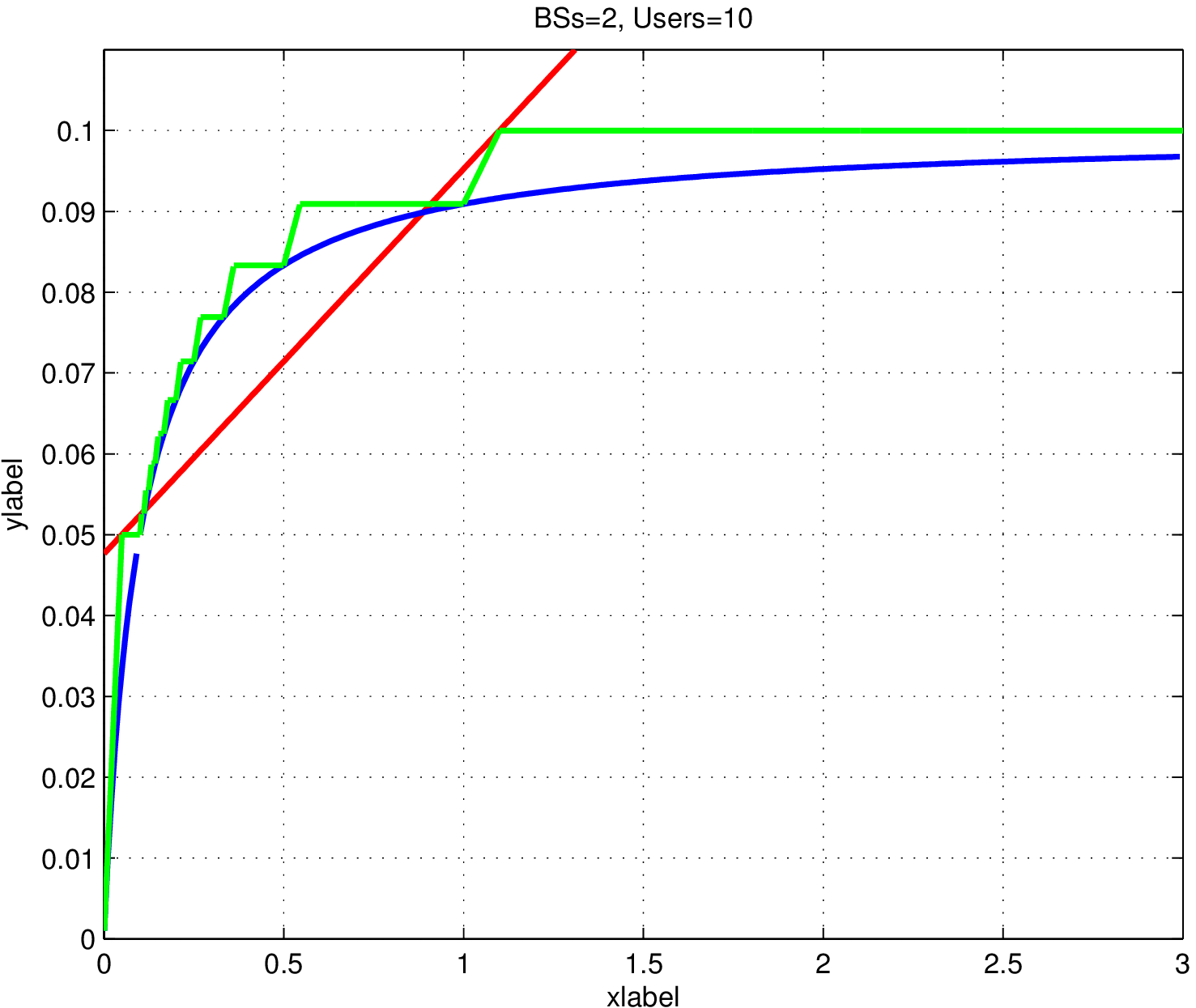}
		\caption{}
		\label{fig:B2U10}

        \end{subfigure}%
	\caption{The proper-improper boundary (red), decomposition inner bound (blue), and the DoF outer bounds (green) for a set of two-cell networks with different number of users per cell. Note the increasing dominance of the decomposition based inner bound as the network size increases.}\label{fig:5plots}
\end{figure*}

\begin{corollary}
Consider a two-dimensional square grid of BSs with $K$ users/cell, $M$ antennas/user, and $N$ antennas/BS, such that each BS interferes only with the four neighboring BSs as shown in Fig. \ref{2dwyner}. When $KM=N$, time sharing between adjacent cells so as to completely avoid interference is a DoF optimal strategy and achieves $N/2K$ DoF/user. 
 \end{corollary}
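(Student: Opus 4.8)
The plan is to establish matching achievability and converse, both of which reduce to the two-cell result in the preceding corollary and Theorem~\ref{optimal_theorem}.

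\emph{Achievability.} Since each cell interferes only with its four grid neighbors, the cell-adjacency graph is the two-dimensional integer grid, which is bipartite. I would two-color the cells in a checkerboard pattern, so that every pair of mutually interfering cells receives opposite colors, and then time-share the two color classes: activate all cells of one color during the first half of the time and all cells of the other color during the second half. Whenever a cell is active all four of its neighbors are silent, so the active cell is interference-free and reduces to an isolated $K$-user MAC (or, by duality, BC) with $M$ antennas per user and $N=KM$ antennas at the BS. Such an isolated cell attains $\min(KM,N)=N$ sum DoF, i.e.\ $N/K$ DoF/user; the one-half duty cycle then yields $N/(2K)$ DoF/user.

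\emph{Converse.} For the matching outer bound the key building block is the pairwise bound $d_A+d_B\le N$ for any two mutually interfering cells $A$ and $B$, where $d_A,d_B$ denote their respective sum DoF. This is exactly the two-cell converse: applying Theorem~\ref{optimal_theorem} to the standalone $(G{=}2,K,M,N)$ network with $KM=N$ gives a per-user outer bound of $MN/(KM+N)=M/2$, which summed over the $2K$ users equals $N$. I would then argue that this bound transfers verbatim to any adjacent pair embedded in the grid: deleting all other cells only removes interference seen by the receivers of $A$ and $B$ (together with the messages of the deleted cells), so it can only enlarge the capacity region of $A$ and $B$; hence their in-grid sum DoF is at most the standalone value $N$.

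\emph{Aggregation.} To turn the pairwise bounds into a per-user statement, I would tile the grid by dominoes, partitioning the cells into disjoint mutually interfering pairs, e.g.\ horizontal dominoes pairing columns $2j{-}1$ and $2j$. Over a large region of $L$ cells this uses $L/2$ dominoes and covers each cell exactly once, so summing $d_A+d_B\le N$ over the tiling gives $\sum_C d_C\le (L/2)N$, i.e.\ DoF/cell $\le N/2$ and DoF/user $\le N/(2K)$, with boundary cells forming a vanishing fraction as $L\to\infty$. Since this matches the achievable value, time sharing is optimal.

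The main obstacle is the converse rather than the achievability: the delicate points are justifying that the standalone two-cell bound remains valid for a pair embedded in an infinite interfering grid (the deletion/genie reduction) and choosing a \emph{disjoint} cover of the cells so that the pairwise constraints add without double counting. Once bipartiteness and a domino tiling are in place, both directions follow immediately.
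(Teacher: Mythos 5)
Your proposal is correct and follows essentially the same route the paper intends for this corollary: checkerboard (bipartite) time sharing reduces each active cell to an isolated $K$-user MAC/BC achieving $N/K$ DoF/user at half duty cycle, and the converse embeds each adjacent pair as a standalone $(G{=}2,K,M,N)$ network with $M/N=1/K\in\mathcal{Q}$, whose $M/2=N/2K$ per-user bound from Theorem~\ref{optimal_theorem} carries over since deleting the remaining cells only enlarges the pair's capacity region. The only difference is your domino-tiling aggregation, which is superfluous for the symmetric (per-user) DoF considered here---a single embedded pair already forces the common per-user DoF $d$ to satisfy $2Kd\le N$---though it does yield the slightly stronger statement that even the average DoF/user is bounded by $N/2K$.
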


\subsection{Insights on the Optimal DoF of MIMO Cellular Networks}
\label{keytakeaways}

When the achievable DoF using decomposition, the outer bounds on the DoF, and the proper-improper boundary are viewed together, an insightful (albeit incomplete) picture of the optimal DoF of MIMO cellular
networks emerges. Fig.~\ref{fig:5plots} plots the normalized DoF/user (DoF/user/N) achieved by the decomposition based approach as a function of the ratio $M/N \ (\gamma)$ along with the outer
bounds derived in Theorem \ref{1byq_bound} for a set of two-cell networks with different number of users/cell. We also plot the proper-improper boundary $(M+N \lessgtr(GK+1)d)$ that acts as an upper
bound on the DoF that can be achieved using linear beamforming (improper systems are almost surely infeasible). Although Fig.~\ref{fig:5plots} only considers two-cell networks, several important insights on general MIMO cellular networks can be inferred and are listed below.\\

\noindent(a) \textit{Two distinct regimes:} Depending on the network parameters $G$, $K$, $M$ and $N$, there are two distinct regimes where decomposition based schemes outperform linear beamforming and vice versa.\\
\noindent (b) \textit{Optimality of decomposition based schemes for large networks:} For large networks, the decomposition based approach is capable of achieving higher DoF than linear beamforming and the range of $\gamma$ over which the decomposition based approach dominates over linear beamforming increases with network size. The outer bounds on the DoF suggest that when the decomposition based inner bound lies above the proper-improper boundary, the inner bound could well be optimal. Fig. \ref{fig:5plots}(e) is particularly illustrative of this observation.\\
\noindent(c) \textit{Importance of linear beamforming for small networks:} For small networks (e.g. two-cell, two-users/cell; two-cell, three-users/cell), the decomposition based inner bound lies below the proper-improper boundary, suggesting that linear beamforming schemes can outperform decomposition based schemes. In the next section, we study the DoF of the two smallest cellular networks and design
a linear beamforming strategy that achieves the optimal DoF of these two networks. In the subsequent section a general technique to design linear beamformers for any cellular network is presented.\\
\noindent (d) \textit{Inadequacy of existing outer bounds:} The outer bounds listed in Theorem \ref{1byq_bound} are not exhaustive, i.e., in some cases, tighter bounds are necessary to establish the optimal DoF. This observation is drawn from Fig.~\ref{fig:5plots}(b), where it is seen that some part of the outer bound lies above both the proper-improper boundary and the decomposition based inner bound 
suggesting that tighter outer bounds may be possible. In the next section, we indeed derive a tighter outer bound for specific two-cell three-users/cell networks.

Motivated by the above observations, we now turn to linear beamforming schemes for MIMO cellular networks.

\section{Linear Beamforming: Structured Design}
\label{SAP}

    Consider a $(G,K,M,N)$ network with the goal of serving each user with $d$ data streams to each user. Using (\ref{eq_uhv}), when no symbol extensions are allowed, the linear beamformers $\bt V_{ij}$ and $\bt U_{ij}$ need to satisfy the following two conditions for linear interference alignment \cite{yetis}:

     \begin{align}
	\label{IAreq1}
	\bt{U}_{ij}^H \bt{H}_{lm,i} \bt{V}_{lm}&=\bt{0} \ \forall \ (i,j)\neq (l,m)\\
	\label{IAreq2}
	\mathrm{rank}(\bt{U}_{ij}^H \bt{H}_{ij,i} \bt{V}_{ij})&=d \ \forall \ (i,j).
     \end{align}
     For a given system, it is not always possible to satisfy the conditions in (\ref{IAreq1}) and (\ref{IAreq2}) and a preliminary check on feasibility is to make sure that the given system is proper
\cite{yetis,zhuangfeasibility}. As mentioned earlier, a $(G,K,M,N)$ network with $d$ DoF/user is said to be proper if $M+N\geq(GK+1)d$ and improper otherwise
\cite{zhuangfeasibility}. While not all proper systems are feasible, improper systems have been shown to be almost surely infeasible \cite{razaviyayn,tingtingliu}. For proper-feasible systems, solving the system of bilinear equations (\ref{IAreq1}) typically requires the use of iterative algorithms such as those developed in \cite{gomadam,peters2009,dimakis,gokulICASSP}. In certain cases where $\max{M,N}\geq GKd$, it is possible to solve the system of bilinear equations by randomly choosing either the receive beamformers $\{\bt U_{ij}\}$ or the transmit beamformers $\{ \bt V_{ij} \}$ and then solving the resulting linear system of equations.

Assuming the channels to be generic allows us to restate the conditions in (\ref{IAreq1}) and (\ref{IAreq2}) in a manner that is more useful in developing DoF optimal linear beamforming schemes. Since direct channels do not play a role in (\ref{IAreq1}), the condition in (\ref{IAreq2}) is automatically satisfied whenever $\bt{U}_{ij}$ and $\bt{V}_{ij}$ have rank $d$ and whenever the channels are generic \cite{yetis}. As a further consequence of channels being generic, satisfying (\ref{IAreq1}) is equivalent to the condition that the set of uplink transmit beamformers $\{\bt{V}_{ij} \}$ is such that there are at least $d$ interference-free dimensions at each receiver before any linear processing. In essence, generic channels ensure that at each BS, the intersection between useful signal subspace (span([$\bt H_{i1,i}\bt V_{i1}, \bt H_{i2,i}\bt V_{i2},\hdots,\bt H_{iK,i}\bt V_{iK}$]) and interference subspace (span($\bt R_i$)) is almost surely zero dimensional, provided that the rank$(\bt{R}_i) \leq (N -Kd)  \ \forall i$. Thus the requirements for interference alignment can be
alternately stated as
    \begin{align}
	\label{altIAreq1}
	\mathrm{rank}(\bt{R}_i) \leq N-Kd \ \forall \ i ,\\
	\label{altIAreq2}
	\mathrm{rank}(\bt{V}_{jl})=d \ \forall \ j,l .
    \end{align}
The rank constraint in (\ref{altIAreq1}) essentially requires the $(G-1)Kd$ column vectors of $\bt R_i$ to satisfy $L=GKd-N$ distinct linear vector equations. Given a set of transmit precoders $\{\bt{V}_{jl} \}$ that satisfy the above conditions, designing the receive filters is then straightforward.

This alternate perspective on interference alignment lends itself to counting arguments that account for the number of dimensions at each BS occupied by signal or interference. These counting arguments in turn lead to the development of DoF-optimal linear beamforming strategies such as the subspace alignment chains for the 3-user interference channel \cite{chenweiwang}.

In this section we take a structured approach to constructing the $L$ distinct linear vector equations that need to satisfy (\ref{altIAreq1}) and (\ref{altIAreq2}). Such an approach is DoF-optimal for small networks such as the two-cell two-user/cell and the two-cell, three-user/cell networks.

\subsection{Main Results}

We consider two of the simplest cellular networks, namely the two-cell two-user/cell and the two-cell, three-user/cell networks, and establish a linear beamforming strategy that achieves the optimal symmetric DoF. In particular, we establish the spatially-normalized DoF of these two networks for all values of the ratio $\gamma= M/N$. The spatially-normalized DoF of a network is defined as follows \cite{chenweiwang}.
    \begin{definition}
    Denoting the DoF/user of a $(G,K,M,N)$ cellular network as $\text{DoF}(M,N)$, the spatially-normalized DoF/user is defined as
    \begin{equation}
    \text{sDoF}(M,N)=\max_{q\in \mathcal{Z}^+} \frac{\text{DoF}(qM,qN)}{q}.
    \end{equation}
    \end{definition}
Analogous to frequency and time domain symbol extensions, the definition above allows us to permit extensions in space, i.e., adding antennas at the transmitters and receivers while maintaining the ratio $M/N$ to be a constant. Unlike time or frequency extensions where the resulting channels are block diagonal, spatial extensions assume generic channels with no additional structure. The lack of any structure in the channel obtained through space extensions makes it significantly easier to analyze the network.

\begin{figure*}[ht]
      \begin{center}
	\hspace{-0.5cm}
	\psfrag{ya}[Br][Bc][1]{$0$}
	\psfrag{yb}[Br][Br][0.8]{$1/4$}
	\psfrag{yc}[Br][Br][0.8]{$1/3$}
	\psfrag{yd}[Br][Br][0.8]{$1/2$}
	\psfrag{x1}[tc][tc][1]{$0$}
	\psfrag{x2}[tc][tc][1]{$\frac{1}{4}$}
	\psfrag{x3}[tc][tc][1]{$\frac{1}{2}$}
	\psfrag{x4}[tc][tc][1]{$\frac{2}{3}$}
	\psfrag{x5}[tc][cc][1]{$1$}
	\psfrag{x6}[tc][tc][1]{$\frac{3}{2}$}
	\psfrag{2/3}[tc][tc][1]{$\frac{2}{3}$}
	\psfrag{3/4}[tc][tc][1]{$\frac{3}{4}$}
	\psfrag{4/3}[tc][tc][1]{$\frac{4}{3}$}
	\psfrag{1M}[cc][cc][1]{$\frac{N}{2}$}
	\psfrag{2M}[cc][cc][1]{$\frac{M}{3}$}
	\psfrag{3M}[cc][cc][1]{$\frac{N}{3}$}
	\psfrag{4M}[cc][cc][1]{$\frac{M}{2}$}
	\psfrag{5M}[cc][cc][1]{$\frac{N}{4}$}
	\psfrag{6M}[cc][cc][0.8]{$M$}
	\psfrag{pl}{$\frac{\gamma+1}{5}$}
	\psfrag{t1}[Bl][Bl][1]{$\frac{\gamma+1}{5}$}
	\psfrag{t3}[Bl][Bl][1]{proper-improper boundary}
	\psfrag{t2}[Bl][Bl][1]{$\frac{\gamma}{2\gamma+1}$}
	\psfrag{title}[Bl][Bl][1]{}
	\psfrag{xlabel}[tc][tc][1]{$\gamma \ (M/N)$}
	\psfrag{ylabel}[Bc][Bc][0.9][180]{Normalized sDoF/user}
	\psfrag{pip}[Bl][Bl][0.9]{}
	\psfrag{pip2}[Bl][Bl][0.9]{}
	\psfrag{proper-improper boundary}[Bl][Bl][0.8]{Proper-improper boundary }
	\psfrag{optimal sDoF}[Bl][Bl][0.75]{Optimal sDoF}
	\psfrag{decomposition based inner bound}[Bl][Bl][0.78]{Decomposition based inner bound}
	\psfrag{BSs=2, Users=2}[Bl][Bl][0.9]{}
	\includegraphics[width=5.4in]{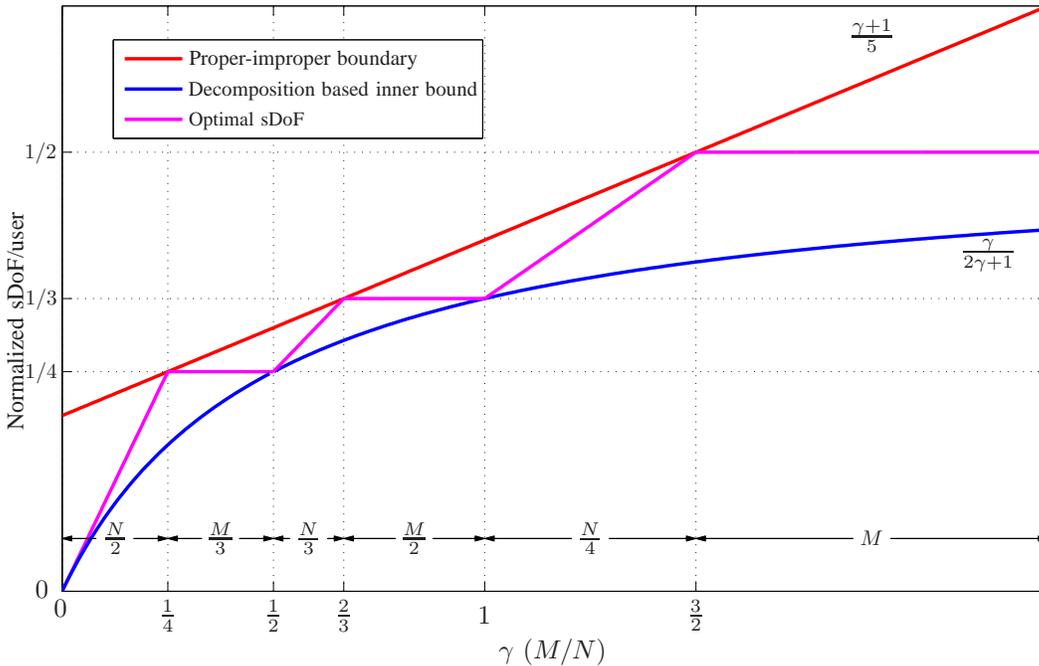}
	\caption{The sDoF/user (normalized by $N$) of a 2-cell, 3-user/cell MIMO cellular network as a function of $\gamma$. }
	\label{fig_22}
      \end{center}
    \end{figure*}

    \begin{figure*}[ht]
      \begin{center}
	\hspace{-0.5cm}
\psfrag{0y}[Br][Bc][1]{$0$}
	\psfrag{1y/5}[Br][Br][0.8]{$1/5$}
	\psfrag{2y/9}[Br][Br][0.8]{$2/9$}
	\psfrag{1y/4}[Br][Br][0.8]{$1/4$}
	\psfrag{1y/6}[Br][Br][0.8]{$1/6$}
	\psfrag{1y/3}[Br][Br][0.8]{$1/3$}
\psfrag{0x}[tc][cc][1]{$0$}
	\psfrag{1}[tc][cc][1]{$1$}
	\psfrag{1/3}[tc][tc][1]{$\frac{1}{3}$}
	\psfrag{1/2}[tc][tc][1]{$\frac{1}{2}$}
	\psfrag{1/6}[tc][tc][1]{$\frac{1}{6}$}
	\psfrag{5/9}[tc][tc][1]{$\frac{5}{9}$}
	\psfrag{2/5}[tc][tc][1]{$\frac{2}{5}$}
	\psfrag{2/3}[tc][tc][1]{$\frac{2}{3}$}
	\psfrag{3/4}[tc][tc][1]{$\frac{3}{4}$}
	\psfrag{4/3}[tc][tc][1]{$\frac{4}{3}$}
\psfrag{2}[Br][Br][0.8]{}
\psfrag{1M}[cc][cc][0.8]{$M$}
\psfrag{2M}[cc][cc][0.8]{$\frac{N}{6}$}
\psfrag{3M}[cc][cc][0.8]{$\frac{M}{2}$}
\psfrag{4M}[cc][cc][0.8]{$\frac{N}{5}$}
\psfrag{5M}[cc][cc][0.8]{$\frac{2M}{5}$}
\psfrag{6M}[cc][cc][0.8]{$\frac{2N}{9}$}
\psfrag{7M}[cc][cc][0.8]{$\frac{M}{3}$}
\psfrag{8M}[cc][cc][0.8]{$\frac{N}{4}$}
\psfrag{9M}[cc][cc][0.8]{$\frac{M}{4}$}
\psfrag{10M}[cc][cc][0.8]{$\frac{N}{3}$}

	\psfrag{pl}{$\frac{\gamma+1}{5}$}
	\psfrag{t1}[Bl][Bl][1]{$\frac{\gamma+1}{7}$}
	\psfrag{t3}[Bl][Bl][1]{proper-improper boundary}
	\psfrag{t2}[Bl][Bl][1]{$\frac{\gamma}{3\gamma+1}$}
	\psfrag{title}[Bl][Bl][1]{}
	\psfrag{xlabel}[][][1]{\raisebox{0cm}{$\gamma \ (M/N)$}}
	\psfrag{ylabel}[Bc][Bc][0.85][180]{Normalized sDoF/user}
	\psfrag{pip}[Bl][Bl][0.9]{}
	\psfrag{pip2}[Bl][Bl][0.9]{}
	\psfrag{proper-improper boundary}[Bl][Bl][0.9]{Proper-improper boundary }
	\psfrag{optimal sDoF}[Bl][Bl][0.9]{Optimal sDoF}
	\psfrag{decomposition based inner bound}[Bl][Bl][0.9]{Decomposition based inner bound}
	\psfrag{abcdefghijkl}[Bl][Bl][0.9]{}
	\includegraphics[width=7in]{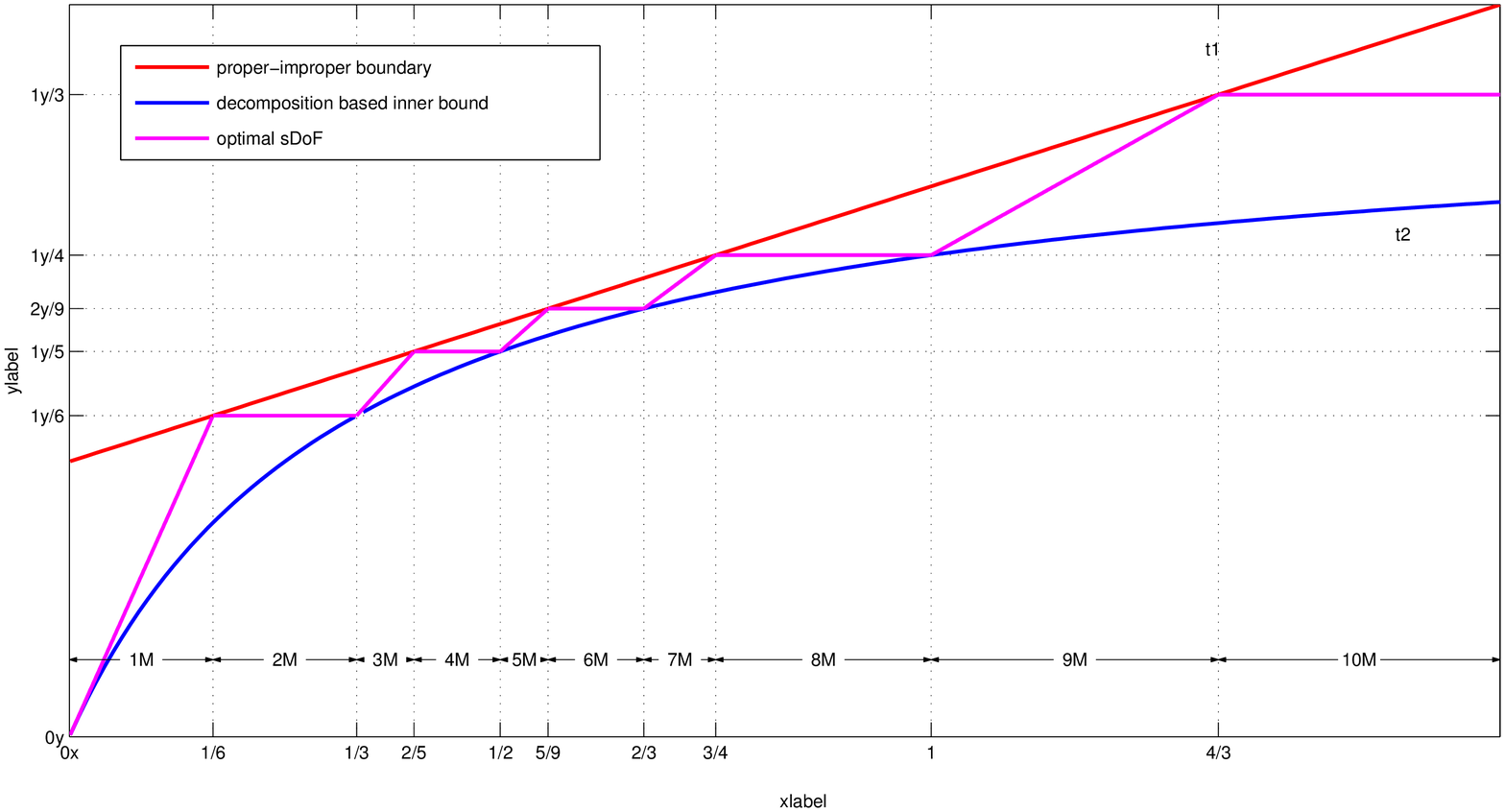}
	\caption{The sDoF/user (normalized by $N$) of a 2-cell, 3-user/cell MIMO cellular network as a function of $\gamma$. }
	\label{fig_23}
      \end{center}
    \end{figure*}
We now present the main results concerning the sDoF of the two cellular networks under consideration. 

Let the function $f_{(\omega,K)} (\cdot )$ be defined as 
\begin{align}
 f_{(\omega,K)}(M,N)=\max \left(\frac{N\omega}{K\omega+1},\frac{M}{K\omega+1} \right),
\end{align}
where $\omega \geq 0$ and $K\in \mathbb{Z}^+$. Further, define the function $D_{(2,2)}(\cdot)$ to be
\begin{align}
D_{(2,2)}(M,N)=&\min \big (N,KM,f_{(\frac{1}{2},2)}(M,N),f_{(1,2)}(M,N) \big ),
\end{align}
 and the function $D_{(2,3)}(\cdot)$ to be
\begin{align}
D_{(2,3)}(M,N)=&\min \big (N,KM,f_{(\frac{1}{3},3)}(M,N),f_{(\frac{1}{2},3)}(M,N), \nonumber \\
&\hspace{1.5cm}f_{(\frac{2}{3},3)}(M,N),f_{(1,3)}(M,N) \big ) .
\end{align}

    The following theorem characterizes an outer bound on the DoF/user of the two-cell two-user/cell network and the two-cell three-user/cell network.
    \begin{theorem}
     The DoF/user of a two-cell, K-user/cell MIMO cellular network with $K \in \{2,3\}$, having $M$ antennas per user and $N$ antennas per BS is bounded above by $D_{(2,K)}(M,N)$, i.e., 
     \begin{equation}
      \text{DoF/user}\leq D_{(2,K)}(M,N).
     \end{equation}
    \label{outerbound_theorem}
    \vspace{-3mm}
    \end{theorem}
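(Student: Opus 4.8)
The plan is to prove the bound one term at a time: since $D_{(2,K)}$ is a minimum of several quantities, it suffices to show that $\mathrm{DoF/user}$ is bounded above by each of them separately. I would sort the terms into three groups. The two per-node terms $N$ and $KM$ are elementary: every user communicates with its BS over a point-to-point MIMO link of at most $\min(M,N)$ spatial dimensions, so $\mathrm{DoF/user}\le \min(M,N)\le N$, and likewise $\mathrm{DoF/user}\le M\le KM$. These require no alignment argument and dispose of the first two terms for both $K=2$ and $K=3$.

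The terms $f_{(\omega,K)}$ with $\omega\in\mathcal{Q}$ follow directly from Theorem~\ref{1byq_bound}. For $G=2$ the set $\mathcal{Q}$ reduces to $\{1/q:q=1,\dots,K\}$ (only $p=1$ is allowed, since $p\le G-1$), i.e.\ $\omega\in\{1,\tfrac{1}{2}\}$ for $K=2$ and $\omega\in\{1,\tfrac{1}{2},\tfrac{1}{3}\}$ for $K=3$. For any such $\omega=p/q$, Theorem~\ref{1byq_bound} gives $\mathrm{DoF/user}\le \tfrac{Np}{Kp+q}$ when $M/N\le p/q$ and $\mathrm{DoF/user}\le\tfrac{Mq}{Kp+q}$ when $M/N\ge p/q$; in either regime $\mathrm{DoF/user}\le\max\!\big(\tfrac{Np}{Kp+q},\tfrac{Mq}{Kp+q}\big)=f_{(\omega,K)}(M,N)$. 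This accounts for $f_{(\frac{1}{2},2)},f_{(1,2)}$ in $D_{(2,2)}$ and for $f_{(\frac{1}{3},3)},f_{(\frac{1}{2},3)},f_{(1,3)}$ in $D_{(2,3)}$, so the $K=2$ case is already complete and only the single term $f_{(\frac{2}{3},3)}$ remains.

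The crux is therefore the term $f_{(\frac{2}{3},3)}(M,N)=\max\!\big(\tfrac{2N}{9},\tfrac{M}{3}\big)=\tfrac{1}{9}\max(3M,2N)$ for the two-cell three-user network, i.e.\ the claim $9\,\mathrm{DoF/user}\le\max(3M,2N)$. This cannot come from Theorem~\ref{1byq_bound}, because $\omega=\tfrac{2}{3}$ corresponds to ``$p=2$,'' which is unavailable for $G=2$; in fact, applying Lemma~\ref{dofx_lemma} with both BSs as a cooperating receiver and any three users as a cooperating transmitter forces the overlap $d_{a,b}=3\,\mathrm{DoF/user}$ and yields only $6\,\mathrm{DoF/user}\le\max(3M,2N)$, i.e.\ $\mathrm{DoF/user}\le N/3$ at $M/N=\tfrac{2}{3}$, strictly weaker than the target $2N/9$. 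A genuinely new converse is needed. The guiding heuristic is the packing ratio: for $M/N\le\tfrac{2}{3}$ the interference of the three co-cell users at the neighbouring BS cannot be packed better than $2\!:\!1$, so their $3d$ streams occupy at least $3d/2$ dimensions, and requiring the BS's own $3d$ streams to coexist gives $3d+3d/2\le N$, i.e.\ $d\le 2N/9$; the $M/3$ branch then covers $M/N\ge\tfrac{2}{3}$.

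To turn this heuristic into a rigorous bound I would use a genie-aided argument: hand each BS a carefully chosen portion of the interfering cell's messages (equivalently, a suitable number of received dimensions at the other BS) so that, after decoding its own users and stripping the known interference, the two BSs jointly account for signal dimensions totalling $9d$ against a resource of $\max(3M,2N)$. The main obstacle is precisely this step: naive dimension counting is \emph{not} a valid converse, since asymptotic interference alignment can defeat it, so the real work is to choose the genie side information so that the $2\!:\!1$ packing limit is enforced as an entropy/cut-set inequality valid for all schemes, linear and asymptotic alike. Once $f_{(\frac{2}{3},3)}$ is established in this way, combining it with the already-proven terms gives $\mathrm{DoF/user}\le D_{(2,3)}$, completing the theorem.
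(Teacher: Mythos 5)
Your overall decomposition is sound and matches the paper's logic: the terms $N$ and $KM$ are disposed of trivially, every $f_{(\omega,K)}$ with $\omega\in\mathcal{Q}$ follows from Theorem~\ref{1byq_bound} (and you correctly compute $\mathcal{Q}=\{1,\tfrac12\}$ for $K=2$ and $\{1,\tfrac12,\tfrac13\}$ for $K=3$, since $p\le G-1=1$), and you correctly isolate $f_{(\frac{2}{3},3)}=\max\big(\tfrac{2N}{9},\tfrac{M}{3}\big)$ as the one term that no $X$-network argument can deliver — your side calculation showing that Lemma~\ref{dofx_lemma} with the two BSs cooperating only gives $\max\big(\tfrac{M}{2},\tfrac{N}{3}\big)$ is correct and is exactly why the paper introduces a new bound there. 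This is the same architecture as the paper's proof, which handles all regimes of $\gamma$ by the Section~\ref{outerbounds} bounds or cell-cooperation (MAC/BC) bounds, except $\tfrac59\le\gamma\le\tfrac34$.

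The genuine gap is that for that remaining regime you have only a plan, not a proof, and the plan is precisely the hard part of the theorem. You correctly observe that the packing-ratio heuristic ($3d+\tfrac{3d}{2}\le N$) is not a converse and must be converted into an entropy inequality via genie side information, but you stop there. The paper's Appendix~\ref{genieouterboundB} is where all the work happens: it first applies invertible linear transformations at every node so that each BS's antennas split into groups of sizes $N-M$, $2M-N$, $N-M$, each group zero-forcing one in-cell user (this change of basis is what makes the $2\!:\!1$ packing limit visible to an entropy argument); it then chooses \emph{paired} genie signals (e.g.\ $\mathcal{S}_1=\{\tilde{\bt x}^n_{i2},\tilde{\bt x}^n_{i11:i1(2M-N)}\}$ and $\mathcal{S}_2=\{\tilde{\bt x}^n_{i3},\tilde{\bt x}^n_{i1(2M-N+1):i1M}\}$) so that the residual differential-entropy terms $+h(\cdot)$ and $-h(\cdot)$ cancel when the two resulting Fano bounds are added, after which symmetrizing over the two cells yields $3nR_{sum}\le 4nN\log\rho+o(\cdot)$, i.e.\ $d\le\tfrac{2N}{9}$, with a separate construction giving $d\le\tfrac{M}{3}$ for $\tfrac23\le\gamma\le\tfrac34$; and it repeats the entire argument in the downlink, because an information-theoretic converse enjoys no uplink--downlink duality — a requirement your sketch does not mention. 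Without constructing the transformations, the genie signals, and the cancellation structure, the bound $f_{(\frac{2}{3},3)}$ — and hence the theorem for $K=3$ — remains unproven. (A minor remark on the easy terms: the paper actually derives the stronger per-user bound $\min\big(M,\tfrac{N}{K}\big)$ from two-cell cooperation, which is what is needed for the extreme regimes of $\gamma$ to make the bound tight; your weaker $\min(N,KM)$ suffices only for the theorem's literal statement.)
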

    Note that since this outer bound is linear in either $M$ or $N$, this bound is invariant to spatial normalization and hence is also a bound on sDoF and not just DoF. The outer bounds for the two-cell, two-user/cell case follows directly from either the bounds established in Section \ref{outerbounds} (for $1/4 \leq \gamma \leq 3/2$) or through DoF bounds on the multiple-access/broadcast channel (MAC/BC) obtained by letting the two cells cooperate (for $ \gamma \leq 1/4)$ and $\gamma \geq 3/2$). In the case of the two-cell, three-user/cell network, the bounds when $\gamma \leq 1/6$ or $\gamma \geq 4/3$ follow from DoF bounds on the MAC/BC obtained by letting the two cells cooperate, while the bounds when $1/6 \leq \gamma \leq 5/9$ and $3/4 \leq \gamma \leq 4/3$ follow from the bounds established in Section \ref{outerbounds}. When $5/9 \leq \gamma \leq 3/4$, we derive a new set of genie-aided outer bounds on the DoF. Our approach to deriving these new bounds is similar to the 
approach taken in \cite{chenweiwang} and the exact details of this derivation are presented in Appendix \ref{genieouterboundB}.  

    The outer bound presented in the previous theorem turns out to be tight. The main theorem of this section is a characterization of the sDoF/user of the two-cell, two-or-three-user/cell MIMO cellular network. The proof of achievability is deferred to the next section.

    \renewcommand{\thesection}{\arabic{section}}
    \begin{theorem}
     The spatially-normalized DoF of a 2-cell, $K$-user/cell cellular network with $K \in \{2,3\}$, having $M$ antennas per user and $N$ antennas per BS is given by
    \begin{align}
    \text{sDoF/user}=D_{2,K}(M,N).
    \end{align}
    \label{sDoF_theorem}
     \vspace{-3mm}
    \end{theorem}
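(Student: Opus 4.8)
The claim is an exact characterization, so it splits into a converse $\text{sDoF/user}\le D_{(2,K)}(M,N)$ and a matching achievability. The converse is immediate: Theorem~\ref{outerbound_theorem} already gives $\text{DoF/user}\le D_{(2,K)}(M,N)$, and each constituent of $D_{(2,K)}$ (namely $N$, $KM$, and each $f_{(\omega,K)}$, the last being a maximum of two linear functions) is positively homogeneous of degree one, so $D_{(2,K)}(qM,qN)=q\,D_{(2,K)}(M,N)$. Hence $\text{DoF}(qM,qN)/q\le D_{(2,K)}(M,N)$ for every $q$, and taking the maximum over $q$ yields $\text{sDoF/user}\le D_{(2,K)}(M,N)$. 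The entire content of the proof is therefore achievability, and my plan is to exhibit, for each ratio $\gamma=M/N$, a linear beamforming scheme over a suitable spatial extension that meets this bound.

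The organizing observation is that $D_{(2,K)}$ is a continuous, piecewise-linear function of $\gamma$ whose non-smooth points are exactly the $\gamma$-values at which two of its constituents coincide (the inter-$f$ crossovers, and the internal kink of each $f_{(\omega,K)}$ where its $N$-term $\tfrac{N\omega}{K\omega+1}$ equals its $M$-term $\tfrac{M}{K\omega+1}$). On each segment a single antenna parameter is the bottleneck: either $N$, with target $\tfrac{N\omega}{K\omega+1}$ independent of $M$, or $M$, with target $\tfrac{M}{K\omega+1}$ independent of $N$. Because the sDoF is defined through spatial extensions, it is enough to construct, at a scaled point $(qM,qN)$ for a suitable integer $q$, a scheme delivering $q$ times the target value; and since adding antennas never decreases the achievable DoF, establishing this at each breakpoint covers the whole segment, with two neighbouring corner constructions combined by stream-sharing when a segment interpolates between two alignment structures. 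I would first enumerate the breakpoints of $D_{(2,2)}$ and $D_{(2,3)}$ and tag each with the value of $\omega$ whose $f_{(\omega,K)}$ is active there.

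The core is the beamformer construction on a segment controlled by $\omega$, for which the relevant \emph{packing ratio} is $\pi=K\omega$: a group of $\pi$ single-stream uplink beamformers drawn from $\pi$ distinct users of the interfering cell is forced to cast its channel image onto a single dimension at the far BS (for a non-integer $\pi=r/s$ one instead packs $r$ such beamformers into $s$ dimensions). After choosing $q$ so that $d$ and the group sizes are integers, I would (i) partition the $Kd$ interfering streams seen at each BS into such groups; (ii) within a group impose alignment by solving the homogeneous system $\bt{H}_{(j1,i)}\bt{v}_1=\dots=\bt{H}_{(j\pi,i)}\bt{v}_\pi$, which is $(\pi-1)N$ equations in $\pi M$ unknowns and has a nontrivial solution space whenever $\gamma>1-\tfrac{1}{\pi}$---a threshold lying at or below the left endpoint of the segment, so alignment is feasible throughout; and (iii) verify through the alternate conditions (\ref{altIAreq1})--(\ref{altIAreq2}) that the aggregate interference has rank at most $N-Kd$ at each BS while every $\bt{V}_{jl}$ keeps rank $d$. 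The $N$-bottleneck part of a segment is where this alignment is pushed as hard as the antennas allow, so that interference exactly fills $N-Kd$ dimensions; the $M$-bottleneck part is where each user instead exhausts all $M$ of its own dimensions and only mild alignment is required.

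The step I expect to be the main obstacle is discharging condition (\ref{altIAreq2}) simultaneously with (\ref{altIAreq1}): pushing the packing ratio all the way to $\pi=K\omega$ maximally constrains the beamformers, and one must ensure that these alignment constraints do not inadvertently collapse the desired signal space at the home BS, i.e.\ that the $Kd$ desired images together with the aligned interference stay in general position. I would handle this by a genericity argument---exhibiting a single (possibly symbolic) channel realization for which the relevant signal-plus-interference matrix at each BS is full rank, so that by the usual determinant/Zariski argument the property holds for almost all generic channels. The most delicate instance is the $K=3$, $\tfrac{5}{9}\le\gamma\le\tfrac{3}{4}$ segment, where the controlling value $\omega=\tfrac{2}{3}$ (packing ratio $2\!:\!1$) lies outside the set $\mathcal{Q}$ reachable through Theorem~\ref{1byq_bound}: there the matching converse must come from the genie-aided bound derived in Appendix~\ref{genieouterboundB}, and the achievability requires the tailored two-into-one alignment above, extended to the rest of the segment by sharing with the constructions at the neighbouring corners.
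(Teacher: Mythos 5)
Your overall architecture coincides with the paper's: the converse is exactly Theorem~\ref{outerbound_theorem} plus degree-one homogeneity of $D_{(2,K)}$ (the paper makes the identical remark immediately after that theorem), and achievability is packing-ratio beamforming over spatial extensions, with the correct ratios $\pi=K\omega$ and feasibility thresholds $\gamma>1-\tfrac{1}{\pi}$. Your reduction to breakpoint constructions plus antenna monotonicity is in fact a legitimate simplification of the paper's Appendix~\ref{finerdetails}, which instead builds a scheme at every interior $\gamma$ by mixing two packing ratios. One caveat: your ``two neighbouring corner constructions combined by stream-sharing'' is the wrong patch for interior points. Time-sharing the two corner schemes strictly undershoots on every $M$-bottleneck segment (e.g.\ for $K=3$, $2/3<\gamma<3/4$, sharing the $\gamma=2/3$ and $\gamma=3/4$ schemes gives $M-t(M-\tfrac{2N}{3})<M$ DoF/cell for $t>0$), and splitting antennas spatially between the two structures fails naively because a user's interference occupies generic dimensions at the other BS, not just the sub-block assigned to its structure --- this is exactly why the paper needs its mixed construction with the projection matrices $\bt W_{lm}$. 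What does work, and all you need, is a single corner per segment: the right corner plus discarding BS antennas on $M$-bottleneck segments, the left corner plus discarding user antennas on $N$-bottleneck segments.

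The genuine gap is the step you yourself flag as the main obstacle: verifying (\ref{altIAreq2}), i.e.\ that the per-user beamformers extracted from the alignment systems are linearly independent. Because the beamformers are not free variables but channel-dependent solutions of $\bt H_{(j1,i)}\bt v_1=\cdots=\bt H_{(j\pi,i)}\bt v_\pi$, full rank of $\bt V_{jl}$ and of the signal-plus-interference matrices is a polynomial-identity claim about rational functions of the channel entries; the Zariski/Schwartz--Zippel argument you invoke closes it only once a witness realization, or a structural proof of non-vanishing, is actually produced --- and your proposal never produces one. This is precisely the content the paper supplies: the contradiction argument of Appendix~\ref{finerdetails} (Case iii of the two-cell three-user network) showing via (\ref{eq_lindep}) that linear independence of the stacked null-space solutions forces linear independence of each user's extracted beamformers, using $2M<N$ in that regime; the counting condition $2(2M-N)<M$ for the $2\!:\!1$ pairs; and the $\bt W_{lm}$ projections when two ratios are combined. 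Nor is the witness a formality that genericity hands you for free: the paper's Section~\ref{USAP} shows numerically that for some proper tuples $(G,K,M,N,d)$ the analogous determinant polynomial \emph{is} identically zero (the ``unsuccessful'' trials), which is the entire reason the polynomial identity test exists. So non-vanishing must genuinely be argued regime by regime, and without that argument your step (iii) is an assertion rather than a proof --- exactly where the difficulty of the theorem is concentrated.
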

    \renewcommand{\thesection}{\Roman{section}}
    This result states that when spatial-extensions are allowed, the outer bound presented in Theorem \ref{outerbound_theorem} is tight. The achievability part of the result in Theorem \ref{sDoF_theorem} is based on a linear beamforming strategy developed using the notion of packing ratios. We elaborate further on this scheme in the next subsection.
    
    Figs. \ref{fig_22} and \ref{fig_23} capture the main results presented in the above theorems and plot sDoF/user normalized by $N$ as a function of $\gamma$. It can be seen in both the figures that, just as in the 3-user interference channel \cite{chenweiwang}, there is an alternating behavior in the sDoF with either $M$ or $N$ being the bottleneck for a given $\gamma$.

    The figures also plot the boundary separating proper systems from improper systems. It is seen from the two figures that not all proper systems are feasible. For example, for the two-cell three-users/cell case, networks with $\gamma$ $\in$ $\{1/6,2/5,5/9,3/4,4/3\}$ are the only ones on the proper-improper boundary that are feasible.
    
    For the two-cell two-users/cell network, we can see from Fig. \ref{fig_22} that when $\gamma$ $\in$ $\{1/4,2/3,3/2\}$, neither $M$ nor $N$ has any redundant dimensions, and decreasing either of them affects the sDoF. On the other hand, when $M/N$ $\in$ $\{1/2,1\}$, both $M$ and $N$ have redundant dimensions, and some dimensions from either $M$ or $N$ can be sacrificed without losing any sDoF. For all other cases, only one of $M$ or $N$ is a bottleneck. Similar observations can also be made for the 2-cell 3-users/cell network from Fig. \ref{fig_23}.

    Figs. \ref{fig_22} and \ref{fig_23} also plot the achievable DoF using the decomposition based approach. Interestingly, the only cases where the decomposition based inner bound achieves the optimal sDoF is when both $M$ and $N$ have redundant dimensions i.e., $\gamma$ $\in$ $\{1/2,1\}$ in the case of the two-cell, two-user/cell network and when $\gamma$ $\in$ $\{1/3,1/2,2/3,1\}$ in the case of the two-cell, three-user/cell network.

\subsubsection{Packing Ratios}
\label{packingratios}

We now present the linear transmit beamforming strategy that achieves the optimal sDoF of the two networks under consideration. We consider achievability only in the uplink as duality of interference alignment through linear beamforming ensures achievability in the downlink as well. We start by introducing a new notion called the \textit{packing ratio} to describe a collection of transmit beamforming vectors. 

\begin{definition}
Consider the uplink of a two-cell network and let $\mathcal{S}$ be a collection of transmit beamformers used by users belonging to the same cell. If the number of dimensions occupied by the signals transmitted using this set of beamformers at the interfering BS is denoted by $d$, then the packing ratio $\eta$ of this set of beamformers is given by $|\mathcal{S}|\!:\!d$.
\end{definition}

As an example, consider a two-cell, three-users/cell cellular network with 2 antennas at each user and 3 antennas at each BS. Suppose we design two beamformers $\bt v$ and $\bt w$ for two different users in the same cell so that $\bt H_{11,2}\bt v=\bt H_{12,2}\bt w$, then the set of vectors $\mathcal{S}=\{\bt v, \bt w \}$ is said to have a packing ratio of $2\!:\!1$. As another example, for the same network, consider the case when $M>N$. Since users can now zero-force all antennas at the interfering BS, we can have a set $\mathcal{S}$ of beamformers with packing ratio $|\mathcal{S}|:0$.

When designing beamformers for the two-cell network, it is clear that choosing sets of beamformers having a high packing ratio is desirable as this reduces the number of dimensions occupied by interference at the interfering BS. The existence of beamformers satisfying a certain packing ratio is closely related to the ratio $\gamma$ ($M/N$). For example, it is easily seen that when $\gamma<\frac{2}{3}$, it is not possible to construct beamformers having a packing ratio of $3\!:\!1$. Further even when beamformers satisfying a certain packing ratio exist, there may not be sufficient sets of them to completely use all the available dimensions at a BS. In such a scenario, we need to consider designing beamformers with the next best packing ratio.

Using the notion of packing ratios, we now describe the achievability of the optimal sDoF of the two-cell three-users/cell cellular network. We first define the set $\mathcal{P}_{23}=\{1\!:\!0,\ 3\!:\!1,\ 2\!:\!1,\ 3\!:\!2,\ 1\!:\!1 \}$ to be the set of fundamental packing ratios for the two-cell, three-users/cell cellular network. For any given $\gamma$, our strategy is to first construct the sets of beamformers that have the highest possible packing ratio from the set $\mathcal{P}_{23}$. If such beamformers do not completely utilize all the available dimensions at the two BSs, we further construct beamformers having the next best packing ratio in $\mathcal{P}_{23}$ until all the dimensions at the two BSs are either occupied by signal or interference. Our proposed strategy is essentially a greedy strategy to minimize the dimensions occupied by interference. Greedy strategies for aligning interference, including the notion of subspace alignment chains developed in \cite{chenweiwang} where an alignment chain is terminated until no more interference can be aligned, seem capable of achieving the optimal sDoF. The strategy we develop is illustrated in the following example.

Consider the case  when $2/3< \gamma< 3/4$. Since $M<N$, no transmit zero-forcing is possible. Further, each user can access only $M$ of the $N$ dimensions at the interfering BS. Since we assume all channels to be generic, and $2M >N$, the subspaces accessible to any two users overlap in $2M-N$ dimensions. This $2M-N$ dimensional space overlaps with the $M$ dimensions accessible to the third user in $3M-2N$ dimensions. Note that such a space exists as we have assumed $2/3< \gamma $. Thus, we can construct $3M-2N$ sets of three beamformers (one for each user) that occupy just one dimension at the interfering BS and thus have a packing ratio of $3\!:\!1$. Assuming that the same strategy is adopted for users in both cells, at any BS, signal vectors occupy a total of $3(3M-2N)$ dimensions while interference occupies $3M-2N$ dimensions. Thus a total of $4(3M-2N)$ dimensions are occupied by signal and interference. Since $4(3M-2N)<N$ whenever $4M <3N$, we see that such vectors do not completely utilize all the $N$ dimensions at a BS. 

In order to utilize the remaining $9N-12M$ dimensions, we additionally construct beamformers with the next highest packing ratio ($2\!:\!1$). Let $M'=M-(3M-2N)=2N-2M$ denote the unused dimensions at each user. At the interfering BS, each pair of users has $2M'-N$ dimensions that can be accessed by both users. Note that since $2M'-N=2(2N-2M)-N=3N-4M > 0$, such an overlap exists almost surely. For a fixed pair of users in each cell, we choose $(3N-4M)$ sets of two beamformers (one for each user in the pair) whose interference aligns onto a single dimension, so that each set has a packing ratio of $2\!:\!1$. After choosing beamformers in this manner, we see that signal and interference span all $N$ dimensions at each of the two BSs. Through this process, each BS receives $3(3M-2N)+2(3N-4M)$ signaling vectors while interfering signals occupy $(3M-2N)+(3N-4M)$ dimensions. We have thus shown that $3(3M-2N)+2(3N-4M)=M$ DoF/cell are achievable. To ensure that $M/3$ DoF/user are achieved, we can either cycle through different pairs of users when designing the second set of beamformers, or we can simply pick $(3N-4M)/3$ sets of beamformers for every possible pair of users in a cell. If $(3N-4M)/3$ is not an integer, we simply scale $N$ and $M$ by a factor of 3 to make it an integer. We can afford the flexibility to scale $M$ and $N$ because we are only characterizing the sDoF of the network.

\begin{table*}[t]
\renewcommand{\arraystretch}{1.5}
\caption{ The sets of beamformers and their corresponding packing ratios used to prove achievability of the optimal sDoF of the two-cell two-user/cell network for different values of $\gamma$. }
\begin{center}
\normalsize
\begin{tabular}{|c|c|c|c|c|c|c}
  \cline{1-6}
  \multirow{2}{*}{$\gamma$ $(M/N)$} & \multicolumn{4}{|c|}{Set of beamformers}  & \multirow{2}{2.7cm}{\centering  DoF/cell  (No. of signal-vectors per cell)\scriptsize} & \\
  \cline{2-5}
  & Packing ratio & No. of sets & Packing ratio & No. of sets  &  \\
  \cline{1-6}
  $0 < \gamma < \frac{1}{4}$ & $1\!:\!1$ & $2M$ & -- & -- & $2M$ &  \\
  \cline{1-6}
  $\frac{1}{4}\leq \gamma \leq \frac{1}{2}$ & $1\!:\!1$ & $\frac{N}{2}$ & -- & -- & $\frac{N}{2}$ & \\
  \cline{1-6}
  $\frac{1}{2}< \gamma < \frac{2}{3}$ & $2\!:\!1$ & $2M-N$  & $1\!:\!1$ & $\frac{4N-6M}{2}$ & $M$ & \\
  \cline{1-6}
  $\frac{2}{3} \leq \gamma \leq 1$ & $2\!:\!1$ & $2M-N$ & -- & -- & $\frac{2N}{3}$ & \\
  \cline{1-6}
  $1< \gamma < \frac{3}{2}$ & $1\!:\!0$ & $2(M-N)$ & $2\!:\!1$ & $\frac{3N-2M}{3}$ & $\frac{2M}{3}$ &  \\
  \cline{1-6}
  $\frac{3}{2}\leq \gamma$ & $1\!:\!0$ & $N$ & -- & -- & $N$ &\\
  \cline{1-6}
  \end{tabular}
\end{center}
\label{table22}
\end{table*}

\begin{table*}[t]
\renewcommand{\arraystretch}{1.5}
\caption{ The sets of beamformers and their corresponding packing ratios used to prove achievability of the optimal sDoF of the two-cell three-user/cell network for different values of $\gamma$. }
\begin{center}
\normalsize
\begin{tabular}{|c|c|c|c|c|c|c}
  \cline{1-6}
  \multirow{2}{*}{$\gamma$} & \multicolumn{4}{|c|}{Set of beamformers}  & \multirow{2}{2.5cm}{\centering  DoF/cell  (No. of signal-vectors per cell)\scriptsize} & \\
  \cline{2-5}
  & Packing ratio & No. of sets & Packing ratio & No. of sets  &  \\
  \cline{1-6}
  $0 < \gamma < \frac{1}{6}$ & $1\!:\!1$ & $3M$ & -- & -- & $3M$ &  \\
  \cline{1-6}
  $\frac{1}{6}\leq \gamma \leq \frac{1}{3}$ & $1\!:\!1$ & $\frac{N}{2}$ & -- & -- & $\frac{N}{2}$ & \\
  \cline{1-6}
  $\frac{1}{3}< \gamma < \frac{2}{5}$ & $3\!:\!2$ & $3M-N$  & $1\!:\!1$ & $\frac{6N-15M}{2}$ & $\frac{3M}{2}$ & \\
  \cline{1-6}
  $\frac{2}{5}\leq \gamma \leq \frac{1}{2}$ & $3\!:\!2$ & $\frac{N}{5}$ & -- & -- & $\frac{3N}{5}$ & \\
  \cline{1-6}
  $\frac{1}{2}< \gamma < \frac{5}{9}$ & $2\!:\!1$ & $3(2M-N)$ & $3\!:\!2$ & $\frac{10N-18M}{5}$ & $\frac{6M}{5}$ &  \\
  \cline{1-6}
  $\frac{5}{9}\leq \gamma \leq \frac{2}{3}$ & $2\!:\!1$ & $\frac{N}{3}$  & -- & -- & $\frac{2N}{3}$ & \\
  \cline{1-6}
  $\frac{2}{3}< \gamma < \frac{3}{4}$ & $3\!:\!1$ & $3M-2N$ & $2\!:\!1$ & $3N-4M$ & $M$ &\\
  \cline{1-6}
  $\frac{3}{4}\leq \gamma \leq 1$ & $3\!:\!1$ & $\frac{N}{4}$ & -- & -- & $\frac{3N}{4}$  & \\
  \cline{1-6}
  $1 < \gamma < \frac{4}{3}$ & $1\!:\!0$ & $3(M-N)$ & $3\!:\!1$ & $N-\frac{3M}{4}$ & $\frac{3M}{4}$ & \\
  \cline{1-6}
  $\frac{4}{3}\leq \gamma$ & $1\!:\!0$ & $N$ & -- & -- & $N$ &\\
  \cline{1-6}
  \end{tabular}
\end{center}
\label{table23}
\end{table*}
As another example, consider the two-cell, three-users/cell network with $3/4 \leq \gamma \leq 1$. When $3/4 \leq \gamma \leq 1$,  all three users of a cell can access a $3M-2N$ dimensional space at the interfering BS, thus $3M-2N$ sets of three beamformers having a packing ratio of $3\!:\!1$ are possible. Note that $3\!:\!1$ is still the highest possible packing ratio. If users in both cells were to use such beamformers, signal and interference from such beamformers can occupy at most $4(3M-2N)>N$ dimensions at any BS. Thus, when $3/4 \leq \gamma < 1$, we have sufficient sets of beamformers with packing ratio $3\!:\!1$ to use all available dimensions at the BSs. Choosing $N/4$ such sets provides us with $3N/4$ DoF/cell. 




Such an approach to designing the linear beamformers provides insight on why the optimal sDoF alternates between $M$ and $N$. When $\gamma$ is such that there are sufficient sets of beamformers having the highest possible packing ratio, it is the number of dimensions at the BSs that proves to be a bottleneck and the DoF bound becomes dependent on $N$. On the other hand, when there are not enough sets of beamformers having the highest possible packing ratio, we are forced to design beamformers with a lower packing ratio so as to use all available dimensions at the two BSs. Since for a fixed $N$, the number of sets of beamformers having the highest packing ratio is a function of $M$, the bottleneck now shifts to $M$. We thus see that for a large but fixed $N$, as we gradually increase $M$, we cycle through two stages---the first stage where beamformers with a higher packing ratio become feasible but are limited to a small number, then gradually, the second stage where there are sufficiently many such beamformers. As $M$ is increased even further, we go back to the scenario where the next higher packing ratio becomes feasible however with only limited set of beamformers, and so on.

The design strategy described for the case $2/3< \gamma \leq 1$ is also applicable to other intervals of $\gamma$, as well as the two-cell two-users/cell network. For the two-cell three-user/cell network, when $1/3 < \gamma \leq 1/2$, we design as many sets of beamformers having packing ratio $3:2$ as possible, then use beamformers having a packing ratio of $1:1$ (random beamforming) to fill any unused dimensions at the two BSs. When $1/2 < \gamma  \leq 2/3$ we first design as many sets of beamformers having  packing ratio $2:1$ as possible and then use beamformers having a packing ratio of $3:2$. When $\gamma \leq 1/3$, it is easy to see that interference alignment is not feasible and that a random beamforming strategy suffices. Finally, when $\gamma \geq 1$, we first design beamformers that zero-force the interfering BS (packing ratio $1:0$), then use beamformers having a packing ratio of $3:1$ to fill any remaining dimensions at each BS.

For the two-cell two-user/cell network we define the set $\mathcal{P}_{22}=\{1\!:\!0,\ 2\!:\!1,\ 2\!:\!1,\ 1\!:\!1 \}$ to be the set of fundamental packing ratios. When $\gamma > 1$, we first design beamformers that zero-force the interfering BS (packing ratio $1:0$), then if necessary, use beamformers having a packing ratio of $2:1$ to fill any remaining dimensions at each BS. When $1/2 < \gamma \leq 1$, the highest possible packing ratio is $2:1$, hence we first design beamformers having packing ratio $2:1$ to occupy as many dimensions as possible at the two BSs, then if there are unused dimensions at the two BSs, we use random beamformers (packing ratio $1:1$) to occupy the remaining dimensions. When $\gamma \leq 1/2$, interference alignment is not feasible and simple random beamforming achieves the optimal DoF.

In Tables \ref{table22} and \ref{table23}, we summarize the strategies used for different intervals of $\gamma$, and list the number of sets of beamformers of a certain packing ratio required to achieve the optimal DoF along with the DoF achieved per cell. Note that fractional number of sets can always be made into integers as we allow for spatial extensions. We discuss finer details on constructing beamformers using packing ratios in Appendix~\ref{finerdetails}.

\subsection{Extending packing ratios to larger networks}
    It is possible to extend the notion of packing ratios to certain larger networks. For example, the following theorem establishes the optimal sDoF of two-cell networks with more than three users per cell for certain values of $\gamma$.
    \begin{theorem}
     The optimal sDoF/user of a three-cell, $K$-user/cell MIMO cellular network with $M$ antennas per user and $N$ antennas per BS when $\gamma=\tfrac{M}{N} \in (0, \tfrac{1}{K-1}]$ is given by
     \begin{equation}
      \text{DoF/user}\leq \min \big(M, \max \big (\tfrac{N}{2K},\tfrac{M}{2} \big ), \tfrac{N}{2K-1}  \big ), \nonumber
     \end{equation}
     and when $\gamma =\tfrac{M}{N} \geq \tfrac{K}{K+1}$, the optimal sDoF/user are given by
     \begin{equation}
      \text{DoF/user}\leq \min \big(\max \big (\tfrac{N}{K+1},\tfrac{M}{K+1} \big ), \tfrac{N}{K}  \big ). \nonumber
     \end{equation}
     \label{optimalDoF_theorem_generalK}
    \end{theorem}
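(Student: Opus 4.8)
My plan is to obtain the claimed expressions as the optimal sDoF/user by proving a matching outer bound and inner bound, the outer (``$\leq$'') direction being a direct specialization of Theorem~\ref{1byq_bound} together with two elementary bounds, and the inner direction combining the decomposition result of Theorem~\ref{onesided_theorem} at the rational endpoints with a packing-ratio construction in between. Throughout I use that for a three-cell network the index set is $\mathcal{Q}=\{p/q:\ p\in\{1,2\},\ q\in\{1,\dots,(3-p)K\}\}$, so that every ratio $p/q$ I invoke below indeed lies in $\mathcal{Q}$.

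For the converse in the regime $\gamma\in(0,\tfrac{1}{K-1}]$ I would intersect three bounds. First, $\text{DoF/user}\leq M$ trivially, since each user has $M$ antennas. Second, Theorem~\ref{1byq_bound} with $p/q=1/K$ (here $q=K\leq 2K$, so $1/K\in\mathcal{Q}$) gives $\text{DoF/user}\leq N/(2K)$ for $\gamma\leq 1/K$ and $\text{DoF/user}\leq M/2$ for $\gamma\geq 1/K$; since $M/2\leq N/(2K)$ iff $\gamma\leq 1/K$, the valid side is always the larger of the two, giving $\text{DoF/user}\leq\max(N/(2K),M/2)$. Third, Theorem~\ref{1byq_bound} with $p/q=1/(K-1)$ (here $q=K-1\leq 2K$) applies on its ``$\gamma\leq p/q$'' side throughout this regime and yields $\text{DoF/user}\leq N/(K+(K-1))=N/(2K-1)$. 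The intersection of the three is exactly $\min(M,\max(N/(2K),M/2),N/(2K-1))$.

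For the converse in the regime $\gamma\geq K/(K+1)$ I would intersect two bounds. The single-cell multiple-access bound gives $\text{DoF/user}\leq N/K$, because the $K$ users of a cell deliver $K\cdot(\text{DoF/user})$ independent streams into an $N$-antenna BS (equivalently, Lemma~\ref{dofx_lemma} applied to one BS and its own cell). Theorem~\ref{1byq_bound} with $p/q=1$ (so $p=q=1\in\mathcal{Q}$) gives $\text{DoF/user}\leq N/(K+1)$ for $\gamma\leq 1$ and $\text{DoF/user}\leq M/(K+1)$ for $\gamma\geq 1$, i.e.\ $\text{DoF/user}\leq\max(N/(K+1),M/(K+1))$. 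Their minimum is the stated expression.

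For achievability I would first note that at the rational endpoints $\gamma\in\{1/K,\,1/(K-1),\,1\}$ we have $\gamma\in\mathcal{Q}$, so Theorem~\ref{optimal_theorem} makes decomposition with asymptotic alignment achieve $MN/(KM+N)$, which one checks coincides with the outer bound there (e.g.\ $MN/(KM+N)=N/(2K-1)$ at $\gamma=1/(K-1)$, and $=N/(K+1)$ at $\gamma=1$). On the flat, $N$-limited stretches the target would be reached by running an endpoint scheme on a subset of the available antennas, and on the $M$-limited stretches by greedily constructing beamformers of the appropriate packing ratio as in Section~\ref{packingratios}. I expect the main obstacle to be precisely this interior achievability: because there are three cells, every user casts an interference shadow at \emph{two} distinct base stations, so a family of beamformers must realize a prescribed packing ratio simultaneously at both interfering BSs, and the counting argument that signal plus aligned interference exactly fills each BS must close jointly across the two footprints. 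Verifying that beamformers with the required joint packing ratio exist generically---and that the two per-BS dimension counts are mutually consistent---is the delicate step, and is where a careful greedy ordering of packing ratios together with spatial normalization should be brought to bear.
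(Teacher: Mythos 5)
Your converse is, in substance, exactly the paper's: Theorem~\ref{1byq_bound} specialized to $p/q \in \{1/K,\ 1/(K-1),\ 1\}$, together with the trivial per-user bound $M$ and the single-cell MAC bound $N/K$, and those specializations are carried out correctly. The genuine gap is in achievability, and it is compounded by the fact that you have taken the word ``three-cell'' in the statement literally and built the proof around $G=3$. That word is a typo: the sentence introducing the theorem says it ``establishes the optimal sDoF of two-cell networks with more than three users per cell,'' the packing-ratio machinery it invokes is defined in Section~\ref{SAP} only for two-cell networks (a set of beamformers has one footprint at \emph{the} interfering BS), and at $K=3$ the stated formula reproduces exactly the two-cell three-user/cell characterization of Theorem~\ref{sDoF_theorem} and Table~\ref{table23}. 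Read as a claim about genuine three-cell networks, the statement is false, so the obstacle you flag at the end---making a prescribed packing ratio close simultaneously at two interfering BSs---is not merely delicate, it is insurmountable. Concretely, take $(G,K,M,N)=(3,2,2,5)$, so $\gamma = 2/5 \in (0,\,1/(K-1)]$; your formula gives $\min\big(M,\ \max(N/4,\,M/2),\ N/3\big) = N/4 = 5/4$, but for $G=3$ the set $\mathcal{Q}$ contains $1/3$ (take $p=1$, $q=3 \leq (G-p)K = 4$), and since $\gamma \geq 1/3$, Theorem~\ref{1byq_bound} bounds the DoF/user by $Mq/(Kp+q) = 3M/5 = 6/5 < 5/4$. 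Hence the claimed value is not achievable for three cells, and no completion of your sketch can exist.

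Under the intended two-cell reading, your converse goes through verbatim (here $\mathcal{Q}=\{1/q : q\in\{1,\dots,K\}\}$ still contains $1/K$, $1/(K-1)$ and $1$), and the achievability question becomes exactly what the paper settles: the low regime consists of four piecewise-linear segments achieved by random uplink beamforming together with sets of beamformers of packing ratio $K\!:\!(K-1)$, and the high regime of three segments achieved by zero-forcing beamformers (ratio $1\!:\!0$) together with sets of ratio $K\!:\!1$, constructed as in Appendix~\ref{finerdetails} (Cases iii--iv and viii--ix generalized from $K=3$); with a single interfering BS the joint-footprint issue you worry about never arises. Your supplementary route---decomposition via Theorem~\ref{optimal_theorem} at the kink points $\gamma\in\{1/K,\,1/(K-1),\,1\}$, where indeed $MN/(KM+N)$ equals $N/(2K)$, $N/(2K-1)$ and $N/(K+1)$ respectively, plus antenna-subset reduction on the $N$-limited stretches---is sound as far as it goes, but it still leans on the unproven packing-ratio construction for the $M$-limited stretches, so even in the two-cell reading your proposal leaves the achievability core unestablished rather than proving it by a different method.
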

    The proof of this theorem follows directly from the outer bounds established in Section \ref{outerbounds} and designing beamformers using the notion of packing ratios. The optimal sDoF in the interval $(0, \tfrac{1}{K-1}]$ consists of four piecewise-linear regions and a combination of random beamforming in the uplink and beamformers having a packing ratio of $K:(K-1)$ achieves the optimal sDoF. When $\gamma \geq \tfrac{K}{K+1}$, the optimal DoF consists of three piecewise-linear regions achieved using a combination of zero-forcing beamformers and beamformers having packing ratio $K:1$.

    Extending the notion of packing ratios to any general cellular network and for all values of $\gamma$ requires us to first identify the set of fundamental packing ratios that play a crucial role in identifying the best set of beamformers that can be designed for any given system. Identifying these fundamental packing ratios requires an understanding of how multiple subspaces in a large network network interact. In the absence of a coherent theory characterizing such interactions, this is a major bottleneck in extending packing ratios to general cellular networks. Different from the approach taken here, the notion of subspace alignment chains of  \cite{chenweiwang} proves useful in establishing the optimal-DoF of the three-user interference channel, while \cite{liu-yang-arxiv} proposes a notion called irresolvable subspace chains to construct DoF-optimal beamformers for general cellular networks.


\section{Linear Beamforming Design: Unstructured Design}
\label{USAP}
In contrast to the structured approach presented previously, we develop an alternative approach to designing linear beamformers by relying on random linear vector equations to satisfy (\ref{altIAreq1}). Since this approach does not require us to explicitly infer the underlying structure of interference alignment, it bypasses the need for counting arguments and is applicable to a wide class of cellular networks. We call this the unstructured approach (USAP) to designing linear beamformers for interference alignment and discuss the scope and limitations of such an approach. 

Our main observation is the following. For any $(G,K,M,N)$ network, in the regime where the proper-improper boundary lies above the decomposition based inner bound, i.e., $\big (\frac{MN}{KM+N}< \frac{M+N}{GK+1} \big )$, an unstructured approach appears to be able to achieve the optimal sDoF. The sDoF obtained numerically from this unstructured approach matches the optimal sDoF characterized in a parallel and independent work \cite{liu-yang-arxiv} using a structured approach. The key advantage of the unstructured approach advocated in this paper is that it is conceptually much simpler. Further, it is also achieves a significant portion of the DoF in the regime where decomposition based inner bound lies above the proper-improper boundary. The broad applicability of the unstructured approach with minimal dependence on network parameters provides a single unified technique for linear beamforming design in MIMO cellular networks. This approach along with the asymptotic scheme of \cite{cadambejafar} form the two main techniques needed to establish the optimal DoF of MIMO cellular networks. The remainder of this section describes the unstructured approach and presents the results of numerical experiments that identify the scope and limitations of this approach. 

\subsection{The Unstructured Approach}
Consider a $(G,K,M,N)$ cellular network with the goal of achieving $d$ DoF/user without any symbol extensions. In the uplink, note that each BS observes $GKd$ streams of transmission of which $(G-1)Kd$ streams constitute interference. Setting aside $Kd$ dimensions at each BS for the received signals from the in-cell users, to satisfy (\ref{altIAreq1}) the $(G-1)Kd$ interfering data streams must occupy no more than $N-Kd$ dimensions at each BS. Assuming $(G-1)Kd>N-Kd$ (no interference alignment is necessary otherwise), we require the $(G-1)Kd$ transmit beamformers of the interfering signals to satisfy $GKd-N$ $(=L)$ distinct linear equations. In other words, for the $i$th BS, we require

\begin{align}
 \sum_{ l=1,l\neq i}^G\sum_{m=1}^K\sum_{n=1}^d \alpha_{lmn,i}^p \bt H_{(lm,i)} \bt v_{lmn}=\bt 0,
\end{align}
\begin{figure*}
\begin{align}
\begin{bmatrix}
\bt 0_{4\times 3} & \bt 0_{4\times 3} & \alpha_{211,1}^1\bt H_{21,1} & \alpha_{221,1}^1 \bt H_{22,1} & \alpha_{311,1}^1 \bt H_{31,1} & \alpha_{321,1}^1 \bt H_{32,1} \\
\bt 0_{4\times 3} & \bt 0_{4\times 3} & \alpha_{211,1}^2 \bt H_{21,1} & \alpha_{221,1}^2 \bt H_{22,1} & \alpha_{311,1}^2 \bt H_{31,1} & \alpha_{321,1}^2 \bt H_{32,1} \\
\alpha_{111,2}^1\bt H_{11,2} & \alpha_{121,2}^1\bt H_{12,2} & \bt 0_{4\times 3} & \bt 0_{4\times 3} & \alpha_{311,2}^1\bt H_{31,2} & \alpha_{321,2}^1\bt H_{32,2} \\
\alpha_{111,2}^2\bt H_{11,2} & \alpha_{121,2}^2\bt H_{12,2} & \bt 0_{4\times 3} & \bt 0_{4\times 3} & \alpha_{311,2}^2\bt H_{31,2} & \alpha_{321,2}^2\bt H_{32,2} \\
\alpha_{111,3}^1\bt H_{11,3} & \alpha_{121,3}^1\bt H_{12,3} & \alpha_{211,3}^1\bt H_{21,3} & \alpha_{221,3}^1\bt H_{22,3} & \bt 0_{4\times 3} & \bt 0_{4\times 3} \\
\alpha_{111,3}^2\bt H_{11,3} & \alpha_{121,3}^2\bt H_{12,3} & \alpha_{211,3}^2\bt H_{21,3} & \alpha_{221,3}^2\bt H_{22,3} & \bt 0_{4\times 3} & \bt 0_{4\times 3} 
\end{bmatrix}
\begin{bmatrix}
\bt v_{111} \\ \bt v_{121} \\ \bt v_{211} \\ \bt v_{221} \\ \bt v_{311} \\ \bt v_{321} 
\end{bmatrix}
=\bt 0_{24\times 1}.
\label{long_eq}
\end{align}
\end{figure*}
where $\alpha_{lmn,i}^p$ refers to the coefficient associated with the interfering transmit beamformer $\bt v_{lmn}$ in the $p$th linear equation corresponding to the $i$th BS. Thus, we have $GL$ linear vector equations, each involving a set of $(G-1)Kd$ transmit beamforming vectors. Concatenating the transmit beamforming vectors $\bt v_{lmn}$ into a single vector $\bt v=[\bt v_{111}, \bt v_{112}, \hdots , \bt v_{11d},\hdots , \bt v_{GKd}]$ and by appropriately defining the matrix $\bt M$, the $GL$ linear vector equations can be expressed as the matrix equation $\bt M\bt v=\bt 0$. Note that $\bt M$ is a $GLN \times GKMd$ matrix. 

As an example, for the $(3,2,3,4)$ network with $d=1$, the linear matrix equation $\bt M\bt v=\bt 0$ is given by (\ref{long_eq}).

It is known that for the above example, interference alignment is feasible. In other words, it is known that there exists a set of coefficients $\{\alpha_{lmn,i}^p \}$ such that the system of equations in (\ref{long_eq}) has a non-trivial solution. Note that the matrix $\bt M$ in this case is a $24 \times 18$ matrix (system of 24 equations with 18 unknowns), and that a random choice of coefficients $\{\alpha_{lmn,i}^p \}$ results in a matrix $\bt M$ having full column rank, rendering the system of equations infeasible. Determining the right set of coefficients is non-trivial and highlights a particular difficulty in finding aligned beamformers using the set of equations characterized by $\bt M\bt v=0$.\footnote{A classic example in this context is the three-user interference channel with two antennas at each node, where it is known that 1 DoF per receiver is achievable \cite{cadambejafar}. The matrix $\bt M$ in this case is a $6\times 6$ matrix with no non-trivial solutions to $\bt M\bt v=\bt 0$ unless the coefficients are chosen carefully. The set of aligned transmit beamformers in this case are the eigen vectors of an effective channel matrix, with the coefficients being related to the eigen values of this effective channel matrix.}

Now, suppose we append an additional antenna to each BS, thereby creating a $(3,2,3,5)$ network and then consider designing transmit beamformers to achieve 1 DoF/user, it can be shown that the transmit beamformers now need to satisfy a system of equations of the form $\bt M\bt v=\bt 0$, where $\bt M$ is a $12\times 18$ matrix. It is easy to see that even a random choice of coefficients permits non-trivial solutions to this system of equations. The ability to choose a random set of coefficients is quite significant as instead of solving a set of bilinear polynomial equations for interference alignment, we now only need to solve a set of linear equations. We thus have two networks, namely, the $(3,2,3,4)$ network and the $(3,2,3,5)$ network that significantly differ in how aligned beamformers can be computed. This points to a much broader divide among MIMO cellular networks.

While aligned beamformers satisfy the system of equations $\bt M \bt v=\bt 0$ for a set of coefficients, not all solutions to $\bt M \bt v=\bt 0$ with a fixed set of coefficients form aligned beamformers. A vector $\hat{\bt v}$
satisfying $\bt M \hat{\bt v}=0$,  can be considered to constitute a set of aligned beamformers provided (a) the set of beamformers corresponding to a user are linearly independent, i.e., $\bt V_{ij}$ is full rank $\forall i,j$; (b) the signal received from a user at the intended BS is full rank i.e., $\bt H_{ij,i}\bt V_{ij}$ is full rank; and (c) signal and interference are separable at each BS. Since we assume generic channel coefficients and since direct channels are not used in forming the matrix $\bt M$, (c) is satisfied almost surely, while (b) is true under the assumption of generic channel coefficients provided (a) is true. While the idea of satisfying conditions for interference alignment through random linear equations is also discussed in \cite{tingtingliuWCNC}, the presentation in \cite{tingtingliuWCNC} is limited to achieving 1 DoF/user, thereby avoiding the necessity to check for linear independence of the transmit beamformers.
 
Since $\bt M$ is a $GLN \times GKMd$ matrix, whenever $LN<KMd$ the system of equations $\bt M\bt v=\bt 0$ permits a non-trivial solution for any random choice of coefficients. When $LN<KMd$, a solution to the equation $\bt M\bt v= \bt 0$ can be expressed as $ \hat{\bt v}=\det(\bt M\bt M^H)(\bt I-\bt M^H(\bt M\bt M^H)^{-1} \bt M)\bt r$, where $\bt r$ is a $GKMd \times 1$ vector with randomly chosen entries. For $\hat{\bt v}$ to qualify as a solution for interference alignment, we need to ensure that condition (a) is satisfied, i.e.,  we need to ensure that the set of transmit beamformers $\hat{\bt v}_{ij1}$, $\hat{\bt v}_{ij2} \hdots \hat{\bt v}_{ijd}$ obtained from $\hat{\bt v}$ are linearly independent for any $i \in  \{ 1,2,\hdots, G \}$, $ j\in \{ 1,2,\hdots , K \}$. Letting $\hat{\bt  V}_{ij}$ be the $M \times d$ matrix formed using $\hat{\bt v}_{ij1}$, $\hat{\bt v}_{ij2} \hdots \hat{\bt v}_{ijd}$, checking for linear independence is equivalent to checking if the determinant of the matrix $[\hat{\bt V}_{ij} \, {\bt R}_{ij}]$, where ${\bt R}_{ij}$ is a $(M-d)\times d$ matrix of random entries, is non-zero or not.  

Since the determinant of $[\hat{\bt V}_{ij} \, {\bt R}_{ij}]$ is a polynomial in the variables ${\bt R}_{ij}$, $\bt r$, the coefficients $\{\alpha_{lmn,i}^p \}$, and the channel matrices $\{\bt H_{(lm,i)} \}$, checking for linear independence of the transmit beamformers is equivalent to checking if this polynomial is the zero-polynomial or not. This problem is known as polynomial identity testing (PIT) and is well studied in complexity theory \cite{saxenapit}. While a general deterministic algorithm to solve this problem is not known, a randomized algorithm based on the Schwartz-Zippel lemma \cite{schwartz,zippel} is available and it involves evaluating this polynomial at a random instance of ${\bt R}_{ij}$, $\bt r$, $\{ \alpha_{lmn,i}^p \}$, and $\{\bt H_{lm,i} \}$. If the value of the polynomial at this point is non-zero, then this polynomial is determined to be not identical to the zero-polynomial. Further, it can be concluded that this polynomial evaluates to a non-zero value for almost all values of ${\bt R}_{ij}$, $\bt r$, $\{ \alpha_{lmn,i}^p \}$, and $\{\bt H_{lm,i} \}$. If on the other hand, the polynomial evaluates to the zero, the polynomial is declared to be identical to the zero-polynomial and this statement is true with a very high probability as a consequence of the Schwartz-Zippel lemma.

Thus, whenever $LN<KMd$, we propose a two step approach to designing aligned beamformers. We first pick a set of random coefficients, form the linear equations to be satisfied by the transmit beamformers and compute a set of transmit beamformers by solving the system of linear equations. We then perform the numerical test outlined above to ensure that the transmit beamformers are indeed linearly independent. If the transmit beamformers pass the numerical test then they can be considered to be a set of aligned transmit beamformers. Further, if such a procedure works for a $(G,K,M,N)$ network with $d$ DoF/user for a particular generic channel realization, then it works almost surely for all generic channel realizations of this network. This observation allows us to construct a numerical experiment to verify the limits of using such an approach.

\subsection{Numerical Experiment}

The numerical experiment we perform is outlined as follows. We consider a network with $G$ cells and $K$ users/cell. For this network, we consider all possible pairs of $M$ and $N$ such that $M\leq M_{max}$ and $N\leq N_{max}$, where $M_{max}$ and $N_{max}$ are some fixed positive integers. For a fixed $M$ and $N$, we then consider the feasibility of constructing aligned beamformers using the method described above in order to achieve $d$ DoF/user where $d$ is such that $L>0$\footnote{When $L\leq 0$, random transmit beamforming in the uplink achieves the necessary DoF.}, $LN<KMd$, $d\leq M$, $Kd\leq N$, $M<GKd$ \footnote{When $M\geq GKd$, random transmit beamforming in the downlink achieves the necessary DoF.}, $\text{gcd}(M,N,d)=1$\footnote{Spatial scale invariance states that if $d$ DoF/user are feasible for a $(G,K,M,N)$ network, then $sd$ DoF/user are feasible in a $(G,K,sM,sN)$ network where $s \in \mathcal{Z}^+$ denotes the scale factor. While no proof of such a statement is available, no contradictions 
to this statement exist to the best of our knowledge.} and $(G,K,M,N,d)$ form a proper system. For such a set of $M$, $N$, and $d$, we generate an instance of generic channel matrices and proceed to carry out the two step procedure outlined earlier. Such a procedure is said to be successful if the polynomial test returns a non-zero value and unsuccessful otherwise. If successful, we conclude that such a procedure can be reliably used to design transmit beamformers for almost all channel instances of the $(G,K,M,N,d)$ network under consideration. When unsuccessful, we conclude that with a very high probability such a procedure does not yield a set of aligned transmit beamformers for almost all channel instances.

While we considered designing transmit beamformers in the uplink (USAP-uplink) using random linear vector equations, we can alternately consider designing transmit beamformers in the downlink (USAP-downlink) using the same process. For the $(G,K,M,N,d)$ network, it can be shown that $GK(GKd-M)M<GKdN$ is a necessary condition for the linear system of equations obtained in USAP-downlink to have a non-trivial solution. While there are no significant differences between USAP-uplink and USAP-downlink for the interference channel ($K=1$), a major difference emerges for cellular networks where $K>1$. For cellular networks, when designing transmit beamformers in the downlink, direct channels get involved in the linear system of equations and as a result, a solution to the linear system is no longer guaranteed to satisfy conditions (b) and (c) even when channel coefficients are generic. In this respect, USAP-uplink has a significant advantage over USAP-downlink for cellular networks. In addition, for cellular networks, the 
necessary condition $GK(GKd-M)M<GKdN$ places further restrictions on the applicability of USAP-downlink in the context of achieving the optimal DoF.

We discuss the scope and limitations of USAP-uplink and USAP-downlink in the next section. For clarity, we present our observations for the interference channel $(K=1)$ and the cellular network separately $(K>1)$.

\begin{figure*}[htbp]
      \begin{center}
	\hspace{-0.5cm}
	\psfrag{0y}[Bc][Bc][1]{$0$}
	\psfrag{1y/4}[Bc][Bc][1]{$\frac{1}{G}$}
	\psfrag{ystar}[Br][Br][1]{$\frac{\gamma_l}{\gamma_l+1}$}
	\psfrag{0.5y}[Bc][Bc][1]{$\frac{1}{2}$}
	\psfrag{1x/4}[tc][tc][1]{$\frac{1}{G}$}
	\psfrag{1x/3}[tc][tc][1]{$\frac{1}{3}$}
	\psfrag{1x}[tc][tc][1]{$1$}
\psfrag{0x}[tc][tc][1]{$0$}
	\psfrag{xstar}[tc][tc][1]{$\gamma_l$}
\psfrag{t1}[Bl][Bl][1]{$\frac{\gamma}{\gamma+1}$} 	
\psfrag{t2}[Bl][Bl][1]{$\frac{\gamma+1}{G+1}$}
 	\psfrag{t3}[Bl][Bl][1]{}
\psfrag{t4}[Bl][Bl][1]{$\gamma$}
\psfrag{t5}[Bl][Bl][1]{$\frac{1}{G-\gamma}$}
\psfrag{t6}[Bl][Bl][1]{$\frac{\gamma^2}{G\gamma-1}$}
 	\psfrag{r}[Bl][Bl][1]{I}
\psfrag{s}[Bl][Bl][1]{II}
\psfrag{xlabel}[][][1]{$\gamma \ (M/N)$}
 	\psfrag{ylabel}[Bc][Bc][0.95][180]{Normalized DoF/user}
 	\psfrag{uSAAP applicable}[Bl][Bl][0.9]{USAP-uplink applicable}
 	\psfrag{MAC/BC DoF bound}[Bl][Bl][0.9]{MAC/BC DoF bound}
 	\psfrag{uSAAP prerequisite}[Bl][Bl][0.9]{USAP-uplink necessary condition}
 	\psfrag{proper-improper boundary}[Bl][Bl][0.9]{Proper-improper boundary }
 	\psfrag{random beamforming in uplink}[Bl][Bl][0.9]{Random beamforming in uplink}
 	\psfrag{decomposition inner bound}[Bl][Bl][0.9]{Decomposition based inner bound}
 	\psfrag{dSAAP prerequisite}[Bl][Bl][0.88]{USAP-downlink necessary condition}
\psfrag{piecewise-linear optimal sDoF}[Bl][Bl][0.9]{Piecewise-linear optimal sDoF}
	\includegraphics[width=5in]{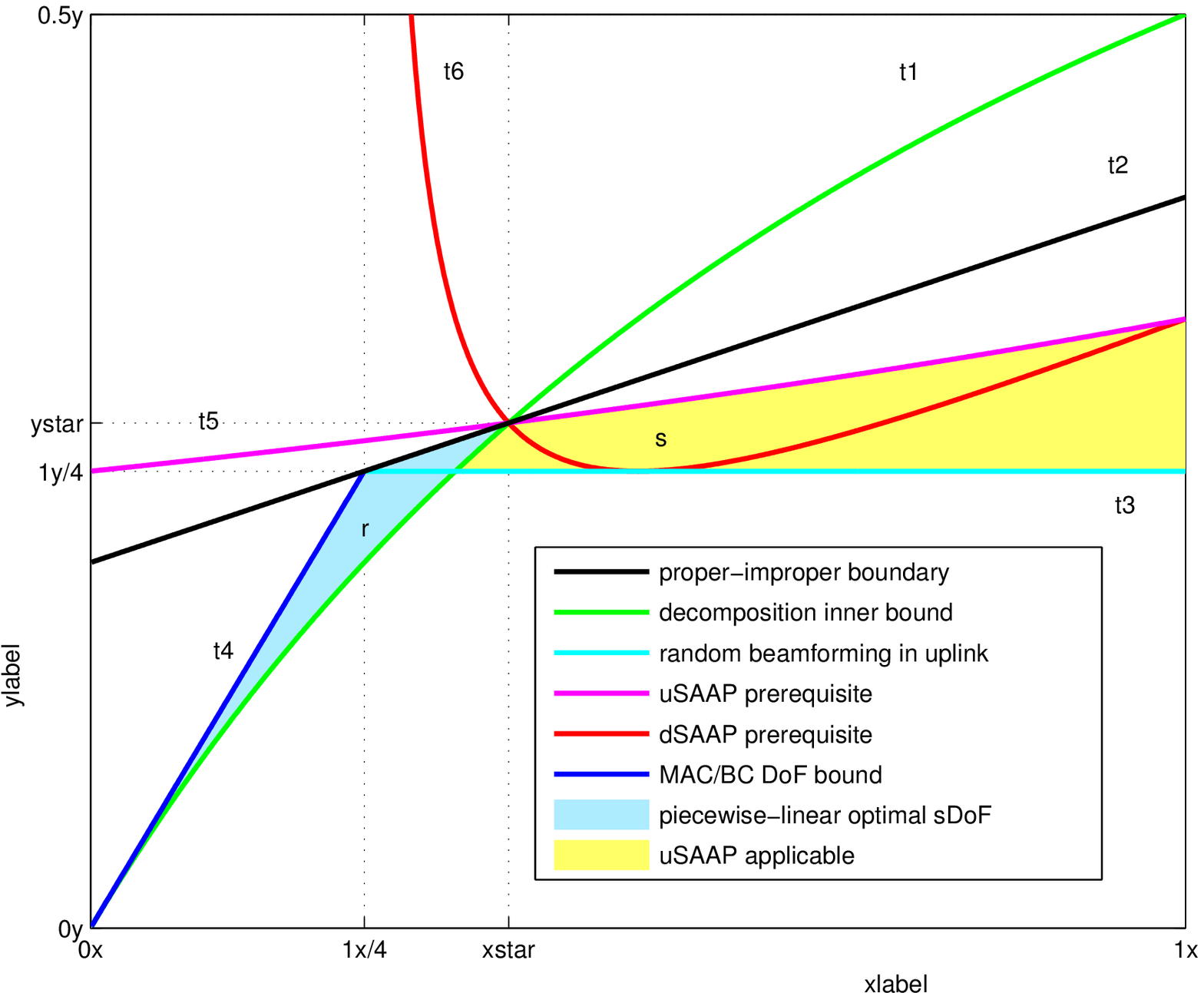}
	\caption{Inner and outer bounds on the DoF of the $G$-user interference channel. The optimal DoF consists of infinitely many piecewise-linear components when $\gamma < \gamma_l$, while the decomposition based approach determines the optimal DoF when $\gamma \geq \gamma_l$.  }
	\label{fig_DoFIC}
      \end{center}

      \begin{center}
	\hspace{-0.5cm}
	\psfrag{3y/10}[Br][Br][1]{$\frac{3}{10}$}
	\psfrag{1y/3}[Br][Br][1]{$\frac{1}{3}$}
	\psfrag{2y/5}[Br][Br][1]{$\frac{2}{5}$}
	\psfrag{3y/7}[Br][Br][1]{$\frac{3}{7}$}
	\psfrag{4y/9}[Br][Br][1]{$\frac{4}{9}$}
	\psfrag{5y/11}[Bc][Bc][1]{}
	\psfrag{1y/2}[Br][Br][1]{$\frac{1}{2}$}
	\psfrag{1/5}[tc][tc][1]{$\frac{1}{5}$}
	\psfrag{1/3}[tc][tc][1]{$\frac{1}{3}$}
	\psfrag{1/2}[tc][tc][1]{$\frac{1}{2}$}
	\psfrag{3/5}[tc][tc][1]{$\frac{3}{5}$}
	\psfrag{2/3}[tc][tc][1]{$\frac{2}{3}$}
	\psfrag{5/7}[tc][tc][1]{$\frac{5}{7}$}
	\psfrag{2/3}[tc][tc][1]{$\frac{2}{3}$}
	\psfrag{3/4}[tc][tc][1]{$\frac{3}{4}$}
	\psfrag{7/9}[tc][tc][1]{$\frac{1}{2}$}
	\psfrag{4/5}[tc][tc][1]{$\frac{4}{5}$}
	\psfrag{9/11}[tc][tc][1]{$\frac{9}{11}$}
	\psfrag{5/6}[tc][tc][1]{$\frac{5}{6}$}
	\psfrag{2/3}[tc][tc][1]{$\frac{2}{3}$}
	\psfrag{3/4}[tc][tc][1]{$\frac{3}{4}$}
 	\psfrag{1/1}[tc][tc][1]{$1$}
 	\psfrag{xlabel}[][][1]{$\gamma \ (M/N)$}
 	\psfrag{ylabel}[Bc][Bc][0.95][180]{Normalized DoF/user}
 	\psfrag{uSAAP successful}[Bl][Bl][0.9]{USAP-uplink successful}
 	\psfrag{uSAAP unsuccessful}[Bl][Bl][0.9]{USAP-uplink unsuccessful}
 	\psfrag{uSAAP prerequisite}[Bl][Bl][0.9]{USAP-uplink necessary condition}
 	\psfrag{proper-improper boundary}[Bl][Bl][0.9]{Proper-improper boundary }
 	\psfrag{random beamforming in uplink}[Bl][Bl][0.9]{Random beamforming in uplink}
 	\psfrag{decomposition inner bound}[Bl][Bl][0.9]{Decomposition based inner bound}
 	\psfrag{BSs=3, Users=1}[Bl][Bl][0.9]{}
	\includegraphics[width=7in]{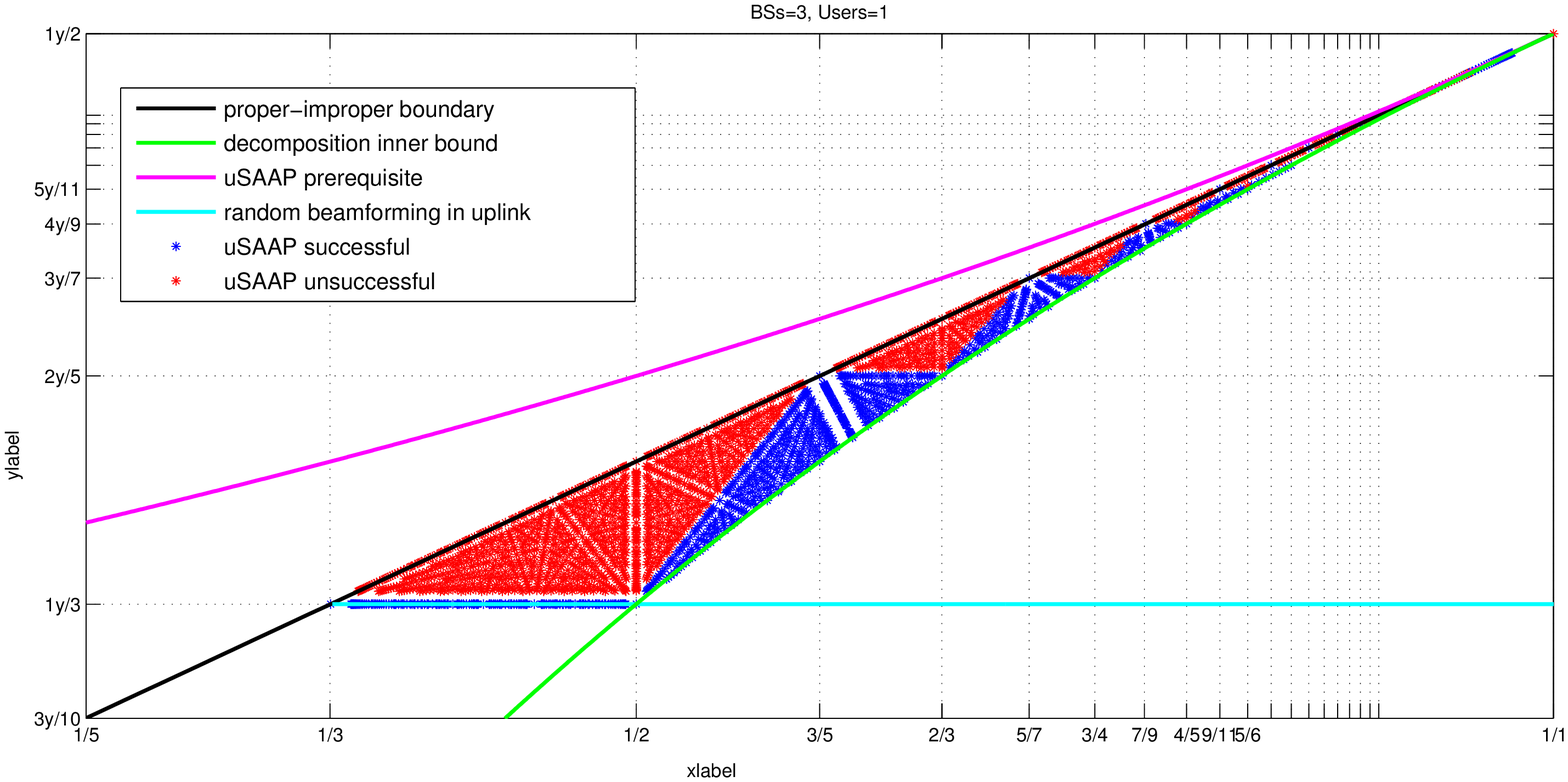}
	\caption{Results of the numerical experiment for the three-user interference channel. Observe that a clear piecewise-linear boundary emerges between the successful and unsuccessful trials of the proposed method. The observed boundary matches with the characterization of the optimal DoF in \cite{chenweiwang}.}
	\label{fig_B3U1}
      \end{center}
    \end{figure*}

\subsection{Unstructured Approach for MIMO Interference Channel}
\label{USAP-IC}

In Fig. \ref{fig_DoFIC} we sketch some well known bounds on the normalized sDoF/user (sDoF/user/$N$) as a function of $\gamma \in (0,\ 1]$ for the $G$-user $(G>3)$ interference channel. By symmetry, it suffices to only consider $\gamma \leq 1$. Except for the three-user interference channel, the proper-improper boundary and decomposition based inner bound intersect at a point $\gamma_l<1$ and this point splits the optimal sDoF characterization into a piecewise-linear region and a smooth region characterized by the decomposition based inner bound \cite{wangjafarisit2012, liu-yang-arxiv}. A simple DoF bound obtained by letting all the BSs or users\footnote{To be consistent with the previous sections, we refer to nodes with $N$ antennas as BSs and nodes with $M$ antennas as users and use the usual notions of uplink and downlink.} cooperate (denoted as MAC/BC DoF bound) is also plotted along with the maximum achievable sDoF using random transmit beamforming in the 
uplink. We also plot the curves characterizing the necessary conditions for USAP-uplink and USAP-downlink to be applicable. It can be shown that these two conditions, the proper-improper boundary and decomposition inner bound all intersect at $\gamma_l=\tfrac{(G-1)-\sqrt{(G-1)^2-4}}{2}$. 

We first narrow our focus to region I (shaded blue) in Fig. \ref{fig_DoFIC}, where the optimal sDoF exhibits a piecewise-linear behavior. For the 3-user interference channel, the point of intersection $\gamma_l$ is equal 1, and a complete characterization of this piecewise-linear behavior for all $\gamma \in (0,\ 1 ]$ is provided in \cite{chenweiwang}. Since region I lies below the necessary condition for USAP-uplink/USAP-downlink, USAP-uplink/USAP-downlink is applicable for any $(M,N,d)$ such that $(M/N,d/N)$ falls in this region. Since the optimal sDoF of the three-user interference channel are known for all $\gamma$, we test the scope of USAP-uplink for this channel.

\begin{figure*}[htbp]

      \begin{center}
	\vspace{0.5cm}
	\hspace{-0.5cm}
	\psfrag{ymin}[Br][Br][1]{}
	\psfrag{1y/4}[Br][Br][0.9]{$1/4$}
	\psfrag{3y/11}[Br][Br][0.9]{$3/11$}
	\psfrag{8y/29}[Br][Br][0.9]{\raisebox{-0.7mm}{$8/29$}}
	\psfrag{ypint1}[Br][Br][0.9]{$\frac{3-\sqrt{5}}{5-\sqrt{5}} \cdots \cdot \cdot$}
	\psfrag{opt1}[tc][tc][1]{$\frac{1}{4}$}
	\psfrag{1x/3}[tc][tc][1]{$\frac{1}{3}$}
	\psfrag{4x/11}[tc][tc][1]{$\frac{4}{11}$}
	\psfrag{3x/8}[tc][tc][1]{$\frac{3}{8}$}
	\psfrag{11x/29}[tc][tc][1]{$\frac{11}{29}$}
	\psfrag{pint1}[cc][Bc][1]{\shifttext{0.4ex}{\raisebox{-7ex}{$\begin{matrix}\vdots \\ \frac{3-\sqrt{5}}{2}\end{matrix}$}}}
	\psfrag{xmin}[tc][tc][1]{$\frac{2}{3}$}
	\psfrag{xlabel}[][][1]{$\gamma \ (M/N)$}
 	\psfrag{ylabel}[Bc][Bc][0.95][180]{Normalized DoF/user}
 	\psfrag{USAP-uplink successful}[Bl][Bl][0.9]{USAP-uplink successful}
 	\psfrag{USAP-uplink unsuccessful}[Bl][Bl][0.9]{USAP-uplink unsuccessful}
 	\psfrag{USAP-uplink prerequisite}[Bl][Bl][0.9]{USAP-uplink necessary condition}
	\psfrag{USAP-downlink prerequisite}[Bl][Bl][0.9]{USAP-downlink necessary condition}
 	\psfrag{proper-improper boundary}[Bl][Bl][0.9]{Proper-improper boundary }
 	\psfrag{random beamforming in uplink/downlink}[Bl][Bl][0.9]{Random beamforming in uplink}
 	\psfrag{decomposition based inner bound}[Bl][Bl][0.9]{Decomposition based inner bound}
	\psfrag{MAC/BC DoF bound}[Bl][Bl][0.9]{MAC/BC DoF bound}
 	\psfrag{?}[Bl][Bl][0.9]{}
	\includegraphics[width=6.8in]{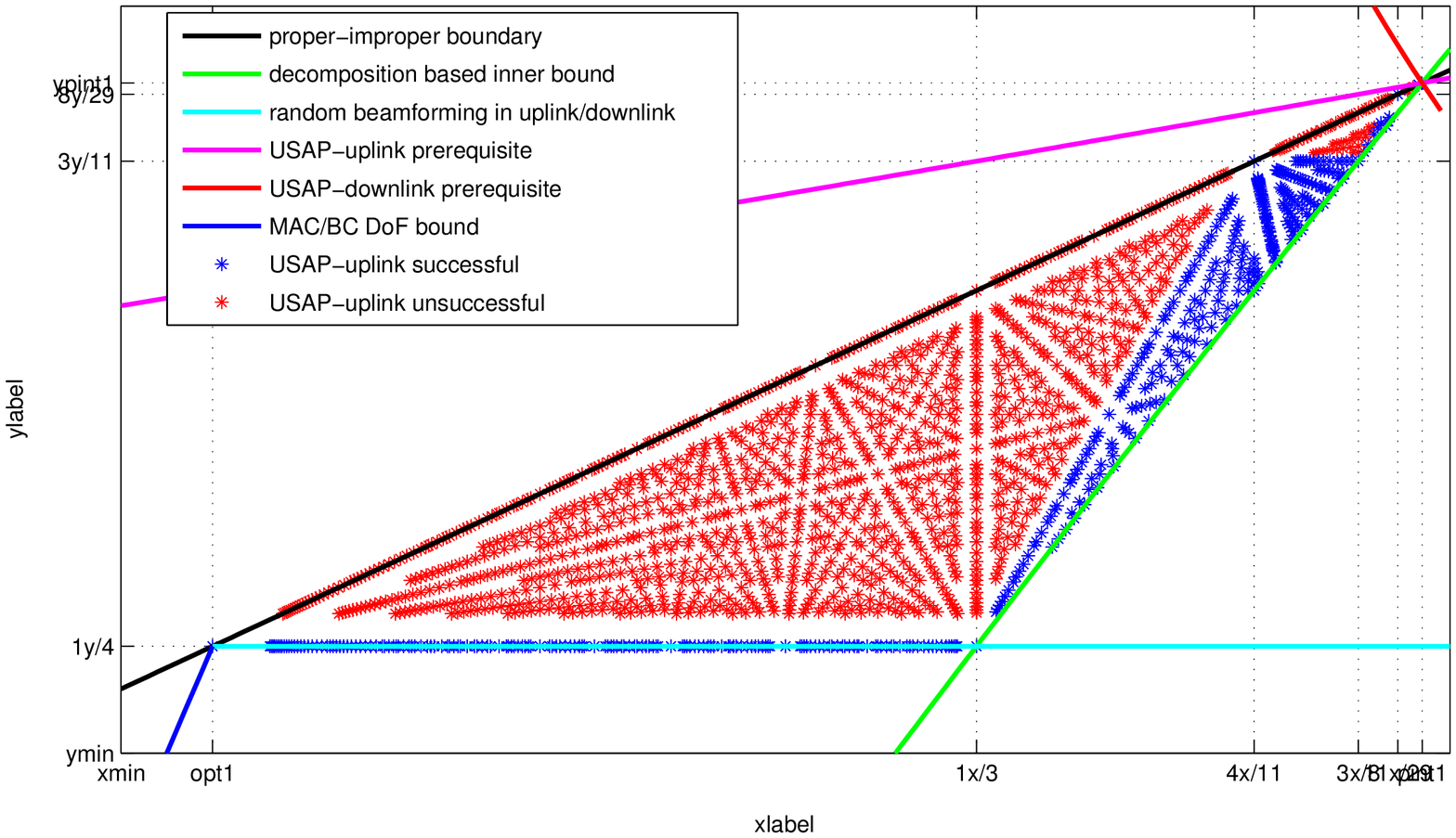}
	\caption{Results of the numerical experiment in region I of the four-user interference channel. Observe that a clear piecewise-linear boundary emerges between the successful and unsuccessful trials of the proposed method. The observed boundary matches with the optimal DoF as characterized in \cite{liu-yang-arxiv}.}
	\label{fig_B4U1b}
      \end{center}

      \begin{center}
	\hspace{-0.5cm}
	\psfrag{?}[Bl][Bl][0.9]{}

	\psfrag{ypint1}[Br][Br][1]{$\frac{3-\sqrt{5}}{5-\sqrt{5}}$}

	\psfrag{2y/5}[Br][Br][1]{$\frac{2}{5}$}
	\psfrag{yopt1}[Br][Br][1]{$\frac{1}{4}$}
	\psfrag{opt1}[tc][tc][1]{$\frac{1}{4}$}
	\psfrag{pint1}[tc][tc][1]{$\frac{3-\sqrt{5}}{2}$}
	
\psfrag{xmin}[tc][tc][1]{$0$}
\psfrag{opt4}[tc][tc][1]{$\frac{1}{3}$}
\psfrag{ymin}[Br][Br][1]{$\frac{3}{20}$}
\psfrag{ymax}[Br][Br][1]{$\frac{2}{5}$}
\psfrag{xmax}[tc][tc][1]{$1$}
\psfrag{t1}[tc][tc][1]{$\frac{\gamma^2}{4\gamma-1}$}
\psfrag{t2}[tc][tc][1]{$\frac{\gamma}{\gamma+1}$}
\psfrag{t3}[tc][tc][1]{$\frac{\gamma+1}{5}$}
\psfrag{t4}[tc][tc][1]{$\frac{1}{4-\gamma}$}
\psfrag{t5}[tc][tc][1]{$\gamma$}

 	\psfrag{xlabel}[][][1]{$\gamma \ (M/N)$}
 	\psfrag{ylabel}[Bc][Bc][0.95][180]{Normalized DoF/user}
 	\psfrag{USAP-uplink successful}[Bl][Bl][0.9]{USAP-uplink successful}
 	\psfrag{USAP-uplink unsuccessful}[Bl][Bl][0.9]{USAP-uplink unsuccessful}
 	\psfrag{USAP-uplink prerequisite}[Bl][Bl][0.9]{USAP-uplink necessary condition}
	\psfrag{USAP-downlink prerequisite}[Bl][Bl][0.9]{USAP-downlink necessary condition}
 	\psfrag{proper-improper boundary}[Bl][Bl][0.9]{Proper-improper boundary }
 	\psfrag{random beamforming in uplink}[Bl][Bl][0.9]{Random beamforming in uplink}
 	\psfrag{decomposition based inner bound}[Bl][Bl][0.9]{Decomposition based inner bound}
	\psfrag{MAC/BC DoF bound}[Bl][Bl][0.9]{MAC/BC DoF bound}
	\includegraphics[width=7in]{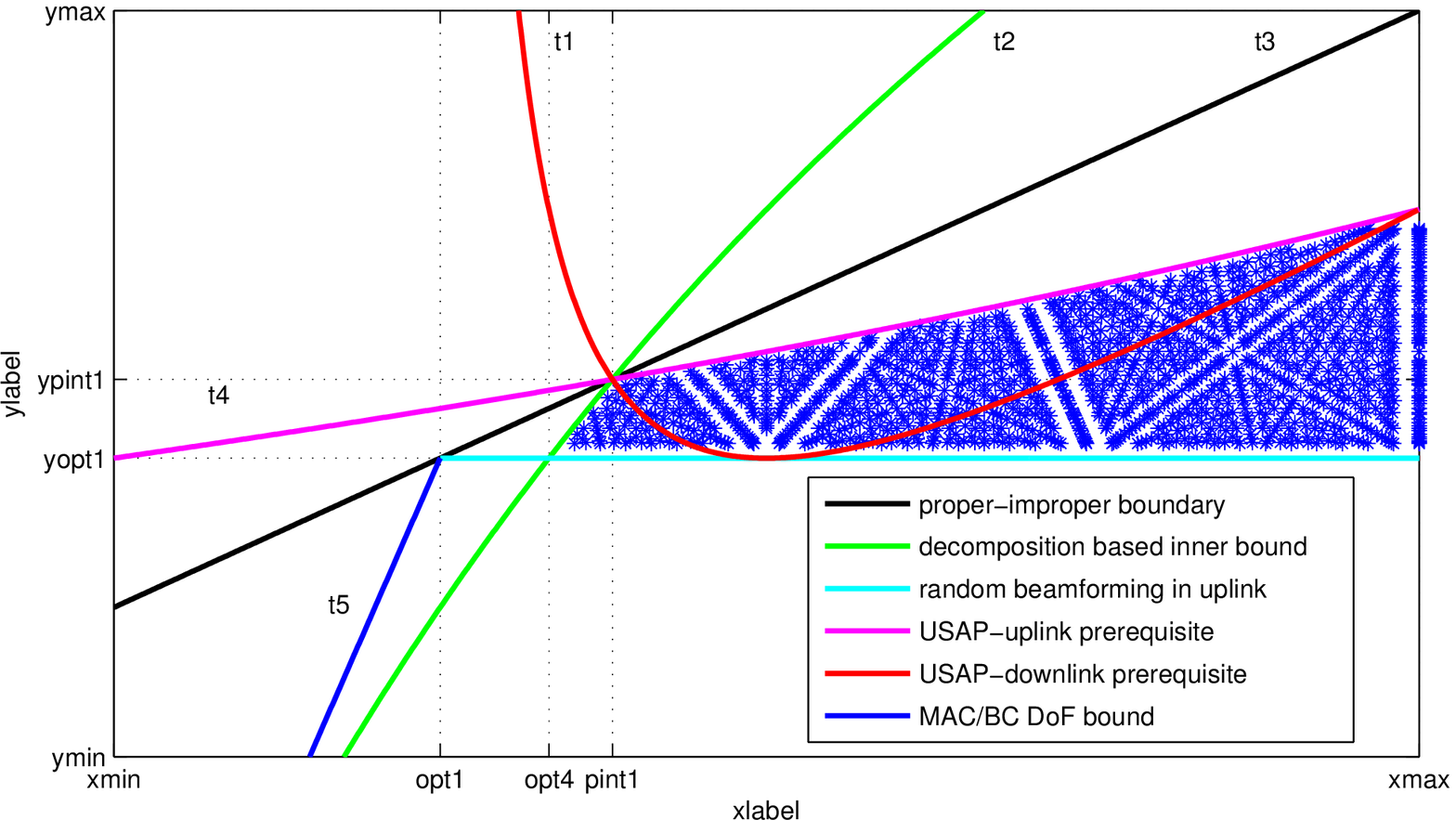}
	\caption{Results of the numerical experiment in region II of the four-user interference channel. Observe that the necessary condition for USAP-uplink completely determines the success of failure of the proposed approach, making the polynomial identity test redundant.}
	\label{fig_B4U1a}
      \end{center}

    \end{figure*}

\label{USAP-uplink-CN}

We carry out the numerical experiment described earlier for the three-user interference channel with values of $M$, $N$, and $d$ such that $(M/N,d/N)$ falls in region I, with $N_{max}=N_{max}=75$. The results of this experiment are shown in Fig. \ref{fig_B3U1}, where we observe that a clear piecewise-linear boundary emerges between the successful and unsuccessful trials on the polynomial identity test. This boundary exactly matches with the piecewise-linear optimal sDoF as detailed in \cite{chenweiwang}, suggesting that such an approach is capable of achieving the optimal sDoF of the three-user interference channel. We also observe that the boundary characterizing the necessary conditions for USAP-uplink has no particular significance and the success or failure of the proposed method is completely determined by the polynomial identity test.

A similar piecewise linear boundary also emerges in the case of the four-user interference channel as seen in Fig \ref{fig_B4U1b} for $\gamma \in ( 0,\ \gamma_l)$. These results are in-line with the results on the optimal sDoF of this network as established in \cite{liu-yang-arxiv}. Further, in contradiction to the conjecture in \cite{wangjafarisit2012}, which states that when $\gamma\geq 3/8$, the decomposition based approach achieves the optimal DoF, we see from Fig. \ref{fig_B4U1b} that the piecewise-linear behavior extends further, all the way up to $\gamma_l$. As an example, numerical experiments show that the ($4,1,11,29$) network has 8 DoF/user, and it is easy to see that this system lies strictly above the decomposition based inner bound. In fact, this is a feasible system lying right on the proper-improper boundary.

These observations lead us to conjecture that for any $G$-user interference channel, whenever $\gamma \in  (0, \gamma_l )$, the optimal sDoF exhibits a piecewise-linear behavior and the optimal sDoF in this regime can be achieved by constructing linear beamformers using the proposed method.   

Shifting focus to region II (shaded yellow) in Fig. \ref{fig_DoFIC}, note that this region lies entirely below the decomposition based inner bound and does not impact the characterization of the optimal sDoF. Also note that this region lies below the proper-improper boundary and the necessary condition for USAP-uplink, thus making USAP-uplink applicable in this region. This region is bounded below by the maximum DoF that can be trivially achieved using random transmit beamforming in the uplink. In order to verify the applicability of USAP-uplink in this region, we carry out the numerical experiment outlined earlier on the four-user interference channel for values of $(M,N,d)$ such that the $(M/N,d/N)$ falls in region II, with $N_{max}=N_{max}=75$. The results are presented in Fig. \ref{fig_B4U1a}, where it is seen that the necessary condition for USAP-uplink, $LN < KMd$, completely determines the success of the proposed method, with the 
subsequent numerical test proving to be redundant. It is also significant to note that these results bring to light a computational boundary that divides systems for which computing transmit beamformers for interference alignment is \textit{easy} (requires solving a system of linear equations; no worse than $O((GKMd)^3))$ in complexity) and systems that require techniques of higher complexity such as iterative algorithms \cite{gomadam,peters2009,dimakis,gokulICASSP} to design such transmit beamformers. 

\begin{figure*}[ht]
      \begin{center}
	\hspace{-0.5cm}
	\psfrag{0y}[Bc][Bc][1]{$0$}
	\psfrag{y1}[Br][Br][1]{\shifttext{1mm}{\raisebox{-0.2ex}{$\frac{1}{GK}$}}}
	\psfrag{y2}[Br][Br][1]{\raisebox{0ex}{$\frac{\gamma_l}{K\gamma_l+1}\cdots\cdot$}}
	\psfrag{y3}[Br][Br][1]{$\frac{\gamma_r}{K\gamma_r+1}$}
	\psfrag{y4}[Br][Br][1]{$\frac{1}{K}$}
	\psfrag{0x}[tc][tc][1]{$0$}
	\psfrag{opt1}[tc][tc][1]{\shifttext{-1ex}{$\frac{1}{GK}$}}
	\psfrag{pint1}[tc][tc][1]{\shifttext{1ex}{\raisebox{-2.5mm}{$\gamma_l$}}}
	\psfrag{1x}[tc][tc][1]{\raisebox{-3mm}{$1$}}
	\psfrag{pint2}[tc][tc][1]{\raisebox{-2.5mm}{$\gamma_r$}}
	\psfrag{opt2}[tc][tc][1]{$(G-1)+\frac{1}{K}$}
	\psfrag{opt3}[tc][tc][1]{\raisebox{-3mm}{$G$}}
	\psfrag{opt5}[tc][tc][1]{$\frac{GK-1}{K}$}
	\psfrag{t4}[Bl][Bl][1]{$\frac{\gamma}{K\gamma+1}$} 	
	\psfrag{t3}[Bl][Bl][1]{\shifttext{-1.5cm}{$\frac{\gamma+1}{GK+1}$}}
		
	\psfrag{t5}[Bl][Bl][1]{\shifttext{-6mm}{$\frac{\gamma}{GK}$}}
	\psfrag{t2}[Bl][Bl][1]{\shifttext{-1.9cm}{$\frac{1}{K(G-\gamma)}$}}
	\psfrag{t1}[Bl][Bl][1]{\shifttext{-0.5cm}{$\frac{\gamma^2}{GK\gamma-1}$}}
	\psfrag{t6}[Bl][Bl][1]{\shifttext{-0.3cm}{$\frac{\gamma}{GK}$}}
	\psfrag{l}[Bl][Bl][1]{\shifttext{-1mm}{I}} 	
	\psfrag{m}[Bl][Bl][1]{II} 	
	\psfrag{n}[Bl][Bl][1]{III} 	
	\psfrag{xlabel}[][][1]{\shifttext{-5mm}{$\gamma \ (M/N)$}}
 	\psfrag{ylabel}[Bc][Bc][0.95][180]{\shifttext{1cm}{Normalized DoF/user}}
 	\psfrag{USAP-uplink applicable}[Bl][Bl][0.9]{USAP-uplink applicable}
 	\psfrag{MAC/BC DoF bound}[Bl][Bl][0.9]{MAC/BC DoF bound}
 	\psfrag{USAP-uplink prerequisite}[Bl][Bl][0.9]{USAP-uplink necessary condtion}
 	\psfrag{proper-improper boundary}[Bl][Bl][0.9]{Proper-improper boundary }
 	\psfrag{random beamforming in uplink/downlink}[Bl][Bl][0.9]{Random beamforming in uplink/downlink}
 	\psfrag{decomposition based inner bound}[Bl][Bl][0.9]{Decomposition based inner bound}
 	\psfrag{USAP-downlink prerequisite}[Bl][Bl][0.9]{USAP-downlink necessary condition}
\psfrag{piecewise-linear optimal sDoF}[Bl][Bl][0.9]{Piecewise-linear optimal sDoF}
	\includegraphics[width=6.5in]{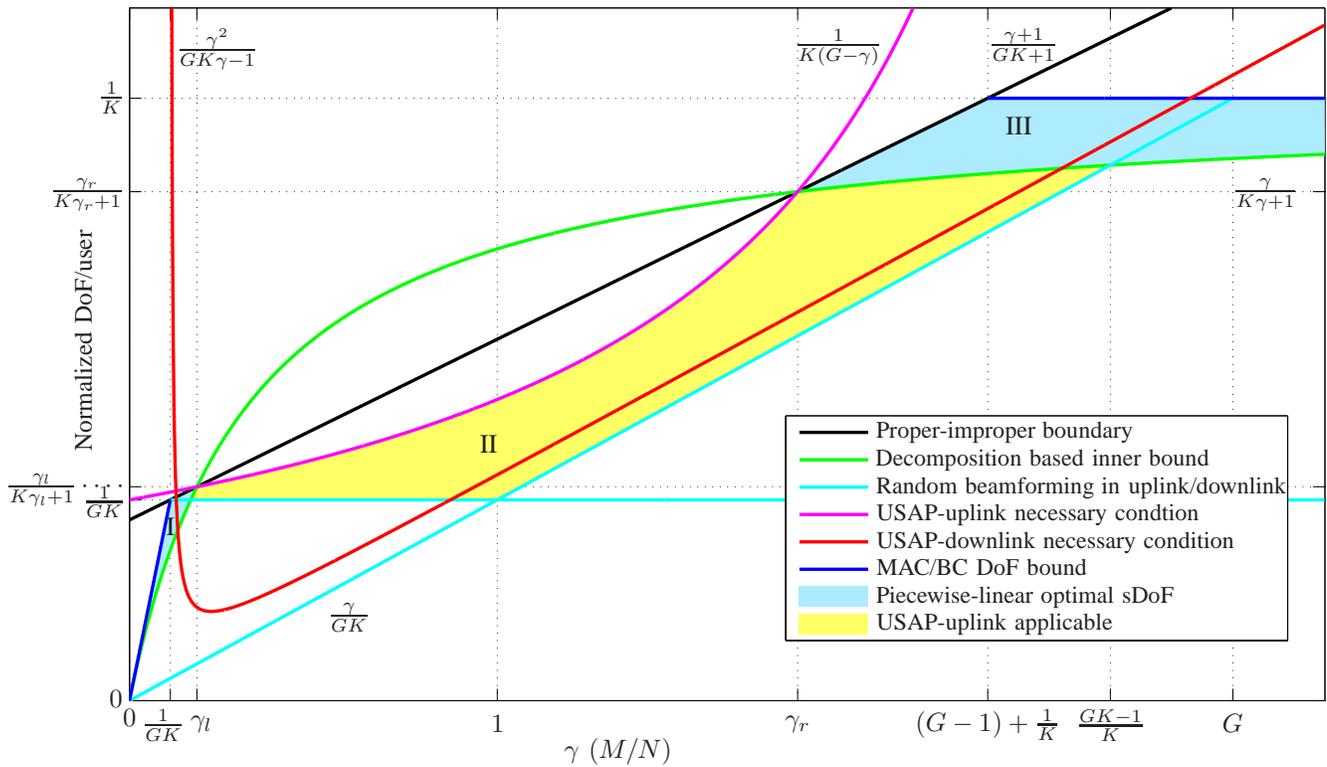}
	\caption{Inner and outer bounds on the DoF of the $G$-cell, $K$-user/cell network. The optimal DoF consists of infinitely many piecewise-linear components for $\gamma < \gamma_l$ and $\gamma > \gamma_r$, while the decomposition based approach determines the optimal DoF when $ \gamma_l \leq \gamma \leq \gamma_r$.  }
	\label{fig_DoFCN}
      \end{center}
    \end{figure*}

So far, except for networks where the underlying structure for interference alignment is known (the three-user interference channel etc.), solving for aligned beamformers of a given network meant solving a system of bilinear equations through computationally intensive iterative algorithms that can sometimes take several thousand iterations to converge \cite{schmidtTSP}. Our observations suggest that except when the DoF demand $d$ placed on a $(G,1,M,N)$ network is such that $\gamma > \gamma_l$ and $(\gamma,d/N)$ is sandwiched between the necessary condition for USAP-uplink and the proper-improper boundary, iterative algorithms are not necessary and that the aligned beamformers can be computed by simply solving a system of linear equations.

It can be shown that USAP-downlink also exhibits a similar piecewise linear behavior whenever $\gamma <\gamma_l$. When $\gamma \geq \gamma_l$, since the necessary condition for USAP-uplink lies above the necessary condition for USAP-downlink, the set of systems that can take advantage of the proposed method remains unchanged.

\subsection{USAP-uplink for MIMO Cellular Networks}

\begin{figure*}[hbtp]
      \begin{center}
	\hspace{-0.5cm}
	\psfrag{0}[Br][Br][1]{0}
	\psfrag{1y/8}[Br][Br][0.75]{$1/8$}
	\psfrag{1y/7}[Br][Br][0.75]{$1/7$}
	\psfrag{3y/20}[Br][Br][0.75]{$3/20$}
	\psfrag{2y/13}[Br][Br][0.75]{$2/13$}
	\psfrag{5y/32}[Br][Br][1]{}
	\psfrag{3y/19}[Br][Br][1]{}
\psfrag{7y/44}[Br][Br][1]{}
\psfrag{4y/25}[Br][Br][1]{}
\psfrag{9y/56}[Br][Br][1]{}
\psfrag{5y/31}[Br][Br][1]{}
\psfrag{11y/68}[Br][Br][1]{}
\psfrag{1y/6}[Br][Br][0.75]{$1/6$}

\psfrag{6y/35}[Br][Br][1]{}
\psfrag{11y/64}[Br][Br][1]{}
\psfrag{5y/29}[Br][Br][1]{}
\psfrag{9y/52}[Br][Br][1]{}
\psfrag{4y/23}[Br][Br][1]{}
\psfrag{7y/40}[Br][Br][1]{}
\psfrag{3y/17}[Br][Br][1]{}
\psfrag{5y/28}[Br][Br][0.75]{}
\psfrag{2y/11}[Br][Br][0.75]{\shifttext{2mm}{$2/11$}}
\psfrag{3y/16}[Br][Br][0.75]{$3/16$}
\psfrag{1y/5}[Br][Br][0.75]{$1/5$} 
\psfrag{1y/4}[Br][Br][0.75]{$1/4$} 
\psfrag{1x/8}[tc][tc][1]{$\frac{1}{8}$}
	\psfrag{1x/4}[tc][tc][1]{$\frac{1}{4}$}
	\psfrag{2x/7}[tc][tc][1]{$\frac{2}{7}$}
	\psfrag{1x/3}[tc][tc][1]{$\frac{1}{3}$}
	\psfrag{7x/20}[tc][tc][1]{$\phantom{.}\frac{7}{20}$}
	\psfrag{3x/8}[tc][tc][1]{$\frac{3}{8}$}
	\psfrag{5x/13}[tc][tc][1]{}
	\psfrag{2x/5}[tc][tc][1]{}
	\psfrag{13x/32}[tc][tc][1]{}
	\psfrag{5x/12}[tc][tc][1]{}
	\psfrag{8x/19}[tc][tc][1]{}
	\psfrag{3x/7}[tc][tc][1]{}
	\psfrag{19x/44}[tc][tc][1]{}

	\psfrag{7x/16}[tc][tc][1]{}
\psfrag{11x/25}[tc][tc][1]{}
\psfrag{4x/9}[tc][tc][1]{}
\psfrag{25x/56}[tc][tc][1]{}
\psfrag{9x/20}[tc][tc][1]{}
\psfrag{14x/31}[tc][tc][1]{}
\psfrag{5x/11}[tc][tc][1]{}
\psfrag{1x/2}[tc][tc][1]{$\frac{1}{2}$}
\psfrag{35x/64}[tc][tc][1]{}
\psfrag{11x/20}[tc][tc][1]{}
\psfrag{16x/29}[tc][tc][1]{}
\psfrag{5x/9}[tc][tc][1]{}
\psfrag{29x/52}[tc][tc][1]{}
\psfrag{9x/16}[tc][tc][1]{}
\psfrag{13x/23}[tc][tc][1]{}
\psfrag{4x/7}[tc][tc][1]{}
\psfrag{23x/40}[tc][tc][1]{}
\psfrag{7x/12}[tc][tc][1]{}
\psfrag{10x/17}[tc][tc][1]{}
\psfrag{3x/5}[tc][tc][1]{}
\psfrag{17x/28}[tc][tc][1]{}
\psfrag{5x/8}[tc][tc][1]{}
\psfrag{7x/11}[tc][tc][1]{$\frac{7}{11}$}
\psfrag{2x/3}[tc][tc][1]{$\frac{2}{3}$}
\psfrag{11x/16}[tc][tc][1]{$\frac{11}{16}$}
\psfrag{3x/4}[tc][tc][1]{$\frac{3}{4}$}
\psfrag{4x/5}[tc][tc][1]{$\frac{4}{5}$}
\psfrag{1x}[tc][tc][1]{$1x$}
\psfrag{5x/4}[tc][tc][1]{$\frac{5}{4}$}

	\psfrag{1x}[tc][tc][1]{$1$}
	\psfrag{0x}[tc][tc][1]{$0$}
 	\psfrag{xlabel}[][][1]{$\gamma \ (M/N)$}
 	\psfrag{ylabel}[Bc][Bc][0.95][180]{Normalized DoF/user}
 	\psfrag{uSAAP successful}[Bl][Bl][0.9]{USAP-uplink successful}
 	\psfrag{uSAAP unsuccessful}[Bl][Bl][0.9]{USAP-uplink unsuccessful}
 	\psfrag{uSAAP prerequisite}[Bl][Bl][0.9]{USAP-uplink necessary condition}
	\psfrag{dSAAP prerequisite}[Bl][Bl][0.9]{USAP-downlink necessary condition}
 	\psfrag{proper-improper boundary}[Bl][Bl][0.9]{Proper-improper boundary }
 	\psfrag{random beamforming in uplink}[Bl][Bl][0.9]{Random beamforming in uplink}
 	\psfrag{decomposition inner bound}[Bl][Bl][0.9]{Decomposition based inner bound}	
	\psfrag{BSs=2, Users=4}[Bl][Bl][0.9]{}
	\includegraphics[width=7in]{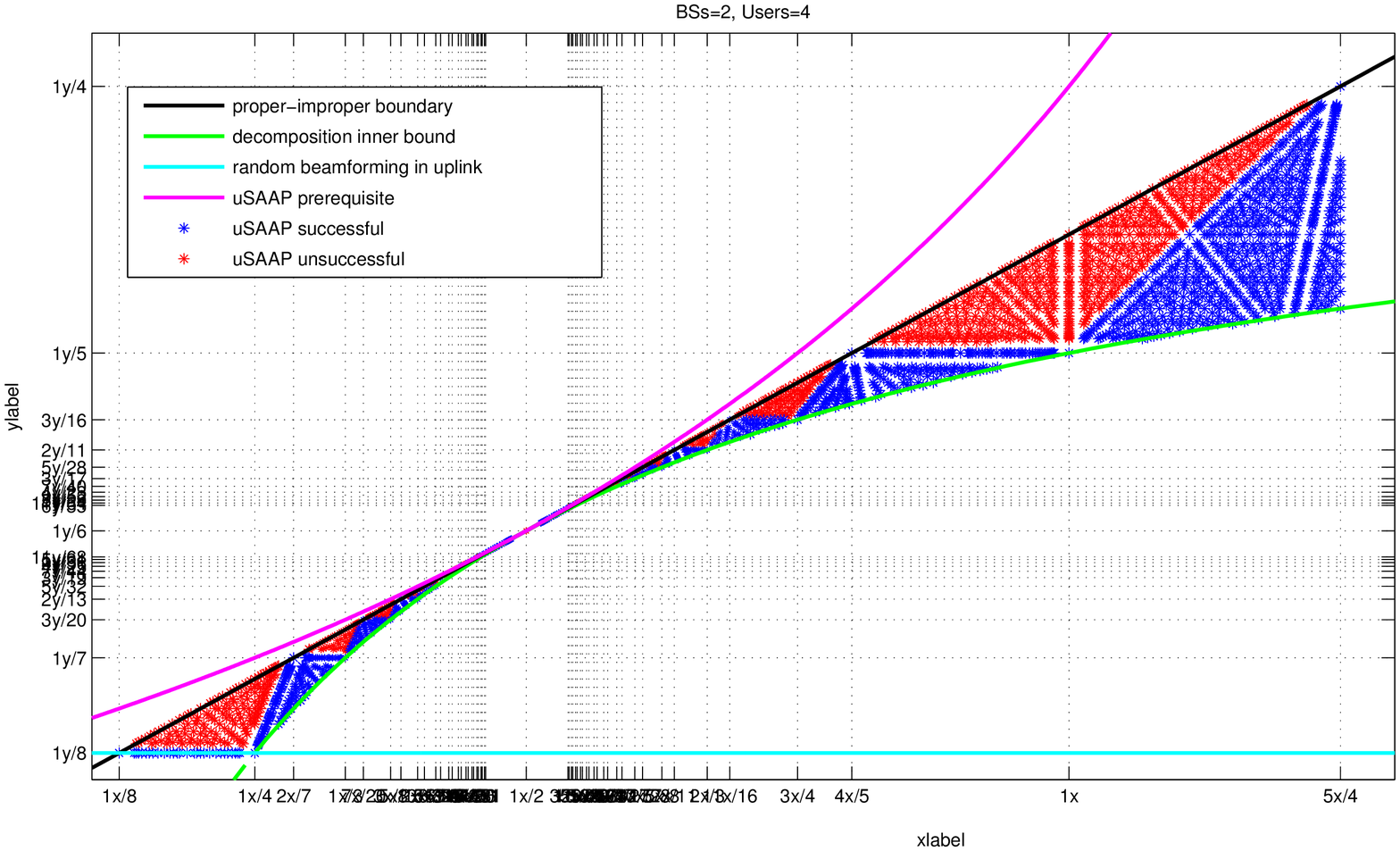}
	\caption{Results of the numerical experiment for the two-cell, four-user/cell network. Note the clear piecewise-linear boundary that emerges between the successful and unsuccessful trials of the proposed method. The observed boundary matches with the result in \cite{liu-yang-arxiv}.}
	\label{fig_B2U4}
      \end{center}

      \begin{center}
	\hspace{-0.5cm}
\psfrag{0y}[Br][Bc][1]{$0$}
\psfrag{0x}[tc][cc][1]{$0$}
	\psfrag{y4}[Br][Br][1]{$\frac{1}{2}$}
	\psfrag{y3}[Br][Br][1]{$\frac{2\sqrt{2}+1}{7\sqrt{2}}\cdot \cdot$}
	\psfrag{y2}[Br][Br][1]{$\frac{2\sqrt{2}-1}{7\sqrt{2}}\cdot \cdot$}
	\psfrag{0.34y}[Br][Br][1]{}
	\psfrag{y1}[Br][Br][1]{$\frac{1}{6}$}
	\psfrag{1x/8}[tc][tc][1]{$\frac{1}{8}$}
	\psfrag{1x/4}[tc][tc][1]{$\frac{1}{4}$}
	\psfrag{pint1}[tc][tc][1]{$\frac{\sqrt{2}-1}{\sqrt{2}}$}
	\psfrag{pint2}[tc][tc][1]{$\frac{\sqrt{2}+1}{\sqrt{2}}$}
	\psfrag{opt1}[tc][tc][1]{$\frac{1}{6}$}
	\psfrag{opt2}[tc][tc][1]{$\frac{5}{2}$}
	\psfrag{opt3}[tc][tc][1]{$3$}
	\psfrag{1x}[tc][tc][1]{$1$}
	
 	\psfrag{t1}[Bl][Bl][1]{$\frac{\gamma^2}{6\gamma-1}$}

 	\psfrag{t3}[Bl][Bl][1]{$\frac{\gamma+1}{7}$}
 	\psfrag{t2}[Bl][Bl][1]{$\frac{1}{2(3-\gamma)}$}
 	\psfrag{t4}[Bl][Bl][1]{$\frac{\gamma}{2\gamma+1}$}
 	\psfrag{t5}[Bl][Bl][1]{$\frac{1}{3-\gamma}$}
\psfrag{t5}[Bl][Bl][1]{$\frac{\gamma}{6}$}
 	\psfrag{xlabel}[][][1]{\raisebox{-2ex}{$\gamma \ (M/N)$}}
 	\psfrag{ylabel}[Bc][Bc][0.95][180]{Normalized DoF/user}
 	\psfrag{MAC/BC DoF bound}[Bl][Bl][0.9]{MAC/BC DoF bound}
 	\psfrag{USAP-uplink prerequisite}[Bl][Bl][0.9]{USAP-uplink necessary condition}
 	\psfrag{proper-improper boundary}[Bl][Bl][0.9]{Proper-improper boundary }
 	\psfrag{random beamforming in uplink/downlink}[Bl][Bl][0.9]{Random beamforming in uplink/downlink}
 	\psfrag{decomposition based inner bound}[Bl][Bl][0.9]{Decomposition based inner bound}
 	\psfrag{USAP-downlink prerequisite}[Bl][Bl][0.9]{USAP-downlink necessary condition}

 	\psfrag{USAP-uplink successful}[Bl][Bl][0.9]{USAP-uplink successful}
	\psfrag{piecewise-linear optimal sDoF}[Bl][Bl][0.9]{Piecewise-linear optimal sDoF}	
	\psfrag{BSs=3, Users=2}[Bl][Bl][0.9]{}
	\includegraphics[width=7in]{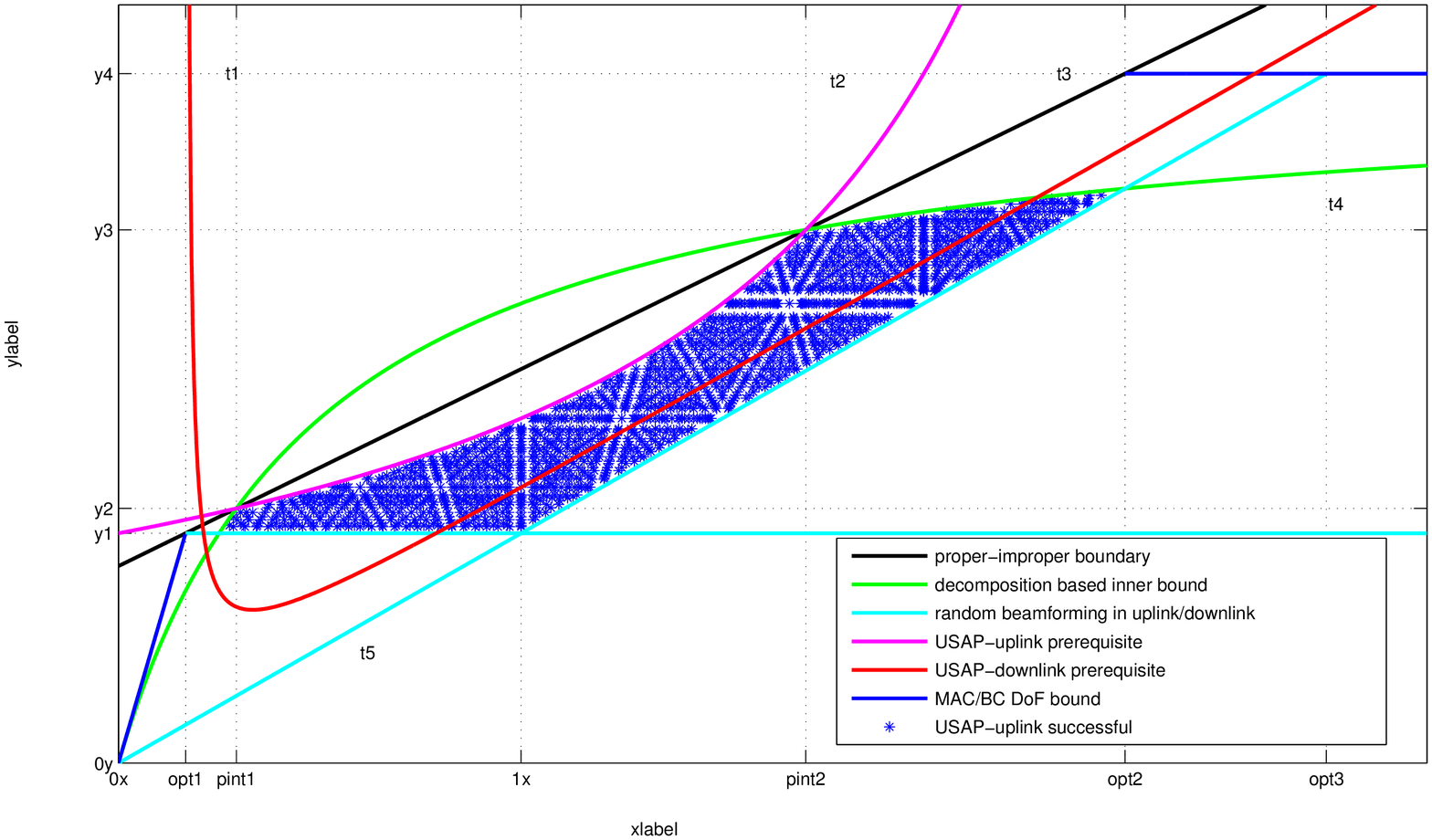}
	\caption{Results of the numerical experiment in region II of the three-cell, two-user/cell network. Observe that the necessary condition for USAP-uplink completely determines the success of failure of the proposed approach.}
	\label{fig_B3U2}
      \end{center}
    \end{figure*}

Fig. \ref{fig_DoFCN} is a sketch analogous to Fig. \ref{fig_DoFIC} and applies to any MIMO cellular network, with the exception of the two-cell, two-user/cell and the two-cell, three-user/cell networks. Note that $\gamma$ is no longer restricted to $(0,1 ]$. While the necessary condition for USAP-uplink, the proper-improper boundary and the decomposition based inner bound all intersect at the same two points $\gamma_l$ and $\gamma_r$, the same is not true for the necessary condition of USAP-downlink. The points of intersection $\gamma_l$ and $\gamma_r$ can be computed to be the points $\tfrac{K(G-1)\pm \sqrt{K^2(G-1)^2-4K}}{2K}$. The optimal sDoF of a general cellular network is recently investigated in \cite{liu-yang-arxiv}. The optimal sDoF as characterized in \cite{liu-yang-arxiv} has a piecewise-linear behavior in regions I ($\gamma < \gamma_l$) and III ($\gamma> \gamma_r$) (see Fig. \ref{fig_DoFCN}). Based on the results in \cite{wangjafarisit2012} for the MIMO interference channel, the decomposition based inner bound is likely to characterize the optimal DoF whenever $\gamma_l \leq\gamma \leq \gamma_r$.

Focusing on regions I and III, we note that USAP-uplink is applicable to all points in these two regions. To gain insight on the scope of this technique for cellular networks, we perform the numerical experiment outlined earlier for the 2-cell 4-user/cell network. For this network, the proper-improper boundary and the decomposition based inner bound touch each other at $\gamma=1/2$, i.e., $\gamma_l=\gamma_r=1/2$, with the decomposition based inner bound lying entirely below the proper-improper boundary. The results of the numerical experiment are plotted in Fig. \ref{fig_B2U4} and it is easy to see that a clear piecewise linear boundary emerges between the successful and unsuccessful trials, with the successful or failure of the proposed method completely determined by the polynomial identity test.

Remarkably, the boundary of the achievable sDoF determined by our unstructured approach matches with the optimal sDoF claimed in \cite{liu-yang-arxiv}. This leads us to conjecture that for any $G$-cell $K$-user/cell cellular network with $(G,K) \notin \{(2,2),\ (2,3) \}$, when $\gamma \in  (0, \gamma_l ) \cup (\gamma_r \infty)$ the optimal sDoF can be achieved by constructing linear beamformers using the proposed method. Further, the optimal sDoF in this regime exhibits a piecewise linear behavior as also observed in \cite{liu-yang-arxiv}, where a structured approach to linear beamforming based in irresolvable subspace chains is used to establish these results, unlike the approach discussed here.

Observations on the applicability of USAP-uplink in region II\footnote{Note that for cellular networks with $G>4$, the inner bound obtained through random transmit beamforming in the downlink $(GKd\leq M)$ and the USAP-uplink's necessary condition $(LN < KMd)$ intersect at two points, thereby splitting region II into two separate parts. This does not alter any of the observations made in this section.} are similar to observations made in the context of the interference channel. By running the numerical experiment on the 3-cell, two-user/cell network for $(M,N,d)$ such that $(M/N,d/N)$ lies in region II, we note from Fig.~\ref{fig_B3U2} that the necessary condition $LN < KMd$ also ensures the success of the polynomial identity test. It is thus seen that even in the regime where $\gamma_l \leq \gamma \leq \gamma_r $, a significant portion of the achievable sDoF can be achieved using the unstructured approach.

A major difference between interference channels and cellular networks arises with respect to the scope and limitations of USAP-downlink. It is clear from Fig. \ref{fig_DoFCN} that due to the nature of the necessary condition associated with USAP-downlink, USAP-downlink cannot be used to establish the same piecewise linear behavior in regions I and III, as observed with USAP-uplink. Further, as stated earlier, since direct channels get involved in the linear system generated by USAP-downlink, verifying that a solution to the linear system also satisfies conditions for interference alignment involves further checks such as ensuring the separability of signal and interference. Due to these reasons, the utility of USAP-downlink for cellular networks is quite limited and offers no particular advantages over USAP-uplink.

\section{Conclusion}
In this paper we investigate the DoF of MIMO cellular networks. In particular we establish the achievable DoF through the decomposition based approach and linear beamforming schemes. Through a new set of outer bounds, we establish conditions for optimality of the decomposition based approach. Through these outer bounds it is apparent that the optimal DoF of a general $G$-cell, $K$-users/cell network exhibits two distinct regimes, one where decomposition based approach dominates over linear beamforming and vice-versa. With regard to linear beamforming, we develop a structured approach to linear beamforming that is DoF-optimal in small networks such as the two-cell two-users/cell network and the two-cell three-users/cell network. We also develop an unstructured approach to linear beamforming that is applicable to general MIMO cellular networks, and through numerical experiments, show that such an approach is capable of achieving the optimal-sDoF for a wide class of MIMO cellular networks. 

Although the structured design of linear beamformers takes a disciplined approach to constructing beamformers, the wide applicability of the unstructured approach and its apparent ability to achieve the optimal sDoF in regimes where the sDoF curve exhibits a piecewise-linear behavior renders it highly attractive. The remarkable effectiveness of the unstructured approach warrants a deeper investigation on the role of randomization and that of the polynomial identity test in designing aligned beamformers.

\appendices



\section{DoF Outer Bound for the Two-Cell Three-Users/Cell Network When $\frac{5}{9}  \leq \gamma  < \frac{3}{4}$ }
\label{genieouterboundB}
\label{2by3bound_section}
In this section we show that for the two-cell three-users/cell MIMO cellular network whenever $\frac{5}{9}  \leq \gamma  \leq \frac{3}{4}$, no more than $ \max \left ( \tfrac{2N}{9}, \tfrac{M}{3} \right )$ DoF/user are possible. Since there is no duality associated with the information theoretic proof presented here, we need to establish this result separately for uplink and downlink. Similar to \cite{chenweiwang}, we first perform an invertible linear transformation at the users and the base-stations. The linear transformation involves multiplication by a full rank matrix at each user and BS. Let the $M\times M$ transformation matrix at user $(i,j)$ be denoted as $\bt T_{ij}$ and the $N\times N$ transformation matrix at BS $\bar{i}$ be denoted as $\bt R_{\bar{i}}$. Using these transformations the effective channel between user $(i,j)$ and BS $\bar{i}$ is given by $\bt R_{\bar{i}} \bt H_{(ij,\bar{i})}\bt T_{ij}$. Subsequent to this transformation, we first consider the uplink scenario and identify genie signals that enable the BSs to decode all the messages in the network and set up a bound on the sum-rate of the network. Using the same transformation, we then identify genie signals to establish the bound in the downlink. We start by considering the case when $5/9 \leq \gamma \leq 2/3$.

Throughout this section we use the relative indices $i$ and $\bar{i}$ when referring to the two cells and use the notation $ij$ to denote the $j$th user in $i$th cell. The vector random variables corresponding to the transmit signal $\bt x$, received signal $\bt y$ and additive noise $\bt z$  are denoted as $\bt X$, $\bt Y$ and $\bt Z$, respectively. $W$ denotes a uniform discrete random variable associated with the transmitted message at a transmitter.

\subsubsection{DoF Outer Bound When $5/9 \leq \gamma \leq 2/3$}

We divide the set of $N$ antennas at BS $\bar{i}$ into three groups and denote them as $\bar{i}a$, $\bar{i}b$ and $\bar{i}c$. The sets $\bar{i}a$ and $\bar{i}c$ contain the first and last $N-M$ antennas each while set $\bar{i}b$ has the remaining $2M-N$ antennas. Let the $M$ antennas at user $ij$ be denoted as $ijk$ where $k \in \{1,2,\cdots , M \}$. Using a similar notation for BS antennas, let  $\bt H_{(ij,\bar{i}p:\bar{i}q)}$ represent the channel from user $ij$ to the subset of BS antennas from the $p^{th}$ antenna to the $q^{th}$ antenna.

We first focus on the $N \times M$ channel from user $i1$ to BS $\bar{i}$. We set the first $N-M$ rows of $\bt R_{\bar{i}}$ to be orthogonal to the columns of $\bt H_{ij, \bar{i}}$. Since $\bt H_{(ij, \bar{i})}$ spans only $M$ of the $N$ dimensions at BS $\bar{i}$, it is possible to choose such a set of vectors. Similarly, the next $2M-N$ and $N-M$ rows of $\bt R_{\bar{i}}$ are chosen to be orthogonal to user $i2$ and user $i3$ respectively. Since all channels are assumed to be generic, matrix $\bt R_{\bar{i}}$ is guaranteed to be full rank almost surely.

On the user side, user $i1$ inverts the channel to the last $M$ antennas of BS $\bar{i}$, i.e., $\bt T_{i1}=(\bt H_{(i1,\bar{i}N-M+1:\bar{i}N)})^{-1}$, while user $i3$ inverts the channel to the first $M$ antennas of BS $\bar{i}$, i.e., $\bt T_{i3}=(\bt H_{(i1,\bar{i}1:\bar{i}M)})^{-1}$. We let $\bt T_{i2}=\bt I$. The signal structure resulting from such a transformation is shown in Fig. \ref{2by3modelD}.

\begin{figure*}[t]
      \begin{center}
	\psfrag{x1}{$\bar{i}a$}
	\psfrag{x2}{$\bar{i}b$}
	\psfrag{x3}{$\bar{i}c$}
	\psfrag{s1}[Bl][Bl][0.8]{$N-M$}
	\psfrag{s2}[Bl][Bl][0.8]{$2M-N$}
	\psfrag{s3}[Bl][Bl][0.8]{$N-M$}
	\psfrag{x4}[Bl][Bl][0.8]{$i1$}
	\psfrag{x5}[Bl][Bl][0.8]{$i2$}
	\psfrag{x6}[Bl][Bl][0.8]{$i3$}
	\psfrag{s4}[Bl][Bl][0.8]{$M$}
	\psfrag{s5}[Bl][Bl][0.8]{$M$}
	\psfrag{s6}[Bl][Bl][0.8]{$M$}
	\psfrag{s7}[Bl][Bl][0.8]{$g_{i1}(\bt x_{\bar{i}b},\bt x_{\bar{i}c})$}
	\psfrag{s8}[Bl][Bl][0.8]{$g_{i2}(\bt x_{\bar{i}a},\bt x_{\bar{i}c})$}
	\psfrag{s9}[Bl][Bl][0.8]{$g_{i3}(\bt x_{\bar{i}b},\bt x_{\bar{i}c})$}
	\psfrag{t1}[Bl][Bl][0.8]{$g_{\bar{i}a}(\bt x_{i2},\bt x_{i31:i3(N-M)})$}
	\psfrag{t2}[Bl][Bl][0.8]{$g_{\bar{i}b}(\bt x_{i11:i1(2M-N)},\bt x_{i3(N-M+1):i3M})$}
	\psfrag{t3}[Bl][Bl][0.8]{$g_{\bar{i}c}(\bt x_{i1(2M-N+1):i1M}, \bt x_{i2})$}
	\psfrag{u1}[Bc][Bc][0.8]{Users in Cell $i$}
	\psfrag{u2}[][][0.8]{Base-station $\bar{i}$}
	\includegraphics[scale=0.48]{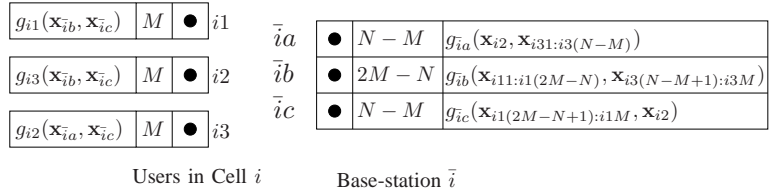}
	\caption{The signal structure obtained after linear transformation for the case when $\gamma \leq 2/3$. Note that the figure does not include signals from the same cell.}
	\label{2by3modelD}
      \end{center}
    \end{figure*}

\paragraph{DoF Bound in the Uplink}

Let $w_{ij}$ be the message from user $ij$ to BS $\bar{i}$. This message is mapped to a $Mn\times 1 $ codeword $ \bt x^n_{ij}$, where $n$ is the length of the code. = We use the notation $\bt x^n_{ijp}$ to denote the transmitted signal on the $k$th antenna over the $n$ time slots and the notation $\bt x_{ijp:ijq}$ to denote the signal transmitted by user $ij$ using antennas $p,p+1,\dots , q$. We denote the rate to user $ij$ as $R_{ij}$, the total sum-rate of the network as $R_{sum}$ and the collection of all messages in the network as $\{ w_{ij}\}$.

Now, consider providing the set of signals $\mathcal{S}_1= \{ \tilde{\bt x}^n_{i2},\tilde{\bt x}^n_{i11:i1(2M-N)}\}$ to BS $\bar{i}$. We use $\tilde{\bt x}^n$ to denote ${\bt x}^n+{\bt z}^n$ where ${\bt z}^n$ is circular symmetric Gaussian noise that is artificially added to the transmitted signal ${\bt x}^n$. Since we seek to establish a converse, we assume that BS $\bar{i}$ can decode all the messages from its users. After decoding and subtracting these signals from the received signal, the  resulting signals at the three antenna sets are given in Fig. \ref{2by3modelD} where $g_{\bar{i}*}(\cdot)$ represents a noisy linear combination of its arguments. Given $\mathcal{S}_1$, we can subtract $\bt x^n_{i2}$ from $g_{\bar{i}c}(x_{i1(2M-N+1):i1M},x_{i2})$ and along with $\tilde{\bt x}^n_{i11:i1(2M-N)}$ from $\mathcal{S}_1$, we can decode $w_{i1}$ subject to noise distortion. After decoding $w_{i1}$, and subtracting $\bt x^n_{i1}$ and $\bt x^n_{i2}$ from the received signal, $w_{i3}$ can also be decoded subject to noise distortion. Since BS $\bar{i}$ can recover all the messages in the network given $\bt y^n_{\bar{i}}$ and $\mathcal{S}_1$ subject to noise distortion, we have
\begin{align}
&nR_{sum} \nonumber \\
&\stackrel{a}{\leq}  I\left ( \{ W_{ij}  \} ; \bt Y^n_{\bar{i}}, \mathcal{S}_1 \right )+n\ms o(\log \rho)+o(n)\nonumber \\
&\stackrel{b}{\leq}  Nn\log \rho +h(\tilde{\bt X}^n_{i2},\tilde{\bt X}^n_{i11:i1(2M-N)}|\bt Y^n_{\bar{i}})+n\ms o(\log \rho)+o(n) \nonumber \\
& \stackrel{c}{\leq}  Nn\log \rho +nR_{i2}+h(\tilde{\bt X}^n_{i11:i1(2M-N)})+n\ms o(\log \rho) +o(n) 
\label{L1}
\end{align}
where (a) follows from Fano's inequality, (b) follows from Lemma 3 in \cite{chenweiwang} and (c) follows from the fact that conditioning reduces entropy. 

Next, consider providing the set of signals $\mathcal{S}_2= \{\tilde{\bt x}^n_{i3},\tilde{\bt x}^n_{i1(2M-N+1):i1M} \}$ to BS $\bar{i}$. After subtracting $\tilde{\bt x}^n_{i3}$ from the received signal, the BS can recover $w_{i2}$ from observations at antenna sets $\bar{i}a$ and $\bar{i}c$ subject to noise distortion. Subsequently, BS $\bar{i}$ can also recover $w_{i1}$ subject to noise distortion. Since BS $\bar{i}$ can recover all messages when provided with the genie signal $\mathcal{S}_2$, using similar steps as before, we obtain 

\begin{align}
& nR_{sum} \nonumber \\
& \stackrel{}{\leq}  I\left ( \{ W_{ij}  \} ; \bt Y^n_{\bar{i}}, \mathcal{S}_2 \right )+n\ms o(\log \rho)+o(n)\nonumber \\
& \stackrel{}{\leq}  Nn\log \rho +h(\tilde{\bt X}^n_{i3},\tilde{\bt X}^n_{i1(2M-N+1):i1M}|\bt Y^n_{\bar{i}})\nonumber \\
& \phantom{\leq}+n\ms o(\log \rho)+o(n) \nonumber \\
&  \stackrel{}{\leq}  Nn\log \rho +nR_{i3}+h(\tilde{\bt X}^n_{i1(2M-N+1):i1M}|\hat{\bt X}^n_{i11:i1(2M-N)})\nonumber \\
& \phantom{\leq}+n\ms o(\log \rho) +o(n) \nonumber \\
&\stackrel{}{\leq}  Nn\log \rho +nR_{i3}+nR_{i1}-h(\hat{\bt X}^n_{i11:i1(2M-N)}) \nonumber \\
& \phantom{\leq}+n\ms o(\log \rho) +o(n) \nonumber \\, 
\label{L2}
\end{align}
where $\hat{\bt X}_{i1}^n$ denotes  $\bt X_{i1}^n$ corrupted by channel noise.

Adding (\ref{L1}) and (\ref{L2}) we get,
\begin{align}
2nR_{sum} \stackrel{}{\leq} & 2nN\log \rho +\hspace{-3mm}\sum_{j=1,2,3}nR_{ij}+n\ms o(\log \rho) +o(n) 
\label{L3}
\end{align}
Using a similar inequality for BS $i$, we can write
\begin{align}
3nR_{sum} \stackrel{}{\leq} & 4nN\log \rho+n\ms o(\log \rho) +o(n) 
\label{L4}
\end{align}
Letting $n \rightarrow \infty$ and $\rho \rightarrow \infty$, we see that DoF/user $\leq \tfrac{2N}{9}$.

\paragraph{DoF Outer Bound in the Downlink}

Using same notation as before, consider providing user $i1$ with the genie signal $\mathcal{S}_1=(w_{i2},w_{i3},\bt x_{\tilde{i}a}^n)$. Since we are interested in establishing an outer bound, we assume all the users in the network can decode their own messages. Since user $i1$ can decode $w_{i1}$, using $\mathcal{S}_1$, user $i1$ can reconstruct $\bt x_{ia}^n$, $\bt x_{ib}^n$ and $\bt x_{ic}^n$, and subtract them from the received signal $\bt y_{i1}^n$. Using the signal obtained after subtracting $\bt x_{ia}^n$, $\bt x_{ib}^n$ and $\bt x_{ic}^n$ from $\bt y_{i1}^n$ and using $\bt x_{\tilde{i}a})$ from $\mathcal{S}_1$, user $i1$ can now decode messages $w_{\bar{i}1}$, $w_{\bar{i}1}$ and $w_{\bar{i}1}$ subject to noise distortion. Since user $i1$ can decode all messages in the network given $\bt y_{i1}^n$ and $\mathcal{S}_1$, we have

\begin{align}
nR_{sum} & \leq I\left (\{W_{ij}\}; \bt Y_{i1}^n,\mathcal{S}_1 \right )+n\ms o(\log \rho)+o(n) \nonumber \\
& \leq nM\log\rho+nR_{i2}+nR_{i3}+h(\tilde{\bt X}_{\bar{i}a}^n|\bt Y_{i1}^n,W_{i2},W_{i3})\nonumber \\
& \phantom{\leq}+n\ms o(\log \rho)+o(n) \nonumber \\
& \leq nM\log\rho+ nR_{i2}+nR_{i3}+h(\tilde{\bt X}_{\bar{i}a}^n|\hat{\bt X}_{\bar{i}b}^n,\hat{\bt X}_{\bar{i}c}^n)\nonumber \\
& \phantom{\leq}+n\ms o(\log \rho)+o(n).
\label{D1}
\end{align}

Next, consider providing user $i3$ with the genie signal $\mathcal{S}_3=(w_{i1},w_{i2},\bt x_{\tilde{i}c}^n)$. Following the exact same steps as before, we get
\begin{align}
nR_{sum} & \leq nM\log\rho+nR_{i2}+nR_{i3}+h(\tilde{\bt X}_{\bar{i}a}^n|\hat{\bt X}_{\bar{i}b}^n,\hat{\bt X}_{\bar{i}c}^n)\nonumber \\
& \phantom{leq} +n\ms o(\log \rho)+o(n).
\label{D3}
\end{align}

Now consider providing user $i2$ with the genie signal $\mathcal{S}_2=(w_{i1},w_{i3}, \tilde{\bt x}_{\bar{i}b}^n,\tilde{\bt x}_{\bar{i}(M+1):\bar{i}(2N-2M)}^n)$. Note that $\bt x_{\bar{i}(M+1):\bar{i}(2N-2M)}^n$ forms a part of the signal $\bt x_{\bar{i}c}^n$. After subtracting the transmitted signals from BS $i$, user $i2$ has $2N-2M$ noisy linear combinations of the signals  $\bt x_{ia}^n$ and $\bt x_{ic}^n$, which along with $\tilde{\bt x}_{\bar{i}b}^n$ from $\mathcal{S}_2$ can be used to decode all the messages from BS $\bar{i}$ subject to noise distortion. As before, we can write

\begin{align}
& nR_{sum} \nonumber \\
& \leq I\left (\{W_{ij}\}; \bt Y_{i1}^n,\mathcal{S}_2 \right )+n\ms o(\log \rho)+o(n) \nonumber \\
& \leq nM\log\rho+nR_{i1}+nR_{i3} \nonumber \\
& \phantom{\leq}+h(\tilde{\bt X}_{\bar{i}b}^n,\tilde{\bt X}_{\bar{i}(M+1):\bar{i}(2N-2M)}^n|\bt Y_{i1}^n,W_{i1},W_{i3})+n\ms o(\log \rho) \nonumber \\
& \phantom{\leq}+o(n) \nonumber \\
& \leq nM\log\rho+ nR_{i1}+nR_{i3}+h(\tilde{\bt X}_{\bar{i}(M+1):\bar{i}(2N-2M)}^n) \nonumber \\
& \phantom{\leq}+h(\tilde{\bt X}_{\bar{i}b}^n|\hat{\bt X}_{\bar{i}a}^n,\hat{\bt X}_{\bar{i}c}^n)+n\ms o(\log \rho)+o(n) \nonumber \\
& \leq nM\log\rho+ nR_{i1}+nR_{i3}+n(2N-3M)\log\rho \nonumber \\
& \phantom{\leq}+h(\tilde{\bt X}_{\bar{i}b}^n|\hat{\bt X}_{\bar{i}a}^n,\hat{\bt X}_{\bar{i}c}^n)+n\ms o(\log \rho)+o(n) \nonumber \\
& \leq n(2N-2M)\log\rho+ nR_{i1}+nR_{i3}+h(\tilde{\bt X}_{\bar{i}b}^n|\hat{\bt X}_{\bar{i}a}^n,\hat{\bt X}_{\bar{i}c}^n) \nonumber \\
& \phantom{\leq}+n\ms o(\log \rho)+o(n) 
\label{D2}
\end{align}

Adding (\ref{D1}), (\ref{D3}) and (\ref{D2}), we get

\begin{align}
& n3R_{sum} \nonumber \\
& \leq n2N\log\rho+ \sum_{j=1}^3 n2R_{ij}+h(\tilde{\bt X}_{\bar{i}a}^n|\hat{\bt X}_{\bar{i}b}^n,\hat{\bt X}_{\bar{i}c}^n) \nonumber \\
& \phantom{\leq}+h(\tilde{\bt X}_{\bar{i}b}^n|\hat{\bt X}_{\bar{i}c}^n,\hat{\bt X}_{\bar{i}a}^n)+h(\tilde{\bt X}_{\bar{i}c}^n|\hat{\bt X}_{\bar{i}b}^n,\hat{\bt X}_{\bar{i}a}^n)+n\ms o(\log \rho)+o(n) \nonumber \\
& \leq n2N\log\rho+ \sum_{j=1}^3 n2R_{ij}+h(\tilde{\bt X}_{\bar{i}a}^n)+h(\tilde{\bt X}_{\bar{i}b}^n|\hat{\bt X}_{\bar{i}a}^n) \nonumber \\
&\phantom{\leq}+h(\tilde{\bt X}_{\bar{i}c}^n|\hat{\bt X}_{\bar{i}b}^n,\hat{\bt X}_{\bar{i}a}^n)+n\ms o(\log \rho)+o(n) \nonumber \\
& \leq n2N\log\rho+ \sum_{j=1}^3 n2R_{ij}+\sum_{j=1}^3 nR_{\bar{i}j}+n\ms o(\log \rho)+o(n) \nonumber \\
\label{D3}
\end{align}

Using a similar inequality for users in cell $\bar{i}$, we can write
\begin{align}
& n6R_{sum} \leq n4N\log\rho+n3R_{sum} +n\ms o(\log \rho)+o(n) \nonumber \\
\label{D3}
\end{align}
Letting $n \rightarrow \infty$ and $\rho \rightarrow \infty$, we see that DoF/user $\leq \tfrac{2N}{9}$.

\subsubsection{DoF Outer Bound when $ 2/3 \leq \gamma \leq 3/4$}

In this case, we again group the antennas at BS $\bar{i}$ into three groups exactly as before. The $M$ antennas at each user are also grouped into three sets as shown in Fig. \ref{2by3modelE}. The linear transformation at BS $\bar{i}$ is also same as before, i.e., each group of antennas zero-forces one of three users.

On the user side, $\bt T_{i1}$ for user $i1$ is chosen such that $i1a$ zero-forces $\bar{i}b$ while $i1b$ and $i1c$ both zero-force $\bar{i}c$. Similarly, $\bt T_{i3}$ is chosen so that $i3c$ zero-forces $\bar{i}b$, while $i3b$ and $i3c$ both zero-force $\bar{i}a$ and finally $\bt T_{i2}$ is chosen such that $i2a$ zero-forces $\bar{i}a$, while $i2b$ and $i2c$ both zero-force $\bar{i}c$. The resulting signal structure at BS $\bar{i}$ after removing signals from Cell $\bar{i}$ is given in Fig. \ref{2by3modelE}.

\begin{figure*}[t]
      \begin{center}
	\psfrag{x0}{$\bar{i}a$}
	\psfrag{x1}{$\bar{i}b$}
	\psfrag{x2}{$\bar{i}c$}
	\psfrag{s0}[Bl][Bl][0.8]{$N-M$}
	\psfrag{s1}[Bl][Bl][0.8]{$2M-N$}
	\psfrag{s2}[Bl][Bl][0.8]{$N-M$}
	\psfrag{x4}[Bl][Bl][1]{$i1a$}
	\psfrag{x5}[Bl][Bl][1]{$i1b$}
	\psfrag{x6}[Bl][Bl][1]{$i1c$}
	\psfrag{x7}[Bl][Bl][1]{$i2a$}
	\psfrag{x8}[Bl][Bl][1]{$i2b$}
	\psfrag{x9}[Bl][Bl][1]{$i2c$}
	\psfrag{x10}[Bl][Bl][1]{$i3a$}
	\psfrag{x11}[Bl][Bl][1]{$i3b$}
	\psfrag{x12}[Bl][Bl][1]{$i3c$}
	\psfrag{s4}[Bl][Bl][0.8]{$N-M$}
	\psfrag{s5}[Bl][Bl][0.76]{$3M-2N$}
	\psfrag{s6}[Bl][Bl][0.8]{$N-M$}
\psfrag{a4}[Bl][Bl][0.8]{$g_{i1a}(\bt x_{\bar{i}c})$}
	\psfrag{a5}[Bl][Bl][0.76]{$g_{i1b}(\bt x_{\bar{i}b})$}
	\psfrag{a6}[Bl][Bl][0.8]{$g_{i1c}(\bt x_{\bar{i}b})$}
	
\psfrag{s7}[Bl][Bl][0.8]{$N-M$}
	\psfrag{s8}[Bl][Bl][0.76]{$3M-2N$}
	\psfrag{s9}[Bl][Bl][0.8]{$N-M$}
\psfrag{a7}[Bl][Bl][0.8]{$g_{i2a}(\bt x_{\bar{i}c})$}
	\psfrag{a8}[Bl][Bl][0.76]{$g_{i2b}(\bt x_{\bar{i}a})$}
	\psfrag{a9}[Bl][Bl][0.8]{$g_{i2c}(\bt x_{\bar{i}a})$}

	\psfrag{s10}[Bl][Bl][0.8]{$N-M$}
	\psfrag{s11}[Bl][Bl][0.76]{$3M-2N$}
	\psfrag{s12}[Bl][Bl][0.8]{$N-M$}
\psfrag{a10}[Bl][Bl][0.8]{$g_{i3a}(\bt x_{\bar{i}b})$}
	\psfrag{a11}[Bl][Bl][0.76]{$g_{i3b}(\bt x_{\bar{i}b})$}
	\psfrag{a12}[Bl][Bl][0.8]{$g_{i3c}(\bt x_{\bar{i}a})$}

	\psfrag{t0}[Bl][Bl][0.8]{$g_{\bar{i}a}(\bt x_{i2b},\bt  x_{i2c},\bt  x_{i3c})$}
	\psfrag{t1}[Bl][Bl][0.8]{$g_{\bar{i}b}(\bt  x_{i1b},\bt  x_{i1c},\bt  x_{i3a},\bt  x_{i3b})$}
	\psfrag{t2}[Bl][Bl][0.8]{$g_{\bar{i}c}(\bt  x_{i1a},\bt  x_{i2a})$}
	\psfrag{u1}[Bc][Bc][0.8]{Users in Cell $i$}
	\psfrag{u2}[][][0.8]{Base-station $\bar{i}$}
	\includegraphics[scale=0.48]{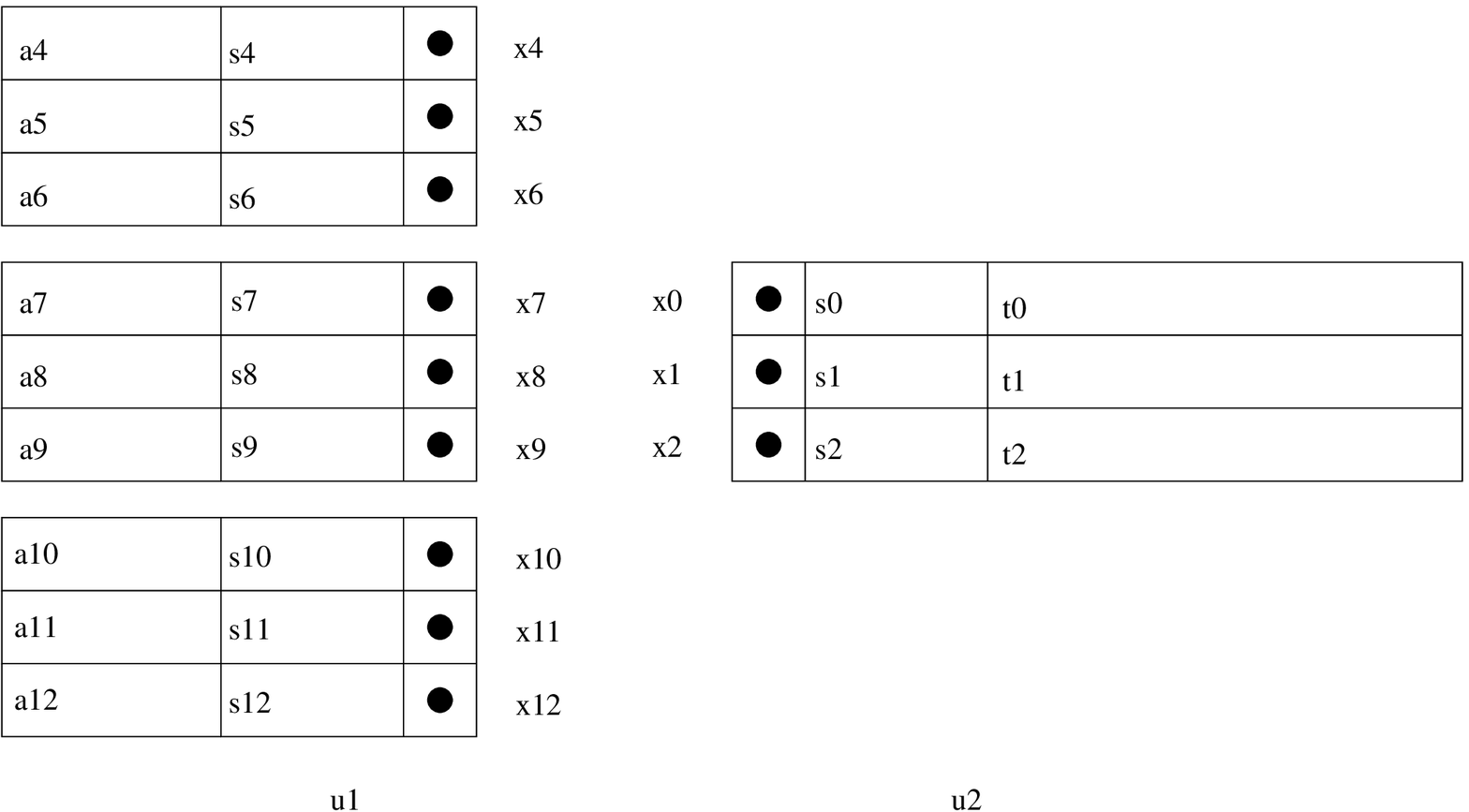}
	\caption{The signal structure obtained after linear transformation when $\gamma \geq 2/3$. The figure does not include signals from the same cell.}
	\label{2by3modelE}
      \end{center}
    \end{figure*}

\paragraph{DoF Outer Bound in the Uplink}
Consider providing the set of signals $\mathcal{S}_1= \{ \tilde{\bt x}^n_{i1}, \tilde{\bt x}^n_{i2b},\tilde{\bt x}^n_{i2c} \}$ to BS $\bar{i}$. After decoding the messages from users in Cell $\bar{i}$, we see that using $\mathcal{S}_3$, we can first decode $w_{i2}$ followed by $w_{i3}$, subject to noise distortion. Since BS $\bar{i}$ can recover all the messages in the network given $\bt y^n_{\bar{i}}$ and $\mathcal{S}_1$, subject to noise distortion, we have
\begin{align}
& nR_{sum} \nonumber \\
& \stackrel{}{\leq}  I\left ( \{ W_{ij}  \} ; \bt Y^n_{\bar{i}}, \mathcal{S}_1 \right )+n\ms o(\log \rho)+o(n)\nonumber \\
& \stackrel{}{\leq}  Nn\log \rho +h(\tilde{\bt X}^n_{i1}, \tilde{\bt X}^n_{i2b},\tilde{\bt X}^n_{i2c} |\bt Y^n_{\bar{i}})+n\ms o(\log \rho)+o(n) \nonumber \\
& \stackrel{}{\leq}  Nn\log \rho +nR_{i1}+h(\tilde{\bt X}^n_{i2b},\tilde{\bt X}^n_{i2c}|\hat{\bt X}^n_{i2a})+n\ms o(\log \rho) +o(n) \nonumber \\
& \stackrel{}{\leq}  Nn\log \rho +\hspace{-1mm}nR_{i1}+\hspace{-1mm}nR_{i2}-\hspace{-1mm}h(\hat{\bt X}^n_{i2a})+n\ms o(\log \rho) +o(n),
\label{U1}
\end{align}
where $\hat{\bt X}^n_{i2a}$ denotes $\bt X^n_{i2a}$ corrupted by channel noise. 

Next, we consider the genie signal $\mathcal{S}_2= \{ \tilde{\bt x}^n_{i3}, \tilde{\bt x}^n_{i2a},\tilde{\bt x}^n_{i2b} \}$. It can once again be shown that BS $\bar{i}$ can recover all the messages in the network given $\bt y^n_{\bar{i}}$ and $\mathcal{S}_2$. Going through similar steps as before, it can be shown that 
\begin{align}
& nR_{sum} \stackrel{}{\leq}  (3M-N)n\log \rho+nR_{i3}+h(\tilde{\bt X}^n_{i2a}) \nonumber \\
& \hspace{2cm}+n\ms o(\log \rho) +o(n) .
\label{U2}
\end{align}
Adding (\ref{U1}) and (\ref{U2}), we get
\begin{align}
2nR_{sum} \stackrel{}{\leq} & 3Mn\log \rho +\hspace{-3mm}\sum_{j=1,2,3}nR_{ij}+n\ms o(\log \rho) +o(n) .
\label{U3}
\end{align}
By symmetry we must also have an analogous inequality involving the rates $R_{\bar{i}j}$, and adding these two inequalities, we get
\begin{align}
3nR_{sum} \stackrel{}{\leq} & 6Mn\log \rho+n\ms o(\log \rho) +o(n) 
\label{U3}
\end{align}
Letting $n \rightarrow \infty$ and $\rho \rightarrow \infty$, we see that DoF/user $\leq \tfrac{M}{3}$.

\paragraph{DoF Outer Bound in the Downlink}
Consider providing the genie signal $\mathcal{S}_1=\{w_{i2},\ms w_{i3},\ms \tilde{\bt{x}}_{ia}\}$ to user $i1$. It can be shown that user i1 can decode all the messages in the network using the received signal and the genie signal subject to noise distortion. Hence, using similar steps as before, we can write

\begin{align}
 nR_{sum} 
& \stackrel{}{\leq}  I\left ( \{ W_{ij}  \} ; \bt Y^n_{i1}, \mathcal{S}_1 \right )+n\ms o(\log \rho)+o(n)\nonumber \\
& \stackrel{}{\leq} nM\log \rho +R_{i2}+R_{i3}+h(\tilde{\bt{X}}_{\bar{i}a}|\hat{\bt{X}}_{\bar{i}a},\hat{\bt{X}}_{\bar{i}a})\nonumber \\
&+n\ms o(\log \rho)+o(n)\nonumber \\
\label{B1}
\end{align}

Using identical genie signals $\mathcal{S}_2=\{w_{i1},\ms w_{i3},\ms \tilde{\bt{x}}_{ib}\}$ and $\mathcal{S}_3=\{w_{i1},\ms w_{i2},\ms \tilde{\bt{x}}_{ic}\}$ for users $i2$ and $i3$ respectively, we obtain the following two inequalities:

\begin{align}
 nR_{sum} 
& \stackrel{}{\leq} nM\log \rho +R_{i1}+R_{i3}+h(\tilde{\bt{X}}_{\bar{i}b}|\hat{\bt{X}}_{\bar{i}a},\hat{\bt{X}}_{\bar{i}c})\nonumber \\
&+n\ms o(\log \rho)+o(n), \label{B2}\\
nR_{sum} 
& \stackrel{}{\leq} nM\log \rho +R_{i1}+R_{i2}+h(\tilde{\bt{X}}_{\bar{i}c}|\hat{\bt{X}}_{\bar{i}a},\hat{\bt{X}}_{\bar{i}b})\nonumber \\
&+n\ms o(\log \rho)+o(n).
\label{B3} 
\end{align}

Adding the inequalities in (\ref{B1}), (\ref{B2}) and (\ref{B3}), we get 
\begin{align}
3nR_{sum} \leq &  3nM\log\rho +\sum_{j=1}^{3}2nR_{ij}+\sum_{j=1}^{3}nR_{\bar{i}j} \nonumber\\
&+n\ms o(\log \rho)+o(n).
\end{align}

Using a similar set of genie signals for users in cell $\bar{i}$, we can establish a corresponding inequality on the sum-rate. Adding these two inequalities gives us
\begin{align}
6nR_{sum} \stackrel{}{\leq} & 6Mn\log \rho+3nR_{sum}+n\ms o(\log \rho) +o(n).
\label{U3}
\end{align}
Letting $n \rightarrow \infty$ and $\rho \rightarrow \infty$, we see that DoF/user $\leq \tfrac{M}{3}$.

\section{Achievability of the Optimal sDoF for the Two-Cell Two-Users/Cell Network and the Two-Cell Three-Users/Cell Network}
\label{finerdetails}
\label{finerdetails}

In this section we provide further details on the linear beamforming strategy used to achieve the optimal sDoF for the two-cell two-users or three-users per cell MIMO cellular networks. We consider designing transmit beamformers in the uplink. By duality of linear interference alignment, the same strategy also holds in downlink.
\setcounter{subsubsection}{0}
\subsubsection{Linear Beamforming Strategy for the Two-Cell, Two-Users/Cell Network}
 We divide the discussion in this section into six cases, each corresponding to one of the six distinct piece-wise linear regions in Fig. \ref{fig_22}. Since we assume generic channel coefficients, we do not need to explicitly check to make sure that (a) interference and signal are separable at each BS and (b) signal received from a user at the intended BS occupies sufficient dimensions to ensure all data streams from that user are separable (i.e., $\bt H_{(ij,i)}\bt V_{ij}$ is full rank for all $i$ and $j$). We however need to ensure that the set of beamformers designed for a user are linearly independent. 

    \textit{Case i: $ 0 < \gamma \leq 1/4$}: Each user here requires $M$ DoF. It is easy to observe that since $N\geq4M$, random uplink transmit beamforming in the uplink suffices. The BSs have enough antennas to resolve signal from interference. Note that no spatial extensions are required here.

    \textit{Case ii: $ 1/4 \leq \gamma \leq 1/2$}: The goal here is to achieve $N/4$ DoF/user. If $N/4$ is not an integer, we consider a space-extension factor of four, in which case we have $4M$ antennas at the users and $4N$ antennas at the transmitter. Since we need $N$ DoF/user and the BSs now have $4N$ antennas, we once again see that random uplink transmit beamforming suffices.

    \textit{Case iii: $ 1/2 < \gamma < 2/3$}: Since each user requires $M/2$ DoF/user, we consider a space-extension factor of two so that there are $2M$ antennas at each user and $2N$ antennas at each BS. The two users in the second cell each have access to a $2M$ dimensional subspace at the first BS. These two subspaces overlap in $4M-2N$ dimensions. Note that since $\gamma > 1/2$, $4M> 2N$, such an overlap almost surely exists. The two users in cell 2 pick $4M-2N$ linear transmit beamformers so as to span this space and align their interference. Specifically, the transmit beamformers $\bt v_{21j}$ and $\bt v_{22j}$ for $j=1,\hdots,(4M-2N)$ are chosen such that 
    \begin{align}
    \bt H_{(21,1)}\bt v_{21j}&=\bt H_{(22,1)}\bt v_{22j} \nonumber \\
    \Rightarrow \begin{bmatrix} \bt H_{(21,1)} & -\bt H_{(22,1)} \end{bmatrix}
    \begin{bmatrix} \bt v_{21j} \\ \bt v_{22j} \end{bmatrix}&=0.
    \label{align_eq}
    \end{align}
    The $4M-2N$ sets of solutions to (\ref{align_eq}) can be generated using the expression $ (\bt I-\bt A^H(\bt A\bt A^H)^{-1}\bt A)\bt r$ where $\bt A=\begin{bmatrix} \bt H_{(21,1)} & -\bt H_{(22,1)} \end{bmatrix}$ and $\bt r$ is a random vector. Adopting the same strategy for cell 1 users, we see that at both BSs interference occupies $4M-2N$ dimensions while signal occupies $8M-4N$ dimensions, with $8N-12M$ unused dimensions. Note that since $\gamma\leq 2/3$, $8N-12M\geq 0$. Letting each user pick $2N-3M$ random beamformers, the remaining $8N-12M$ dimensions are equally split between interference and signal at each of the BSs. We have thus designed $M$ transmit beamformers for each user while ensuring that at each BS, interference occupies no more than $(4M-2N)+2(2N-3M)=2N-2M $ dimensions, resulting in $M/2$ sDoF/user.

    \textit{Case iv: $2/3 \leq \gamma \leq 1$}: We need to achieve $N/3$ DoF/user. We consider a space-extension factor of three, so that each user has $3M$ antennas and each BS has $3N$ antennas; and we need to design $N$ transmit beamformers per user. The two users in the second cell each have access to a $3M$ dimensional subspace at the first BS. These two subspaces overlap in $6M-3N$ dimensions. Since $\gamma > 2/3$, we note that $6M-3N > N$, allowing us to pick a set of $N$ transmit beamformers such that interference is aligned at BS 1. Using the same strategy for users in cell 1, interference and signal together span $3N$ dimensions. The transmit beamformers can be computed by solving the same set of equations as given in (\ref{align_eq}).

    \textit{Case v: $1 < \gamma < 3/2$}: In order to achieve $M/3$ DoF/user, we consider a space-extension factor of three and design $M$ beamformers per user. Since we now have more transmit antennas than receive antennas, transmit zero-forcing becomes possible. Each user in cell 2 picks $3M-3N$ linearly independent transmit beamformers so as to zero-force BS 1, i.e., the beamformers are chosen from the null space of the channel $\bt H_{(2i,1)}$ and satisfy
    \begin{align}
    \bt H_{(2i,1)}\bt v_{2ij}=0 \ \forall \ i \in \{1,2\},\ j \in \{1,2, \hdots (3M-3N) \}.
    \end{align}
    We let users in cell 1 use the same strategy. Now, in order to achieve $M$ DoF/user, we still need to design $3N-2M$ transmit beamformers per user. So far, both BSs do not see any interference and have $6M-6N$ dimensions occupied by signals from their own users. The remaining $9N-6M$ dimensions at each BS need to be split in a $2:1$ ratio between signal and interference to achieve $M$ DoF/user. To meet this goal, we choose the remaining $3N-2M$ transmit beamformers for users in cell 2 such that the interference from these users aligns at BS 1. This is accomplished by solving for the transmit beamformers using (\ref{align_eq}) for users in cell 2, and using a similar strategy for users in cell 1, resulting in $(3M-3N)+(3N-2M) = M$ DoF/user over a space-extension factor of three.

    \textit{Case vi: $3/2 \leq \gamma $}: Assuming a space-extension factor of two, each user needs $N$ transmit beamformers. The null space of the channel from a user in cell 2 to BS 1 spans $2M-2N$ dimensions and since $\gamma>3/2$, $2M-2N>N$. Choosing $N$ transmit beamformers from such a null space and using the same strategy for users in cell 1, we see that each BS sees no interference and hence is able to completely recover signals from both of its users. 

\subsubsection{Linear Beamforming Strategy for the Two-cell, Three-Users/Cell Network }
We divide the discussion in this section into ten cases, each corresponding to one of the ten distinct piecewise-linear regions in Fig. \ref{fig_22}. The cases $\gamma<1/6$ and $1/6\leq \gamma \leq 1/3$ 
and $\gamma \geq 4/3$ are identical to cases $(i)$, $(ii)$ and $(vi)$ in the previous section, where either random transmit beamforming or zero-forcing achieve the optimal DoF. We omit the discussion of these three cases here.

\textit{Case iii: $\frac{1}{3} < \gamma < \frac{2}{5}$:} We consider a space extension factor of two and prove that $M$ DoF/user are achievable. Since $4M<2N$, a many-to-one type of alignment between multiple interfering vectors is not possible. However, since $6M>2N$, it is possible to design a set of three beamformers, one for each user in a cell, such that the beamformers occupy only two dimensions at the interfering BS. In particular, to design beamformers for the three users in cell 2, we solve the following system of equations 
\begin{align}
 \left [ \bt H_{(21,1)} \ \bt  H_{(22,1)} \ \bt H_{(23,1) } \right ] 
\begin{bmatrix} \bt v_{21j} \\ \bt v_{22j} \\ \bt v_{23j} \end{bmatrix}= \bt 0.
\end{align}
Note that this is a system of $2N$ equations in $6M$ unknowns, and there can be at most $6M-2N$ linearly independent solutions. These solutions yield $6M-2N$ sets of three beamformers, with each set having a packing ratio of $3:2$. While the  $6M-2N$ solutions to the system of equations are linearly independent, we need to prove that the $6M-2N$ beamformers designed for each user are also linearly independent. In other words, linear independence of the set of solutions $\{[ \hat{\bt v}_{21j}^T \  \hat{\bt v}_{22j}^T \  \hat{\bt v}_{23j}^T ] \}_{j=1}^{6M-2N}$ does not immediately imply the linear independence of the set $\{ \hat{\bt v}_{2ij} \}_{j=1}^{6M-2N}$ for all $i \in \{1,2,3 \}$. We prove through a contradiction that this is indeed true. Suppose that the set $\{[ \hat{\bt v}_{21j}^T \  \hat{\bt v}_{22j}^T \  \hat{\bt v}_{23j}^T ] \}_{j=1}^{6M-2N}$ is linearly independent, but the set   $\{ \hat{\bt v}_{2ij} \}_{j=1}^{6M-2N}$ is not, for some $i$. Without loss of generality, let $i=1$. Then, there exist a set of coefficients $\{\beta_j\}$ such that 

\begin{align}
 \sum_{j=1}^{6M-2N}\beta_j \hat{\bt v}_{21j}=0.
\end{align}

Let $\hat{\bt w}$ denote the vector $\sum_{j=1}^{6M-2N}\beta_j [\hat{\bt v}_{21j}^T \ \hat{\bt v}_{22j}^T \ \hat{\bt v}_{23j}^T]^T$. Then, 

\begin{align}
 \left [ \bt H_{(21,1)} \ \bt H_{(22,1)} \ \bt  H_{(23,1) } \right ] 
\hat{\bt w}=& \bt 0, \\
\Rightarrow \left [ \bt H_{(22,1)} \ \bt H_{(23,1) } \right ] 
\hat{\bt w}(M+1:3M)=& \bt 0.
\label{eq_lindep}
\end{align}
Equation (\ref{eq_lindep}) is a system of $N$ equations and $2M$ unknowns, and since 
$2M<N$, (\ref{eq_lindep}) is satisfied only if  $\hat{\bt w}(M+1:3M)=\bt 0$ $\Rightarrow$ $\hat{\bt w}=0$ $\Rightarrow$ the set $\{[ \hat{\bt v}_{21j}^T \  \hat{\bt v}_{22j}^T \  \hat{\bt v}_{23j}^T ] \}_{j=1}^{6M-2N}$ is linearly dependent, which is a contradiction.

Using the $6M-2N$ sets of beamformers obtained in this manner, we note that at each BS, we have $18M-6N$ dimensions occupied by signal, $12M-4N$ dimensions occupied by interference with $12N-30M$ unoccupied dimensions. We now pick $2N-5M$  random beamformers for each user so as to use all available dimensions at both the BSs. Since the second set of beamformers are chosen randomly, they are linearly independent from the first set of $6M-2N$ beamformers almost surely. We have thus ensured each user achieves $M$ DoF using a space extension factor of two.

\textit{Case iv: $\frac{2}{5} \leq \gamma \leq \frac{1}{2}$:} In order to achieve $N/5$ DoF/user, we consider a space extension factor of five and consider designing $N$ transmit beamformers per user. Once again, $3:2$ is the highest possible packing ratio and there are $15M-5N$ sets of three beamformers (one for each of three user in a cell) having this packing ratio. If we are to use all such beamformers, we can at most cover $5(15M-5N)$ dimensions at each BS. Since $5(15M-5N)\geq 5N$, we have sufficient number of such sets to use all available dimensions at the two BSs. Choosing $N$ such sets of beamformers achieves $N$ DoF/user over five space extensions.

\textit{Case v: $\frac{1}{2} < \gamma < \frac{5}{9}$: } The goal here is to achieve $2M$ DoF/user using a space extension factor of five. To keep the presentation simple, we assume $M$ and $N$ are divisible by five and achieve 2M/5 DoF/user. Since $2M>N$, many-to-one type of interference alignment becomes feasible and in fact, $2:1$ is the highest possible packing ratio. There are three ways to choose a pair of users from a cell, and for each pair there exist $2M-N$ sets of beamformers having a packing ratio of $2:1$. For users in cell 2, these beamformers can be formed by solving equations of the form 

\begin{align}
[\bt H_{(2i,1)} \bt H_{(2k,1)}]
\begin{bmatrix} \bt v_{2ij} \\ \bt v_{2kj} \end{bmatrix}= \bt 0,
\label{eq_21pratio}
\end{align}
where $i,k\in \{1,2,3\},\ i\neq k$. We thus have $2(2M-N)$ beamformers per user. Since we assume channels to be generic and since $2(2M-N)<M$, the set of $2(2M-N)$ beamformers are almost surely linearly independent. When these $6(2M-N)$ beamformers are used by users in each cell, each BS has $4N-6M$ unused dimensions. We fill the unused dimensions using beamformers having the next best packing ratio---$3:2$. In order to ensure the linear independence of this new set of beamformers from the set of beamformers already designed, we multiply each channel matrix $\bt H_{(lm,n)}$ with a matrix $\bt W_{lm}$ on the right,  where $\bt W_{lm}$ is a $M \times (2N-3M)$ matrix whose columns are orthogonal to the $4M-2N$ beamformers that have already been designed for user $lm$. Let the effective channel matrix ${\bt H}_{(lm,n)}\bt W_{lm}$ be denoted by $\tilde{\bt H}_{(lm,n)}$. Note that $\tilde{\bt H}_{(lm,n)}$ is a $N \times 2N-3M$ matrix and since 
$3(2N-3M)>N$, there exist beamformers having packing ratio $3:2$. Similar to Case iv, we design $2N-\frac{18M}{5}$ sets of such beamformers, ensuring that all dimensions at the two BSs are used while achieving 
$(2N-\frac{18M}{5})+2(2M-N)=2M/5$ DoF/user.

\textit{Case vi: $\frac{5}{9} \leq \gamma \leq \frac{2}{3}$: } We need to achieve $2N$ DoF/user over 9 spatial extensions. To keep the presentation simple, we simply assume that $N$ is divisible by nine and present the arguments without any spatial extensions. Since $2M>N$, beamformers having packing ratios $2:1$ exist. We have $3(2M-N)$ sets of such beamformers per cell, and using any $N/3$ (note that $(N/3)<3(2M-N)$) of them ensures that all dimensions at both the BSs are occupied by either interference or signal.

\textit{Case vii: $\frac{2}{3} < \gamma < \frac{3}{4}$:} This case is discussed in detail in Section \ref{SAP} and we only mention the exact equations and transformations necessary to design the required beamformers. For users in cell 2, the $3M-2N$ sets of beamformers having packing ratio $3:1$ are designed by solving the system of equations given by

\begin{align}
\begin{bmatrix}
\bt H_{(21,1)} & \bt H_{(22,1)} & \bt 0 \\
\bt 0 & \bt H_{(22,1)} & \bt H_{(23,1)} 
\end{bmatrix}
\begin{bmatrix}
\bt v_{21j} \\
\bt v_{22j} \\
\bt v_{23j} 
\end{bmatrix}
=\bt 0.
\label{eq_31pratio}
\end{align}
We use an analogous set of equations for users in cell 1 and denote the set of beamformers designed in this manner using the set $\{ \hat{\bt v}_{ikj} \}_{j=1}^{3M-2N}$ for all $i\in \{1,2\}$ and $k\in \{1,2,3 \}$. We then multiply each channel matrix $\bt H_{ik,l}$ on the right by a matrix $\bt W_{ik}$, where $W_{ik}$ is a $M\times (2N-2M)$ matrix whose columns are orthogonal to the set $\{ \hat{\bt v}_{ikj} \}_{j=1}^{3M-2N}$. Letting the effective channel matrix be denoted by $\tilde{\bt H}_{ik,l}$, we see that we now have $2N-2M$ effective antennas at each user and the best possible packing ratio is $2:1$. There exist $3(3N-4M)$ pairs of beamformers having a packing ratio of $2:1$, and solving for any $3N-4M$ pairs using equation (\ref{eq_21pratio}) allows us to achieve the requisite number of DoF/user.

\textit{Case viii: $\frac{3}{4} \leq \gamma \leq 1$}. Our goal is to achieve $N/4$ DoF/user. We assume $N$ to be divisible by four and present the arguments without any explicit reference to spatial extensions. Since $3M>N$, packing ratio of $3:1$ is possible and there exists a total of $3M-2N$ such sets of beamformers. Designing any $N/4$ such sets through (\ref{eq_31pratio}) gives us the requisite number of DoF/user.

\textit{Case ix: $1 < \gamma < 4/3 $} We need to design $M/4$ DoF/user, and we assume that $M$ is a multiple of four. Note that since $M>N$, the users can now zero-force the interfering BS. Each user can design $M-N$ transmit beamformers such that the interfering BS sees no interference. As before, we then multiply the channel matrices $\bt H_{ik,l}$ by a $M\times 2N-M$ matrix $\bt W_{ik}$ that is orthogonal to the $M-N$ transmit beamformers obtained from zero-forcing. We now have $2N-M$ effective antennas at each user and it is easy to see that there exist $4N-3M$ sets of transmit beamformers having packing ratio of $3:1$ for such a system. Designing any $N-\frac{3M}{4}$ sets of such beamformers through (\ref{eq_31pratio}) lets us achieve $M/4$ DoF/user.

\bibliographystyle{IEEEtran}
\bibliography{IEEEabrv,ref_file}

\end{document}